\newcommand{\thmax}{\widetilde{H}_{\max}}
\newcommand{\phmax}{{H^{'}_{\max}}}
\newcommand{\good}{\textsc{good}}
\newcommand{\wt}{\widetilde{\Theta}}
\newcommand{\wtt}[1]{\widetilde{#1}}
\newcommand{\wdt}[1]{\accentset{\approx}{#1}}
\newcommand{\nice}{\textsc{nice}}
\newcommand{\opt}{\textsc{opt}}
\newcommand{\name}{ \texttt{FewQubits} }
\newcommand{\kd}{\texttt{KD\_OneShot}}
\newcommand{\mix}{\textup{mix}}
\newcommand{\alice}{\textup{alice}}
\newcommand{\bob}{\textup{bob}}
\newcommand{\ip}[2]{\left\langle#1, #2 \right\rangle}
\newcommand{\pure}{\textup{pure}}
\newtheorem{assumption}[theorem]{Assumption}
\title{One-Shot Non-Catalytic Distributed Purity Distillation\footnote{A preliminary version of this work appeared at the 2023 59th Annual Allerton Conference on Communication, Control, and Computing \cite{Allerton}}}
\author{ Sayantan Chakraborty\thanks{Centre for Quantum Technologies, National University of Singapore} \and Rahul Jain\textsuperscript{\textdagger}\thanks{Department of Computer Science, 
  National University of Singapore and MajuLab, UMI 3654, Singapore} \and Pranab Sen\textsuperscript{\textdagger}\thanks{School of Technology and Computer Science, Tata Institute of Fundamental Research, Mumbai}}
  \date{}
\begin{document}
\maketitle

\begin{abstract}
    
Pure states are an important resource in many quantum information
processing protocols. However, even making a fixed pure state, say
$\ket{0}$, in the laboratory requires a considerable amount of effort.
Often one ends up with a mixed state $\rho$ whose classical
description is nevertheless known. Hence it is important to develop
protocols that extract a fixed pure state from a known mixed state. In this work, we study the problem of extracting a fixed pure state
$\ket{0}^{A'} \ket{0}^{B'}$ from a known pure state $\rho^{AB}$
distributed between two parties $A$ and $B$. Here, $A'$, $B'$ are
subspaces of $A$, $B$ and the total amount of purity extracted is
$\log |A'| + \log |B'|$. The parties can borrow local pure ancilla,
apply local unitary operations and send a message from $A$ to $B$
through a dephasing channel. If local pure ancilla is borrowed, it
must be subtracted in order to properly account for the purity
extracted. We obtain the most efficient achievable bounds on one shot distributed
purity extraction, in terms of the rate of local ancilla borrowed by the protocol, while distilling pure qubits at the best known rate. Our protocols borrow little to no local
pure ancilla. Our bounds improve upon the existing bounds for this
problem in both one shot as well as asymptotic iid settings. In
particular they subsume all the asymptotic iid results of Devetak and
Krovi-Devetak. In addition, we derive upper bounds for the rate of distillation in the one shot setting, which nearly match our achievable bounds.
\end{abstract}

\section{Introduction}

Pure states are an important and ubiquitous resource in most quantum information processing protocols. Often, while implementing a quantum algorithm, one assumes the availability of pure states in the form of ancilla qubits which can be used as workspace for some computational operation. A specific example of this is the implementation of isometric operators as quantum gates in a circuit. Due to their widespread use, pure states are often assumed to be a freely available resource in most quantum information processing protocols. However, the question remains as to the cost one has to incur to prepare such pure states in the lab. Indeed, that this is a nontrivial operation was realised by Landauer \cite{Landauer}, who showed that to initialise an arbitrary classical bit to some preset value, an operation called erasure, one has to do work. Along a similar vein, the works of Bennett et al. and Szilard \cite{Bennet_etal_thermo, Szilard} prove that one can extract work from a thermal bath if the system is initialised to a pure state.

The above works underscore the importance of characterising the resources that are necessary to produce pure states in the lab. To that end, we consider the problem of \emph{purity distillation} and give an informal introduction below.

\subsection*{An Informal Description of Purity Distillation}

The problem of purity distillation is concerned with characterising the rate at which pure qubit states can be obtained from a given known input state, using certain admissible quantum operations. To that end, we first require a \emph{measure of purity} of a given known state, and a list of allowable operations under which our chosen measure of purity does not increase. For a given state $\rho^A$, a good choice for the measure of purity found in the literature \cite{Oppenheim_puritydil, GOUR, Streltsov_2018} is the following:
\[
\log \abs{A}-\wtt{H}(A),
\]
where $\wtt{H}(\cdot)$ is a placeholder for a well defined notion of entropy of a state which is suitable for our purposes. A list of allowable operations then takes the following form:
\begin{enumerate}
    \item Tracing out a subsystem.
    \item Appending a maximally mixed state in some register $A_{\mix}$, i.e. we allow access to private randomness.
    \item A special class of quantum operations $\brak{\mathcal{N}}$.
\end{enumerate}

The special class of quantum operations depends on the generality in which one wishes to treat the theory of purity distillation, and different authors have used increasing larger sets of operations in their treatment of the topic (see \cite{Streltsov_2018}). What is important is that the measure of purity should be non-increasing under these sets of operations.

Looking ahead, one can add a further allowable operation to the above list, that of borrowing pure ancilla qubits locally in some register $C_{\pure}$. However, since this clearly increases the measure of purity, one must account for this by modifying the formula of the measure of purity by the term $-\log \abs{C_{\pure}}$. We refer to this as operation as `borrowing pure ancilla qubits in a catalytic manner'.

Given the above setting, we can now define the tasks of \emph{local} and \emph{distributed} purity distillation: \\
\vspace{1mm}
\par \noindent {\bf Local Purity Distillation} : Given a quantum state $\rho^A$, a party Alice can use any finite sequence of operations from the list of allowed operations to produce a state $\sigma^{A_p}$, such that:
\[
\norm{\sigma^{A_p}-\ketbra{0}^{A_p}}_1\leq \eps,
\]
for some error parameter $\eps$. The goal is to maximise $\log \abs{A_p}$. \\
\vspace{1mm}
\par \noindent {\bf Distributed Purity Distillation} : Given a quantum state $\rho^{AB}$, where the party Alice has access to the system $A$ and the party Bob has access to the system $B$, and a completely dephasing channel $\mathcal{P}^{X_A\to X_B}$ from Alice to Bob, the parties are can use any finite sequence of local allowable operations, together with one classical message from Alice to Bob via the completely dephasing channel, to produce a joint state $\sigma^{A_pB_p}$, such that:
\[
\norm{\sigma^{A_pB_p}-\ketbra{0}^{A_p}\otimes \ketbra{0}^{B_p}}_1\leq \eps,
\]
for some error parameter $\eps$. The goal is to maximise $\log \abs{A_p}+\log \abs{B_p}$.

A further generalisation of the task of distributed purity distillation (DPD) is when we also charge for the amount of classical communication from Alice to Bob. To that end, we consider (informally) the following task:\\
\vspace{1mm}
\par \noindent {\bf DPD with Bounded Classical Communication} : Given the setting of DPD, the goal is to maximise the quantity $\log \abs{A_p}+\log \abs{B_p}$ with the additional constraint that the number of classical bits that Alice is allowed to send to Bob is at most $C_{\textup{classical}}$.\\
\vspace{1mm}
\par We will now make some remarks regarding the choice of special quantum operations in the set of allowable operations. From an informal perspective, it would seem logical that one should only allow local unitary operations in the set of special quantum operations $\brak{\mathcal{N}}$. This is because any other quantum operation would require additional ancilla qubits to be implemented in the lab using quantum circuits. Nevertheless, one may ask whether the set of special quantum operations may be enlarged from the set of unitary operators, to include maps which do not increase the measure of purity of the input state. Note that if one is allowed to include any such non-unitary map in the set $\brak{\mathcal{N}}$, its use will be \emph{free}, in the sense we will not charge for the number of ancilla qubits required to implement this map as a quantum circuit. Indeed, this topic has been studied in the works \cite{GOUR, Streltsov_2018}, where the authors show that their choice of purity measure does not increase under the action of unital CPTP maps.

However, the choice of $\brak{\mathcal{N}}$ in this paper is more restrictive and informed by a distinction between the tasks of local and distributed purity distillation. To see this distinction, note that one must allow some communication channel from Alice to Bob in the distributed setting. If not, then the best the parties can do is two locally optimal protocols on their systems $A$ and $B$. The quantum operator used as the communication channel must be a member of the set $\brak{\mathcal{N}}$. However, the choice of this channel cannot be an arbitrary operator from $\brak{\mathcal{N}}$. For example, if one allows the identity superoperator on the system $A$, which is a unital CPTP map, to be used as a channel, Alice can then send her entire system $A$ to Bob This trivially reduces the distributed distillation problem to the local distillation problem. Thus, we must \emph{fix} a choice of channel in the distributed case.

A judicious choice of channel is a \emph{classical} communication channel from Alice to Bob, modelled by the completely dephasing map $\mathcal{P}^{X_A\to X_B}$. Informally, we do not allow Alice to send any entangled bits to Bob, but allow classical communication. Note that this map is also a unital CPTP. This choice can be justified from a practical perspective as well,  given that robust quantum channels across large distances have not yet been realised.

Given the above discussion, throughout this paper we will fix our set of special operations $\brak{\mathcal{N}}$ to include only unitary operators and completely dephasing maps. In addition, as pointed out above, we allow partial trace, completely mixed ancilla and borrowing local pure ancilla catalytically. We will show that our choice of purity measure is non-increasing under the set of allowable operations which we choose to work with. In fact the same set of operations were allowed by the earlier works of \cite{Devetak_purity, KroviDevetak}. The reader is referred to Sections \ref{sec:defpurity} and \ref{sec:localprotocols} for the rigorous definitions and lemmas pertaining to the discussion above.

\begin{remark}
    We remark that our choice of purity measure is non-increasing even under the action of unital CPTP maps. This is easily seen from the proofs presented in Section \ref{sec:localprotocols}. However, we do not comment on this further to focus on the main contribution of this paper.
\end{remark}

\subsection*{History and Previous Works}

The problems of local and distributed purity distillation first appeared in the works \cite{Oppenheim_puritydil, Horodecki_puritydilution, Synak-purity}. Specifically, the distributed distillation problem was first introduced in the asymptotic iid setting \cite{Oppenheim_puritydil}, and some preliminary bounds for the case when both $1$-way and $2$-way communication is allowed between Alice and Bob was given in \cite{Horodecki_puritydilution} in the CLOCC (closed local operations and classical communication) setting in the asymptotic iid regime. In this setting the parties are not allowed to borrow any ancilla qubits catalytically, nor do they have access to private randomness. This implies that for an input state $\rho^A$, the set of operations $\brak{\mathcal{N}}$ are allowed to be \emph{only} unitary operators on $A$ along with completely dephasing maps.

 A further generalisation of this setting where the parties have access to private randomness, abbreviated as NLOCC (noisy local operations and classical communication) was also considered in \cite{Horodecki_puritydilution}. In this case one allows local unitaries to act on both the input register as well as the system which holds the completely mixed state. The set of quantum operations $\brak{\mathcal{N}}$ is clearly larger in this case than CLOCC, since one can construct operations on the input register which are convex combinations of unitary operators. However, this model still does not allow the parties to borrow ancilla qubits catalytically. A tight lower bound for the local distillation problem was provided in \cite{HorodeckiLB}, in the asymptotic iid setting.

Aside from the preliminary works mentioned above, the first detailed treatment for the purity distillation problem appeared in the work of Devetak \cite{Devetak_purity}. Devetak was the first to introduce the idea of borrowing pure ancilla in a catalytic manner, formalised as the CLOCC${}^\prime$ paradigm. In this paradigm one is allowed to borrow pure local ancilla qubits, but has to discount them from the final expression for number of pure qubits distilled. This relaxation allowed Devetak to characterise the rate of distributed purity distillation, when unbounded one-way classical communication is allowed. 
In particular, Devetak showed that, given $n$ iid copies of a bipartite state $\rho^{AB}$, where $A$ and $B$ are shared between two parties, and unbounded one-way classical communication, it is possible to distil pure qubits at (roughly) a rate:
\begin{align}\label{eq:ratedevetak}
\log \abs{A}-H(A)+\log \abs{B}-H(B)+\frac{1}{n}\max\limits_{\Lambda_n} I(X^n : B^n)
\end{align}
for a large enough $n$, where $\Lambda_n$ is a rank-$1$ POVM that acts on the system $A^n$ to produce a classical register $X_n$. It was also shown in the same paper that in the case of unbounded classical communication, this bound is tight in the iid limit.

The reader may have guessed that the additive mutual information term appearing in the expression above contributes a surplus of pure distillable qubits, more so than what a na\"ive application of two local protocols on the $A$ and $B$ systems would have allowed. As we shall see shortly, these surplus pure qubits are distilled by using the \emph{classical-quantum} correlations between the two systems $A$ and $B$ (see \cite{Devetak_Winter} for more details). These correlations are extracted during the protocol execution by using the POVM $\Lambda_n$.  Since the mutual information quantity above is maximised by rank-$1$ POVM, Devetak only considers these and indeed his protocol and proof techniques are heavily reliant on this fact.

The problem of DPD with bounded classical communication in the asymptotic iid setting was first considered by Krovi and Devetak in \cite{KroviDevetak}, where the authors not only provided tight upper and lower bounds for this problem but also significantly simplified the original proof given in \cite{Devetak_purity}. In fact, the authors of that paper showed that under the constraint that Alice is allowed to send at most $nC_{\textup{classical}}$ number of bits to Bob, it is possible to recover pure states from $\rho^{AB}$ at a rate similar in expression to the formula in Equation \ref{eq:ratedevetak}, with the important distinction that maximisation is now over the set of all POVMs $\Lambda_n$ such that $I(X_n : B^nR^n)\leq nC_{\textup{classical}}$. Here the mutual information is computed with respect to the post measurement state $\rho^{X_nB^nR^n}\coloneqq \left(I^{R^nB^n}\otimes \Lambda_n\right)\left((\ketbra{\rho}^{ARB})^{\otimes n}\right)$ and $\ket{\rho}^{ABR}$ is some purification of the shared state $\rho^{AB}$. Note that in this case the POVM $\Lambda_n$ will in general no longer be rank-$1$.

All the works mentioned above tackle the problem of purity distillation in the asymptotic iid setting, that is, when one assumes that many independent copies of the resources are available to the parties in the protocol. Recently, Chakraborty, Nema and Buscemi \cite{CNB23} presented one-shot versions of the local and distributed purity distillation protocols in \cite{CNB23}, where the authors assumed that only \emph{one} copy of the underlying state is available to the parties taking part in the protocol. Although the techniques presented in that paper generalise Devetak's \cite{Devetak_purity} original techniques to the one-shot setting, it is not immediately clear how one can adapt them to the case of DPD with bounded communication. In particular it is not clear how one can extend the asymptotic iid results of Krovi Devetak on DPD with bounded communication to the one shot setting.

\subsection*{Our Contribution}
All the protocols for distributed purity distillation mentioned so far work in the paradigm where one is allowed to borrow some ancilla qubits at the beginning of the protocol but must account for them in the final rate. In fact, all the existing protocols which achieve the best known rate for this problem, whether in the asymptotic iid setting (\cite{Devetak_purity} and \cite{KroviDevetak}) or the one shot setting (\cite{CNB23}) crucially require ancilla qubits which they use in this catalytic manner. Furthermore, the rate at which these protocols borrow ancilla is typically quite high, roughly $\frac{1}{n}I(X_n:R^nB^n)$ for the asymptotic iid protocols and $I_{\max}^{\eps}(X:RB)$ for one shot protocols, where the mutual information quantities are always computed with respect to the post measurement state $\rho^{XRB}$ obtained after the action of the POVM. This is clearly undesirable from a practical standpoint, since one would hope that protocols used to distil pure qubits would themselves require only a few initial pure qubits to function.

Note that there are ad-hoc techniques, called bootstrapping, to reduce the rate of pure qubits which the protocol consumes in the asymptotic iid setting. Indeed, given $n$ iid copies of the underlying state, one can divide these states into blocks of size $\sqrt{n}$. One can then use some ancilla to run the Krovi-Devetak protocol on the first block, and recover this ancilla at the end of the protocol. The recovered ancilla can then be used catalytically on subsequent runs of the Krovi-Devetak protocol on the other $\sqrt{n}$ sized blocks. There may be other similar strategies which use the idea of dividing the iid states into smaller blocks to reduce the number of ancilla qubits that the parties have to borrow (see Devetak \cite{Devetak_purity}) However, these strategies are ad-hoc and depend upon assumptions regarding the number of pure qubit states that each party can distil. Further, these techniques completely fail in the one-shot setting where only \emph{one} copy of the underlying state is available, and one cannot do any bootstrapping.

In  this paper we present a uniform approach towards distilling the maximum number of pure qubits in the distributed setting for a given amount of classical communication, while at the same time
reducing the number of initial pure catalytic ancilla qubits borrowed, which works both in the one-shot and the asymptotic iid setting. We call the proposed  protocol as \name (see Theorem \ref{thm:informal} in Section \ref{sec:optWithoutAncilla}). In comparison with existing protocols, \name has several key improvements with regard to the number of ancilla qubits it requires, while maintaining the same rate of distillation as existing protocols. To highlight these improvements, we present a comparison of \name with the currently existing protocols, both in the asymptotic iid and one shot settings. We show that \name offers advantages over existing protocols in both paradigms:

\begin{enumerate}
    \item In the asymptotic iid limit, the rate at which \name requires input ancilla qubits to function is {\bf $0$}, independent of the input mixed state $\rho^{AB}$ or the input POVM $\Lambda_n$.
    \item In the one shot setting, when unbounded $1$-way classical communication is allowed  (i.e. the setting of Devetak's original paper), \name requires at most $O(\log \frac{1}{\eps})$ pure ancilla qubits in the worst case. In comparison, the one shot protocol of \cite{CNB23} requires roughly $I_{\max}^{\eps}(X:RB)$ ancilla qubits to work under similar assumptions. See Corollary \ref{corol:unboundedcommcompare} in Section \ref{sec:compare} for details.

    \item To facilitate a fair comparison in the more general scenario when the rate of classical $1$-way communication is bounded in the one shot setting, we first present an appropriate one shot generalisation of the original Krovi-Devetak protocol \cite{KroviDevetak}, which we call \kd (Section \ref{sec:kd}). We prove that under mild conditions, \name requires provably fewer ancilla qubits to function than \kd. See Corollary \ref{corol: general compare} in Section \ref{sec:compare} for details.
\end{enumerate}

The main technical ingredient in the construction of \name is an embedding technique which we use to simulate the action of the POVM $\Lambda$ on the $A$ space without requiring too many extra ancilla qubits. Note that since the parties are allowed \emph{only} local unitary operations, any POVM must be implemented coherently. To implement the POVM $\Lambda$ one would then require an extra register to store the classical outcomes, i.e., given any POVM $\Lambda$ its coherent counterpart can be expressed as the isometry $\sum\limits_{x}\ket{x}^X\sqrt{\Lambda_x}^A$, where the log dimension of the system $X$ is precisely the number of intial pure ancilla qubits required.

Note that since $\Lambda$ is an arbitrary POVM, one cannot hope to bound the number of possible outcomes that this POVM has. The first step therefore is to replace $\Lambda$ with another POVM $\widetilde{\Lambda}$ which has far fewer outcomes (typically $2^{I_{\max}^{\eps}(X:RB)}$ many) but which nevertheless preserves the correlations between the the classical output register and the system $B$. This step is made possible by the measurement compression theorem of Winter \cite{Winter_meascomp} and has been used by  both Devetak \cite{Devetak_purity} and Krovi and Devetak \cite{KroviDevetak}. The problem is harder in the one shot setting but one can use a recent one shot measurement compression theorem of \cite{ChakrabortyPadakandlaSen_22} to get around it. This theorem is the key to results presented in \cite{CNB23}.

Note however, that even after bounding the number of outcomes of the POVM, one still needs to borrow some $I_{\max}^{\eps}(X:RB)$ initial pure ancilla qubits to store the outcomes. Our main contribution goes towards reducing this rate as much as possible. The main idea is that we design a unitary operator $U_{\wtt{\Lambda}}$ which simulates the action of measuring the $A$ register coherently with $\widetilde{\Lambda}$ \emph{in place}, i.e., in the $A$ register itself, requiring very little, and in many cases zero, additional pure ancilla in order to store the measurement outcomes. The details of this are technical and can be found in Section \ref{sec:optWithoutAncilla}.

Aside from our main result, we also present upper bounds on the rates of purity distillation that \emph{any} local and distributed 
distillation algorithm can hope to achieve in the one shot setting. Prior to our work such bounds were not known in the one shot regime. These upper bounds nearly match the rate of distillation given by \kd and \name.

\subsubsection*{Organisation of the paper}

The paper is organised as follows: in Section \ref{sec:def} we present the definitions of the one-shot quantities that we use throughout the paper. We also state several known properties of these quantities as facts and prove some other relevant properties as lemmas in this section. In Section \ref{sec:defpurity} we formally present the definitions of $\eps$-purity and the tasks of local and distributed purity distillation. In Sections \ref{sec:localprotocols} and \ref{sec:UBdistributed} we present upper bounds pertaining to the tasks of local and distributed purity distillation. We should mention that Section  \ref{sec:localprotocols} also contains details regarding an optimal achievable protocol for local purity distillation. We would also like to highlight Section \ref{sec:unbounded} in which we focus on upper bounds for distributed purity distillation in the case when unbounded classical communication is allowed. In Section \ref{sec:LBdistributed} we derive lower bounds for distributed purity distillation in the case of bounded communication, with the \kd protocol presented in Section \ref{sec:kd}. We present our main result, the existence of \name, in Section \ref{sec:optWithoutAncilla}. Section \ref{sec:compare} contains a comparison between \name and \kd.

\section{Preliminaries: Relevant Quantities}\label{sec:def}

\subsection{Notation and Some Basics}

\begin{definition}[{\bf Quantum State}] A quantum state $\rho$ on some register $A$ is a positive semi-definite matrix with trace $1$.
\end{definition}

\begin{remark}
    The notation $\sigma \geq 0$, for a matrix $\sigma$ is used to denote the fact that $\sigma$ is positive semi-definite. More generally, $\rho \geq \sigma$ implies that the matrix $\rho-\sigma\geq 0$. This partial order of the positive semi-definite matrices is referred to as the Loewner order.
 \end{remark}

 \begin{definition}[{\bf Fidelity and Generalised Fidelity}]
     Given two quantum states $\rho$ and $\sigma$, the fidelity between the two states is defined as:
     \[
     F(\rho, \sigma)\coloneqq \norm{\sqrt{\rho}\sqrt{\sigma}}_1.
     \]
     The fidelity can be extended in a meaningful way to matrices which are sub-states, i.e. matrices $\rho$ and $\sigma$ such that $0 \leq \rho, \sigma, \leq I $, in the following way:
     \[
     \overline{F}(\rho, \sigma)\coloneqq F(\rho, \sigma)+\sqrt{(1-\Tr[\rho])(1-\Tr[\sigma])}.
     \]
     $\overline{F}(\cdot, \cdot )$ is referred to as the generalised fidelity.
 \end{definition}

 \begin{remark}
 The generalised fidelity was defined in \cite{dualitysmoothminmax}.
 \end{remark}

 \begin{definition}{\bf $\cdot$ Operation}
    Given an operator $M^{A\to B}$ and the operator $N^A$, we define the $\cdot$ operation as follows:
    \[
    M\cdot N\coloneqq MNM^{\dagger}.
    \]
\end{definition}

\subsection{Definitions: One-Shot Entropic Quantities}
In this section we introduce the one-shot entropic quantities which we will be  using in the subsequent sections to describe our protocols.

\begin{definition}[\textbf{Smoothed Support Max Entropy}]\label{def:hmaxtilde}
Given a quantum state $\rho^A$, let us denote its eigenvalues by $\lambda_1,\ldots \lambda_{\abs{\textup{supp}(\rho)}}$ (in ascending order) corresponding to the eigenvectors $ v_1, \ldots, v_{\abs{\textup{supp}(\rho)}}$. Let $\lambda_1, \ldots \lambda_k$ denote the smallest eigenvalues such that $\sum_{i} \lambda_i \leq \eps$. We define the {\bf smoothed support max entropy} of $\rho^A$ as 
\[
\thmax^{\eps}(A)_{\rho}\coloneqq \log \left(\abs{\textup{supp}(\rho)}-k\right).
\]
\end{definition}

\begin{definition}[\textbf{Smoothed Norm Max Entropy}]\label{def:hmax}
Given the setup of in Definition \ref{def:hmaxtilde}, we define the {\bf smoothed norm max entropy} of the state $\rho^A$ as
\[
\phmax^{\eps}(A)_{\rho}\coloneqq \log \frac{1}{\lambda_{k+1}}
\]
\end{definition}

\begin{definition}[\textbf{Conditional Smooth Hypothesis Testing Entropy}]
Given a quantum state $\rho^{AB}$ we define the {\bf Smooth Hypothesis Testing Entropy} as
\begin{align*}
    H_H^{\eps}(A|B)_{\rho} &\coloneqq -D_H^{\eps}(\rho^{AB}~||~\mathbb{I}^A\otimes \rho^{B}) 
\end{align*}
where $D_H^{\eps}$, the hypothesis testing relative entropy, is defined as:
\[
2^{-D_H^{\eps}(\rho||\sigma)}\coloneqq \min\limits_{\substack{0\leq \Pi\leq \mathbb{I}\\ \Tr\left[\Pi\rho\right]\geq 1-\eps}} \Tr\left[\Pi\sigma\right].
\]
\end{definition}

For the couple of definitions that follow we will require the notion of an $\eps$-ball around a state $\rho$. The following definition can be found in \cite[Definition 10]{dualitysmoothminmax}:

\begin{definition}
    Given a quantum state $\rho^A$, we define the $\eps$-ball $\mathcal{B}^{\eps}(\rho)$ as:
    \[
    \mathcal{B}^{\eps}(\rho)\coloneqq \brak{\tau~:~0\leq \tau, \Tr\left[\tau\right]\leq 1, P(\tau, \rho)\leq \eps},
    \]
    where $P(\cdot, \cdot)$ is the purified distance on the space of sub-normalised states (see \cite{dualitysmoothminmax} for details).
\end{definition}

\begin{definition}{{\bf (Conditional Smooth Max Entropy)}}
Given a bipartite quantum state $\rho^{AB}$, we define the {\bf smooth max entropy } as
\[
H_{\max}^{\eps}(A|B)_{\rho} \coloneqq \min\limits_{\rho'\in \mathcal{B}^{\eps}(\rho)}\max\limits_{\substack{\sigma^B\geq 0\\ \Tr[\sigma]=1}} 2\log F\left(\rho'^{AB}, \textup{\one}^A\otimes \sigma^B \right)
\]

\end{definition}
\begin{definition}{{\bf (Conditional Smooth Min Entropy)}}
    Given a bipartite quantum state $\rho^{AB}$, the {\bf conditional smooth min entropy} is defined as:
    \[
    H_{\min}^{\eps}(A|B)_{\rho}\coloneqq \max\limits_{\rho'\in \mathcal{B}^{\eps}(\rho)}-\log \min\brak{\Tr\left[\sigma^B\right]~|~\sigma^B\geq 0, \rho'^{AB}\leq \mathbb{I}^A\otimes \sigma^B}.
    \]
\end{definition}
\subsection{Properties: One-Shot Entropic Quantities}

\begin{fact}{{\bf (Data Processing Inequality for the Smooth Min and Max entropies, \cite{dualitysmoothminmax})}}
    Given a state $\rho^{AB}$ and $\eps>0$, a CPTP map $\mathcal{E}^{B\to D}$, we define $\sigma^{AD}\coloneqq (\I^A\otimes \mathcal{E}^B)(\rho^{AB})$. Then it holds that:
    \begin{align*}
        &H_{\min}^{\eps}(A|B)_{\rho} \leq H_{\min}^{\eps}(A|D)_{\sigma} \\
        &H_{\max}^{\eps}(A|B)_{\rho} \leq H_{\max}^{\eps}(A|D)_{\sigma}.
    \end{align*}
\end{fact}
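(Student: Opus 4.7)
The plan is to handle the two inequalities separately: a direct feasibility-transport argument for the min case, and a reduction via duality for the max case. The ingredients I will need — that CPTP maps preserve the Loewner order, monotonicity of the purified distance under CPTP maps, and the duality identity $H_{\min}^{\eps}(A|B)_{\psi} = -H_{\max}^{\eps}(A|C)_{\psi}$ for a pure tripartite $\ket{\psi}^{ABC}$ — can all be quoted from \cite{dualitysmoothminmax}.

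For the min-entropy inequality, I would let $\rho'^{AB} \in \mathcal{B}^{\eps}(\rho)$ and $\sigma^B \geq 0$ be an optimal pair for $H_{\min}^{\eps}(A|B)_{\rho}$, so that $\rho'^{AB} \leq \I^A \otimes \sigma^B$ and $H_{\min}^{\eps}(A|B)_{\rho} = -\log \Tr[\sigma^B]$. Applying the positive map $\I^A \otimes \mathcal{E}^{B\to D}$ preserves the Loewner inequality and gives $(\I^A \otimes \mathcal{E})(\rho'^{AB}) \leq \I^A \otimes \mathcal{E}(\sigma^B)$. By monotonicity of the purified distance, $(\I^A \otimes \mathcal{E})(\rho'^{AB}) \in \mathcal{B}^{\eps}(\sigma^{AD})$, and $\mathcal{E}(\sigma^B) \geq 0$ with $\Tr[\mathcal{E}(\sigma^B)] = \Tr[\sigma^B]$ because $\mathcal{E}$ is trace preserving. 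Hence $\big((\I^A \otimes \mathcal{E})(\rho'^{AB}),\, \mathcal{E}(\sigma^B)\big)$ is a feasible pair in the optimisation defining $H_{\min}^{\eps}(A|D)_{\sigma}$, yielding $H_{\min}^{\eps}(A|D)_{\sigma} \geq -\log \Tr[\mathcal{E}(\sigma^B)] = H_{\min}^{\eps}(A|B)_{\rho}$.

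For the max-entropy inequality I would reduce to the min case via duality. Fix a purification $\ket{\rho}^{ABC}$ of $\rho^{AB}$ and a Stinespring isometry $V^{B\to DE}$ for $\mathcal{E}$; then $\ket{\sigma}^{ADCE}\coloneqq (\I^{AC}\otimes V)\ket{\rho}^{ABC}$ is a purification of $\sigma^{AD}$, and its marginal on $ACE$ is $\sigma^{ACE}=(\I^{AC}\otimes \mathcal{E}')(\ketbra{\rho}^{ABC})$, where $\mathcal{E}'^{B\to E}$ is the complementary channel. Since $\Tr_E$ is CPTP and maps $\sigma^{ACE}$ to $\rho^{AC}$, the min-entropy DPI already proved gives $H_{\min}^{\eps}(A|CE)_{\sigma} \leq H_{\min}^{\eps}(A|C)_{\rho}$. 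Applying the duality identity on each side turns this into $-H_{\max}^{\eps}(A|D)_{\sigma} \leq -H_{\max}^{\eps}(A|B)_{\rho}$, which is the claim after flipping signs.

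The main obstacle is the max case: a straight-line feasibility argument mirroring the min case breaks down because the definition of $H_{\max}^{\eps}$ \emph{minimises} over $\mathcal{B}^{\eps}(\cdot)$, and the image of $\mathcal{B}^{\eps}(\rho)$ under $\I\otimes\mathcal{E}$ is only contained in (not equal to) $\mathcal{B}^{\eps}(\sigma)$; pushing an optimal $\rho'$ forward therefore upper-bounds $H_{\max}^{\eps}(A|D)_{\sigma}$ rather than lower-bounding it, and the two min's point in incompatible directions. Routing through duality bypasses this entirely, which is why I would establish the min case first and then piggyback on it via the Stinespring purification.
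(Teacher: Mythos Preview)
The paper does not supply its own proof of this statement: it is recorded as a Fact with a citation to \cite{dualitysmoothminmax} and nothing further. Your argument is correct and is essentially the standard proof found in that reference --- a direct feasibility-transport for $H_{\min}^{\eps}$, followed by the Stinespring/duality reduction for $H_{\max}^{\eps}$ --- so there is nothing to compare against here beyond noting that your sketch matches the cited source.
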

We will also require the following data processing type inequality, presented in \cite{Tomamichel_thesis}. We present a simplified version of the original result, which is much more pertinent for our purposes:
\begin{fact}{{\bf (\cite{Tomamichel_thesis}}}\label{fact:unitalhmax}
     Given a state $\rho^{AB}$ and $\eps>0$, a unital CPTP map $\mathcal{E}^{A\to C}$, we define $\sigma^{CB}\coloneqq (\mathcal{E}^A\otimes \I^B)(\rho^{AB})$. Then it holds that:
    \begin{align*}
        &H_{\min}^{\eps}(A|B)_{\rho} \leq H_{\min}^{\eps}(C|B)_{\sigma} \\
        &H_{\max}^{\eps}(A|B)_{\rho} \leq H_{\max}^{\eps}(C|B)_{\sigma}.
    \end{align*}
\end{fact}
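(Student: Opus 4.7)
I plan to prove the two inequalities separately: the min-entropy case directly from its definition, and the max-entropy case via duality. For $H_{\min}^{\eps}$, fix a smoothing $\rho'^{AB}\in\mathcal{B}^{\eps}(\rho^{AB})$ and an operator $\tau^{B}\geq 0$ witnessing $H_{\min}^{\eps}(A|B)_{\rho}$, so that $\rho'^{AB}\leq 2^{-H_{\min}^{\eps}(A|B)_{\rho}}\,\mathbb{I}^{A}\otimes \tau^{B}$. Applying $(\mathcal{E}\otimes \I^{B})$ to both sides preserves the Loewner inequality since the map is CP; unitality of $\mathcal{E}$ rewrites the right-hand side as $2^{-H_{\min}^{\eps}(A|B)_{\rho}}\,\mathbb{I}^{C}\otimes \tau^{B}$. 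By monotonicity of the purified distance under CPTP maps, $(\mathcal{E}\otimes \I)(\rho'^{AB})\in \mathcal{B}^{\eps}(\sigma^{CB})$, so the pair $\bigl((\mathcal{E}\otimes \I)(\rho'^{AB}),\tau^{B}\bigr)$ is a feasible witness for $H_{\min}^{\eps}(C|B)_{\sigma}$, yielding the desired bound on the C-side.

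For $H_{\max}^{\eps}$, I would invoke the smooth min/max duality of \cite{dualitysmoothminmax}: $H_{\max}^{\eps}(A|B)_{\rho}=-H_{\min}^{\eps}(A|R)_{\rho}$ for any purifying register $R$. Taking a Stinespring dilation $V^{A\to CE}$ of $\mathcal{E}$ and setting $\ket{\widetilde{\sigma}}^{CEBR}\coloneqq (V\otimes \I^{BR})\ket{\rho}^{ABR}$, the system $ER$ purifies $\sigma^{CB}$, and hence $H_{\max}^{\eps}(C|B)_{\sigma}=-H_{\min}^{\eps}(C|ER)_{\widetilde{\sigma}}$. Since $\widetilde{\sigma}^{CER}=(V\otimes \I^{R})\rho^{AR}(V^{\dagger}\otimes \I^{R})$ is $\rho^{AR}$ transported through an isometry on the $A$-register, isometric invariance of the smoothed min-entropy yields $H_{\min}^{\eps}(CE|R)_{\widetilde{\sigma}}=H_{\min}^{\eps}(A|R)_{\rho}$. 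The claim then reduces to showing $H_{\min}^{\eps}(C|ER)_{\widetilde{\sigma}}\leq H_{\min}^{\eps}(CE|R)_{\widetilde{\sigma}}$.

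The main obstacle is this last comparison, which is false for generic tripartite states: absorbing a system into the conditioning register can either increase or decrease the smooth min-entropy. The unitality hypothesis $\Tr_{E}[VV^{\dagger}]=\mathbb{I}^{C}$ is exactly what fixes the sign here: it forces any Loewner witness $\tau^{R}$ certifying $H_{\min}^{\eps}(CE|R)_{\widetilde{\sigma}}$ to lift, without inflating its trace, to a witness of the form $\tau^{ER}$ for $H_{\min}^{\eps}(C|ER)_{\widetilde{\sigma}}$. If this algebraic manipulation turns out to be awkward, the fallback is to prove the unsmoothed inequality directly from fidelity monotonicity plus unitality (so that $(\mathcal{E}\otimes \I)(\mathbb{I}^{A}\otimes \tau^{B})=\mathbb{I}^{C}\otimes \tau^{B}$), and then promote it to the smoothed setting via a Petz-recovery argument that pulls each optimal $C$-side smoothing back into $\mathcal{B}^{\eps}(\rho)$ without loss of fidelity.
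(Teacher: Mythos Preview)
The paper does not prove this statement; it is presented as a cited Fact from \cite{Tomamichel_thesis} with no argument supplied. So there is no ``paper proof'' to compare against, and your proposal should be judged on its own merits.

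Your $H_{\min}^{\eps}$ argument is correct and is exactly the standard one: push the optimal smoothing and Loewner witness through $\mathcal{E}\otimes\I$, use complete positivity for the operator inequality, unitality to turn $\mathcal{E}(\mathbb{I}^{A})$ into $\mathbb{I}^{C}$, and monotonicity of the purified distance to keep the image inside $\mathcal{B}^{\eps}(\sigma)$.

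The $H_{\max}^{\eps}$ argument has a real gap, and in fact your proposed fix points the wrong way. After invoking duality and isometric invariance you correctly reduce the claim to
\[
H_{\min}^{\eps}(C|ER)_{\widetilde{\sigma}}\ \leq\ H_{\min}^{\eps}(CE|R)_{\widetilde{\sigma}}.
\]
But your justification --- ``any Loewner witness $\tau^{R}$ certifying $H_{\min}^{\eps}(CE|R)$ lifts without inflating its trace to a witness $\tau^{ER}$ for $H_{\min}^{\eps}(C|ER)$'' --- would, if it worked, give exactly the \emph{reverse} inequality: producing a $(C|ER)$ witness from a $(CE|R)$ witness shows $H_{\min}^{\eps}(C|ER)\geq H_{\min}^{\eps}(CE|R)$. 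What you actually need is to take an optimal smoothing $\widetilde{\sigma}'$ and witness $\omega^{ER}$ for the $(C|ER)$ side and manufacture a smoothing and a witness $\tau^{R}$ on the $(CE|R)$ side with no trace inflation. That direction is where unitality enters, and it is not a one-line operator manipulation; in Tomamichel's treatment it is handled by an extension/Uhlmann argument that pulls the optimal $CB$-smoothing back to a smoothing of the Stinespring-extended state supported on the image of $V$, after which the compression $V^{\dagger}(\cdot)V$ can be applied without loss. Your Petz-recovery ``fallback'' does not address this either: the difficulty is not recovering $\rho$ from $\sigma$ up to fidelity, but producing for \emph{each} $\sigma'\in\mathcal{B}^{\eps}(\sigma)$ a $\rho'\in\mathcal{B}^{\eps}(\rho)$ whose image under $\mathcal{E}\otimes\I$ has max-entropy no larger than that of $\sigma'$ --- a statement that Petz recovery does not supply. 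As written, the $H_{\max}^{\eps}$ half is therefore incomplete.
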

We refer below to a subset of the chain rules for the smooth min and max entropies, presented in \cite{Dupuis_chainrules}, that will prove useful for our purposes. We present the chain rules in a simplified form which is most pertinent for us.
\begin{fact}{{\bf (Chain Rules for Smooth Min and Max Entropy, \cite{Dupuis_chainrules})}}\label{fact:dupuischainrules}
    Given a state $\rho^{AB}$, it holds that:
    \begin{align*}
        &H_{\max}^{\eps}(AB)_{\rho}\geq H_{\min}^{O(\eps)}(A|B)_{\rho}+H_{\max}^{O(\eps)}(B)_{\rho}-O(\log \frac{1}{\eps}) \\
        &H_{\max}^{\eps}(AB)_{\rho}\geq H_{\max}^{O(\eps)}(A|B)_{\rho}+H_{\min}^{O(\eps)}(B)_{\rho}-O(\log \frac{1}{\eps}).
    \end{align*}
\end{fact}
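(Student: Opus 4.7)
The plan is to derive both chain rules from the duality of smooth entropies established in \cite{dualitysmoothminmax} together with known chain rules for the smooth conditional min entropy. Fix any purification $\ket{\rho}^{ABR}$ of $\rho^{AB}$. Since $\ket{\rho}^{ABR}$ is pure, duality supplies, for every smoothing parameter $\delta \geq 0$,
\[
H_{\max}^\delta(AB)_\rho = -H_{\min}^\delta(AB|R)_\rho,
\]
\[
H_{\min}^\delta(A|B)_\rho = -H_{\max}^\delta(A|R)_\rho, \quad H_{\max}^\delta(A|B)_\rho = -H_{\min}^\delta(A|R)_\rho,
\]
\[
H_{\max}^\delta(B)_\rho = -H_{\min}^\delta(B|AR)_\rho, \quad H_{\min}^\delta(B)_\rho = -H_{\max}^\delta(B|AR)_\rho.
\]

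Substituting these identities into the two claimed inequalities and multiplying through by $-1$, both reduce to smooth chain rules for the conditional min entropy on the pure state $\ket{\rho}^{ABR}$: the first becomes
\[
H_{\min}^{\eps}(AB|R)_\rho \;\leq\; H_{\max}^{O(\eps)}(A|R)_\rho + H_{\min}^{O(\eps)}(B|AR)_\rho + O(\log 1/\eps),
\]
and the second becomes
\[
H_{\min}^{\eps}(AB|R)_\rho \;\leq\; H_{\min}^{O(\eps)}(A|R)_\rho + H_{\max}^{O(\eps)}(B|AR)_\rho + O(\log 1/\eps).
\]
Both are instances of the smooth min-entropy chain rule $H_{\min}^\eps(XY|Z) \leq H_\alpha^{O(\eps)}(X|Z) + H_\beta^{O(\eps)}(Y|XZ) + O(\log 1/\eps)$ with $(\alpha,\beta)$ equal to $(\max,\min)$ or $(\min,\max)$ respectively, applied with $X=A$, $Y=B$, $Z=R$. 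These min-entropy chain rules form the main technical contributions of \cite{Dupuis_chainrules}, so once the reduction above is in place the only thing to do is to invoke them.

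To prove the underlying min-entropy chain rules directly, I would follow a two-step approach: first establish the analogous non-smooth inequalities using operator-domination witnesses — the $H_{\min}$ term has the clean witness $\rho^{XZ} \leq 2^{-H_{\min}(X|Z)}\,\I^X \otimes \sigma^Z$ that composes tensorially, while the $H_{\max}$ term is handled either through its fidelity-based definition or by passing to a purifying register and applying duality — and then promote to the smoothed versions. The main obstacle, and the source of the $O(\log 1/\eps)$ additive penalty, is precisely this promotion step: the left-hand side is smoothed at scale $\eps$, but to control it one has to combine two separately optimal witnesses for entropies smoothed at scale $O(\eps)$ on different marginals of $\rho^{ABR}$, and these separately optimized perturbations have to be synchronised into a single coherent perturbation of $\rho^{ABR}$ that simultaneously controls all the relevant marginals. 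This synchronisation is handled by a gentle-measurement-type argument combined with an operator-Chernoff-style union bound across the marginals, producing both the constant-factor blow-up from $\eps$ to $O(\eps)$ and the additive $O(\log 1/\eps)$ loss that appear in the final statement.
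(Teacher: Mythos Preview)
The paper does not prove this statement; it is stated as a \emph{Fact} and simply attributed to \cite{Dupuis_chainrules} without any argument. So there is no ``paper's own proof'' to compare against.

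Your duality reduction is correct: taking a purification $\ket{\rho}^{ABR}$ and applying the identities $H_{\max}^\delta(AB) = -H_{\min}^\delta(AB|R)$, $H_{\min}^\delta(A|B) = -H_{\max}^\delta(A|R)$, $H_{\max}^\delta(B) = -H_{\min}^\delta(B|AR)$ (and their min/max swaps) does convert both inequalities into the min-entropy chain-rule upper bounds you wrote. After relabelling $A \leftrightarrow B$ these are exactly among the chain rules established in \cite{Dupuis_chainrules}, so your reduction is valid. Note, though, that the reduction is somewhat circular: the max-entropy chain rules in the Fact are themselves proved in \cite{Dupuis_chainrules} directly (and the min- and max-entropy chain rules there are derived from one another precisely via the duality you invoke), so your argument amounts to citing the same reference the paper already cites.

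Your final paragraph sketching how to prove the underlying min-entropy chain rules is in the right spirit but too vague to count as a proof: the real work in \cite{Dupuis_chainrules} is the careful construction of a single smoothing state that simultaneously witnesses the relevant entropies, and this is done via a specific decoupling/min-entropy-splitting argument rather than a generic ``gentle-measurement plus operator-Chernoff'' step. If your goal is an actual self-contained proof rather than a citation, that construction would need to be spelled out.
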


\begin{fact}[\cite{CNB23}]\label{fact:tildemaxHrelation}
For any quantum state $\rho^A$ it holds that
\[
H_{\max}^{\eps}(A)_{\rho}\leq \thmax^{\eps}(A)_{\rho} \leq  \phmax^{\eps}(A)_{\rho} \leq \log\frac{\abs{A}}{\eps}
\]
\end{fact}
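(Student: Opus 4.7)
The plan is to prove the three inequalities separately, working from the middle outward, since the middle and right inequalities are elementary facts about eigenvalue sums, while the left inequality requires constructing an explicit smoothing of $\rho$ and invoking Cauchy--Schwarz.

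For the middle inequality $\thmax^{\eps}(A)_{\rho} \leq \phmax^{\eps}(A)_{\rho}$, let $n = \abs{\textup{supp}(\rho)}$ and let $k$ be as in Definition \ref{def:hmaxtilde}. Since the eigenvalues are in ascending order, $\lambda_{k+1} \leq \lambda_i$ for all $i \geq k+1$. Hence $(n-k)\lambda_{k+1} \leq \sum_{i=k+1}^{n}\lambda_i \leq 1$, which immediately gives $n-k \leq 1/\lambda_{k+1}$ and the inequality follows by taking logs. For the right inequality $\phmax^{\eps}(A)_{\rho} \leq \log(\abs{A}/\eps)$, the defining property of $k$ is that $\sum_{i=1}^{k}\lambda_i \leq \eps$ but $\sum_{i=1}^{k+1}\lambda_i > \eps$. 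Since $\lambda_i \leq \lambda_{k+1}$ for $i \leq k+1$, we get $(k+1)\lambda_{k+1} \geq \sum_{i=1}^{k+1}\lambda_i > \eps$, so $\lambda_{k+1} > \eps/(k+1) \geq \eps/\abs{A}$, which rearranges to the desired bound.

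The main obstacle is the left inequality $H_{\max}^{\eps}(A)_{\rho} \leq \thmax^{\eps}(A)_{\rho}$, since it requires producing a witness in the $\eps$-ball around $\rho$. The natural candidate is the subnormalized truncation
\[
\rho' \coloneqq \sum_{i=k+1}^{n}\lambda_i \ketbra{v_i},
\]
obtained by discarding the $k$ smallest eigenvalues. Since $\rho'$ and $\rho$ commute, the fidelity is $F(\rho',\rho) = \sum_{i=k+1}^{n}\lambda_i \geq 1-\eps$, and the generalized fidelity (which adds a nonnegative term) is also at least $1-\eps$, so the purified distance $P(\rho',\rho) = \sqrt{1-\bar{F}(\rho',\rho)^{2}}$ is controlled by $\eps$, placing $\rho'$ in the appropriate smoothing ball (up to the standard constant factor between the trace-type smoothing implicit in $\thmax$ and the purified-distance smoothing used by $H_{\max}^{\eps}$).

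With this witness in hand, the max entropy at trivial conditioning reduces to $2\log \Tr[\sqrt{\rho'}] = 2\log \sum_{i=k+1}^{n}\sqrt{\lambda_i}$. Applying Cauchy--Schwarz to the vector of nonzero $\sqrt{\lambda_i}$'s against the all-ones vector of length $n-k$ gives
\[
\Bigl(\sum_{i=k+1}^{n}\sqrt{\lambda_i}\Bigr)^{2} \leq (n-k)\sum_{i=k+1}^{n}\lambda_i \leq n-k,
\]
so $H_{\max}(\rho') \leq \log(n-k) = \thmax^{\eps}(A)_{\rho}$, and minimizing over the $\eps$-ball only makes the left side smaller. The one subtlety worth flagging is reconciling the smoothing conventions; this is a matter of absorbing a constant, and I would cite the standard equivalence between trace-distance and purified-distance smoothing balls to close the gap.
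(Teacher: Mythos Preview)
The paper does not prove this statement; it is stated as a Fact and simply cited from \cite{CNB23}, so there is no paper-side argument to compare against. Your proofs of the middle and right inequalities are the natural ones and are correct.

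For the left inequality there is a small but real mismatch you should name more precisely. Your truncation $\rho'$ satisfies $\bar F(\rho',\rho)\geq 1-\eps$, hence $P(\rho',\rho)\leq \sqrt{1-(1-\eps)^2}\leq \sqrt{2\eps}$, not $\leq \eps$. So what you have actually established is $H_{\max}^{\sqrt{2\eps}}(A)_\rho \leq \thmax^{\eps}(A)_\rho$. You flag this, but calling it ``absorbing a constant'' undersells it: it is a square-root reparametrisation of the smoothing parameter, not a multiplicative constant. Whether the inequality holds with identical $\eps$ on both sides depends on the smoothing convention used in the cited reference; under the purified-distance ball the paper defines, the exact statement is not what your witness delivers. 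That said, every invocation of this fact in the present paper tolerates polynomial changes in $\eps$ (the arguments are all stated up to $O(\cdot)$ in the smoothing parameter), so the discrepancy is harmless for the paper's results.
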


\begin{lemma}\label{lem:purehh}
Given a state $\rho^A$ and an arbitrary purification $\ket{\rho}^{RA}$, it holds that
\[
H_H^{\eps}(A)_{\rho}=H_H^{\eps}(R)_{\rho}.
\]
\end{lemma}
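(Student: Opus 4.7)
The plan is to unpack the definitions and reduce the claim to the fact that the two marginals $\rho^A$ and $\rho^R$ of a pure state share the same nonzero spectrum, so that the optimization defining $D_H^{\eps}(\cdot \,\|\, \mathbb{I})$ yields the same value on both sides. First I would observe that for a state with no conditioning register the definition specializes to $H_H^{\eps}(A)_{\rho} = -D_H^{\eps}(\rho^A \,\|\, \mathbb{I}^A)$ (and similarly for $R$), so it suffices to show
\[
D_H^{\eps}(\rho^A \,\|\, \mathbb{I}^A) \;=\; D_H^{\eps}(\rho^R \,\|\, \mathbb{I}^R).
\]

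Next I would apply the Schmidt decomposition to $\ket{\rho}^{RA} = \sum_i \sqrt{\lambda_i}\,\ket{v_i}^R \ket{u_i}^A$, which gives the spectral decompositions $\rho^A = \sum_i \lambda_i \ketbra{u_i}^A$ and $\rho^R = \sum_i \lambda_i \ketbra{v_i}^R$ with identical nonzero eigenvalue multisets $(\lambda_i)_{i=1}^k$ and identical support dimension $k$. The crucial reduction is that in the minimization
\[
2^{-D_H^{\eps}(\rho^A \,\|\, \mathbb{I}^A)} \;=\; \min_{\substack{0 \leq \Pi^A \leq \mathbb{I}^A \\ \Tr[\Pi^A \rho^A] \geq 1-\eps}} \Tr[\Pi^A],
\]
any component of $\Pi^A$ lying outside $\textup{supp}(\rho^A)$ contributes nothing to $\Tr[\Pi^A \rho^A]$ but contributes non-negatively to $\Tr[\Pi^A]$, so without loss of generality the optimal $\Pi^A$ may be taken to be supported on $\textup{supp}(\rho^A)$, with $\mathbb{I}^A$ replaced by the identity on that $k$-dimensional support.

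Finally, I would transport feasible operators between the two problems via the isometry $V : \textup{supp}(\rho^A) \to \textup{supp}(\rho^R)$ defined by $V\ket{u_i} = \ket{v_i}$. Since $V \rho^A V^\dagger = \rho^R$ (on the supports) and $V$ preserves both the trace and the operator-inequality $0 \leq \Pi \leq \mathbb{I}_{\textup{supp}(\cdot)}$, setting $\Pi^R \coloneqq V \Pi^A V^\dagger$ yields a feasible point for the $R$-side problem with the same objective value, and vice versa. This pair of inequalities gives $D_H^{\eps}(\rho^A \,\|\, \mathbb{I}^A) = D_H^{\eps}(\rho^R \,\|\, \mathbb{I}^R)$, and hence the claim. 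The only subtlety worth double-checking is the support-restriction step, since $\mathbb{I}^A$ and $\mathbb{I}^R$ have different traces in general; but as noted this is harmless because the excess-dimensional part of $\Pi$ is always suboptimal.
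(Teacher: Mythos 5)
Your proof is correct, and it reaches the same destination by a slightly different mechanism than the paper. Both arguments rest on the Schmidt decomposition, which exhibits $\rho^A$ and $\rho^R$ as having the same nonzero spectrum and as being intertwined by the isometry $V\ket{u_i}=\ket{v_i}$. The paper exploits this by first arguing (via a pinching-type argument) that the optimizer may be taken diagonal in the eigenbasis of the marginal and supported on $\textup{supp}(\rho)$, and then observing that both optimizations collapse to the \emph{identical} linear program $\min\sum_a\lambda_a$ subject to $\sum_a P_A(a)\lambda_a\geq 1-\eps$, $0\leq\lambda_a\leq 1$. You skip the diagonalization entirely: after restricting to the support you conjugate feasible operators by the Schmidt isometry, which preserves the objective $\Tr[\Pi]$, the constraint value $\Tr[\Pi\rho]$, and the operator inequalities $0\leq\Pi\leq\mathbb{I}$, giving a bijection between the two feasible sets. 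Your route is marginally more economical, and your support-restriction step is the more carefully justified of the two: the paper asserts ``without loss of generality $\Pi$ has non-negative eigenvalues only on a subspace of the support'' without the block-decomposition reasoning you supply (namely that the compression of a feasible $\Pi$ to $\textup{supp}(\rho)$ remains feasible and can only decrease the trace). What the paper's LP formulation buys in exchange is reusability: the same reduction is invoked verbatim in the proof of Lemma \ref{lem:puretensorsame} and echoed in Lemma \ref{lem:additivehh}, whereas your isometry argument is tailored to the pure-state symmetry of this particular statement.
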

\begin{proof}
We will first show that we can assume that the optimising operator in the definition of $H^{\eps}_H(A)_{\rho}$ commutes with $\rho$. Let this operator be $\Pi$. To begin, consider the Schmidt decomposition of $\ket{\rho}^{AR}$:
\[
\ket{\rho^{AR}}= \sum_a\sqrt{P_A(a)}\ket{a}^A\ket{\zeta_a}^R,
\] which implies that
\[
\rho^{A}=\sum_a P_A(a) \ketbra{a}^A.
\]
Then, 
\[\begin{aligned}
\Tr\left[\Pi\rho\right] &= \sum_a P_A(a)\braket{a|\Pi|a} \\
\Tr\left[\Pi\right] &= \sum_a \braket{a|\Pi|a}
\end{aligned}
\]
Without loss of generality we can assume that $\Pi$ has non-negative eigenvalues only on a subspace of the support of $\rho$. Now, consider the operator:
\[
\widetilde{\Pi}\coloneqq \sum_{\ketbra{a} \in \textup{supp}(\rho)} \braket{a|\Pi|a} \ketbra{a}
\]
It is easy to see that $\widetilde{\Pi}$ has all the properties of $\Pi$ that we require. Thus we can assume that the optimising operator commutes with $\rho$. Next, 
we wish to compute the quantity $H_H^{\eps}(R)_{\rho}$. As before, we can assume that the optimising operator commutes with $\rho^R$, i.e., it diagonalises in the basis $\brak{\ket{\zeta_a}^A}$ and has non-negative eigenvalues only on a subspace of the support of $\rho^R$. This implies that, the optimising operator , say $\Sigma$, can be written as:
\[
\Sigma^R= \sum_{\zeta_a\in \textup{supp}(\rho^R)} \lambda_a \ketbra{\zeta_a}^R
\]
Finally, we see that the definition of $H_H^{\eps}(R)_{\rho}$ reduces to solving the following LP:
\begin{align*}
    \min &\sum_a \lambda_a \\
    &\sum_a P_A(a)\lambda_a \geq 1-\eps \\
    & 0\leq \lambda_a\leq 1
\end{align*}
It is not hard to see that this same LP that defines $H_H^{\eps}(A)_{\rho}$, if only we replace $\lambda_a$ with $\mu_a\coloneqq \braket{a|\Pi|a}$. Thus, it holds that
\[
H_H^{\eps}(R)_{\rho}=H_H^{\eps}(A)_{\rho}.
\]
This concludes the proof.
\end{proof}

\begin{lemma}\label{lem:puretensorsame}
    For a state $\rho^A$ and any pure state $\ket{\phi}^B$, it  holds that:
    \[
    H_H^{\eps}(AB)_{\rho\otimes \phi}= H_H^{\eps}(A)_{\rho}.
    \]
\end{lemma}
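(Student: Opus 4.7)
The plan is to unfold the definition of $H_H^{\eps}$ and reduce the identity to the pair of inequalities
$$D_H^{\eps}\!\left(\rho^A\otimes \phi^B \,\middle\|\, \I^{AB}\right) = D_H^{\eps}\!\left(\rho^A \,\middle\|\, \I^A\right),$$
where by convention the unconditional entropy is obtained by taking the $B$-system to be trivial. Equivalently, writing $2^{-D_H^{\eps}(\cdot\|\cdot)}$ as an SDP, I want to show that the minima of $\Tr[\Pi]$ subject to $0\leq \Pi\leq \I^{AB}$ and $\Tr[\Pi(\rho^A\otimes \phi^B)]\geq 1-\eps$, and of $\Tr[\Pi']$ subject to $0\leq \Pi'\leq \I^A$ and $\Tr[\Pi'\rho^A]\geq 1-\eps$, are the same.

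For the direction ``$\leq$'' (i.e. $H_H^{\eps}(AB)_{\rho\otimes \phi}\geq H_H^{\eps}(A)_{\rho}$), I would take a feasible $\Pi'$ on $A$ and lift it by defining $\Pi := \Pi'\otimes \ketbra{\phi}^B$. Since $\ketbra{\phi}\leq \I^B$ we have $0\leq \Pi \leq \I^{AB}$, the feasibility constraint becomes $\Tr[\Pi(\rho\otimes\phi)]=\Tr[\Pi'\rho]\geq 1-\eps$, and its objective value is $\Tr[\Pi]=\Tr[\Pi']$.

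For the direction ``$\geq$'' I would go the other way: given a feasible $\Pi$ on $AB$, set
$$\Pi' := (\I^A\otimes \bra{\phi}^B)\,\Pi\,(\I^A\otimes \ket{\phi}^B).$$
Then $0\leq \Pi'\leq \I^A$ follows from $0\leq \Pi\leq \I^{AB}$, and the feasibility constraint $\Tr[\Pi'\rho]=\Tr[\Pi(\rho\otimes \phi)]\geq 1-\eps$ is immediate. The only step requiring a short argument is the comparison of objective values: $\Tr[\Pi']=\Tr[\Pi(\I^A\otimes \ketbra{\phi}^B)]$, and since $\Pi\geq 0$ and $\I^B-\ketbra{\phi}^B\geq 0$, the operator $\Pi^{1/2}(\I^A\otimes (\I^B-\ketbra{\phi}^B))\Pi^{1/2}$ is positive semidefinite, so its trace is nonnegative, which rearranges to $\Tr[\Pi']\leq \Tr[\Pi]$.

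There is no real obstacle here; the only subtlety is the last inequality $\Tr[\Pi(\I\otimes \ketbra{\phi})]\leq \Tr[\Pi]$, which is the operator-trace analogue of ``conditioning on a pure outcome can only shrink an overall norm.'' Putting the two directions together yields equality of the two SDP values, hence $D_H^{\eps}(\rho\otimes\phi\,\|\,\I^{AB})=D_H^{\eps}(\rho\,\|\,\I^A)$, and therefore $H_H^{\eps}(AB)_{\rho\otimes\phi}=H_H^{\eps}(A)_{\rho}$.
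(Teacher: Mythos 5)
Your proof is correct. Both directions check out: the lift $\Pi'\mapsto \Pi'\otimes\ketbra{\phi}$ is feasible for the $AB$ problem with the same objective, and the compression $\Pi\mapsto (\I^A\otimes\bra{\phi})\Pi(\I^A\otimes\ket{\phi})$ is feasible for the $A$ problem with objective $\Tr[\Pi(\I^A\otimes\ketbra{\phi})]\leq\Tr[\Pi]$, exactly as you argue. One cosmetic slip: in your first direction the parenthetical should read $H_H^{\eps}(AB)_{\rho\otimes\phi}\leq H_H^{\eps}(A)_{\rho}$, since exhibiting a feasible point for the $AB$ minimisation with value $\Tr[\Pi']$ shows the $AB$ minimum (and hence the entropy, which is its logarithm) is at most the $A$ minimum; the labels on your two directions are swapped, but since you prove both inequalities the conclusion of equality is unaffected.

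Your route differs from the paper's. The paper first diagonalises: it invokes the argument of Lemma \ref{lem:purehh} to restrict the optimiser, without loss of generality, to operators commuting with $\rho\otimes\ketbra{\phi}$ and supported on its support, which collapses both quantities to the identical classical LP $\min\sum_a\lambda(a)$ subject to $\sum_a P_A(a)\lambda(a)\geq 1-\eps$. You instead work directly with the semidefinite program and exhibit explicit maps between feasible points in both directions, with the only nontrivial step being the operator inequality $\Tr[\Pi(\I\otimes\ketbra{\phi})]\leq\Tr[\Pi]$. Your argument is self-contained and does not rely on the commutation/diagonalisation reduction (which itself requires a short justification); the paper's is shorter given that that reduction has already been established, and it makes the "same LP" structure transparent, which the paper reuses elsewhere (e.g.\ in Lemma \ref{lem:additivehh}). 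Either proof is acceptable.
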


\begin{proof}
Note that the following holds since $\phi^B$ is pure:
\[
\rho\otimes \ketbra{\phi}=\sum_aP_A(a)\ketbra{a}^A\otimes \ketbra{\phi}^B.
\]
Therefore, from the proof of Lemma \ref{lem:purehh} we can see that $H_H^{\eps}(AB)_{\rho\otimes \phi}$ is given by the log of the solution of the LP:
\begin{align*}
\min &\sum\limits_{a} \lambda(a) \\
&\sum\limits_{a} P_A(a)\lambda(a)\geq 1-\eps.
\end{align*}
However, this is the same LP that gives the expression for $H_H^{\eps}(A)_{\rho}$. This concludes the proof.
\end{proof}

\begin{lemma}[{\bf Equivalence of the Smoothed Norm Max and Smooth Hypothesis Testing Entropies}]\label{lem:equivHHtildeHmax}
For any quantum state $\rho^A$ it holds that
\[
\thmax^{\eps}(A)_{\rho}-1 \leq H_H^{\eps}(A)_{\rho}\leq \thmax^{\eps}(A)_{\rho}
\]
\end{lemma}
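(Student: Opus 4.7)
The plan is to reduce $2^{H_H^{\eps}(A)_\rho}$ to a one-dimensional linear program in the eigenvalues of $\rho$ and then read off both inequalities by comparing its optimum to the combinatorial quantity $2^{\thmax^{\eps}(A)_\rho} = n-k$, where $n := \abs{\textup{supp}(\rho)}$ and $k$ is as in Definition~\ref{def:hmaxtilde}. Arguing as in the first half of the proof of Lemma~\ref{lem:purehh}, any feasible $\Pi$ in the definition of $H_H^{\eps}$ can be replaced by one that commutes with $\rho$ and is supported on $\textup{supp}(\rho)$, without decreasing $\Tr[\Pi\rho]$ or increasing $\Tr[\Pi]$. Writing $\rho = \sum_{i=1}^n \lambda_i \ketbra{v_i}$ in ascending order and $\Pi = \sum_i \mu_i \ketbra{v_i}$, the defining optimization becomes the LP
\begin{align*}
2^{H_H^{\eps}(A)_\rho} \;=\; \min\brak{\,\sum_i \mu_i \;:\; \mu_i\in[0,1],\; \sum_i \lambda_i \mu_i \geq 1-\eps\,}.
\end{align*}

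For the upper bound, I would exhibit the explicit feasible point $\mu_i = 1$ for $i>k$ and $\mu_i=0$ otherwise. By the defining property of $k$, $\sum_{i\leq k}\lambda_i \leq \eps$, so $\sum_i \lambda_i \mu_i = 1-\sum_{i\leq k}\lambda_i \geq 1-\eps$; since $\sum_i \mu_i = n-k$, this immediately yields $H_H^{\eps}(A)_\rho \leq \log(n-k) = \thmax^{\eps}(A)_\rho$.

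For the lower bound, I would substitute $\nu_i := 1-\mu_i$ and rewrite the LP as a fractional knapsack: maximize $\sum_i \nu_i$ subject to $\sum_i \lambda_i \nu_i \leq \eps$ and $\nu_i \in [0,1]$. Since the $\lambda_i$ are in ascending order, ``cheaper'' units of $\nu_i$ come first, so a standard exchange argument pins down the optimum as $\nu_1 = \cdots = \nu_k = 1$, $\nu_{k+1} = \delta := (\eps - \sum_{i\leq k}\lambda_i)/\lambda_{k+1}$, and $\nu_i = 0$ for $i > k+1$; the bound $\delta < 1$ is precisely the maximality of $k$ in Definition~\ref{def:hmaxtilde}. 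Consequently $\sum_i \mu_i = n-k-\delta > n-k-1$, so $H_H^{\eps}(A)_\rho > \log(n-k-1) \geq \thmax^{\eps}(A)_\rho - 1$, the last step being valid as soon as $n-k \geq 2$. The main technical step is the verification of the greedy optimum, which is routine fractional-knapsack bookkeeping; the only subtlety is the degenerate case $n-k = 1$, which can be dispatched by directly solving the single-variable LP (obtaining $\sum_i\mu_i = 1 - \eps/\lambda_n$ with $\lambda_n \geq 1-\eps$) and invoking the smallness of $\eps$ assumed in the operational regime.
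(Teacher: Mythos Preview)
Your proposal is correct and follows essentially the same approach as the paper: both reduce $H_H^{\eps}$ to the diagonal LP by the commuting-with-$\rho$ argument, obtain the upper bound from the obvious feasible projector onto the top $n-k$ eigenvectors, and obtain the lower bound by solving that LP. The only difference is that the paper outsources the LP lower bound to a citation (\cite{Pranab_notes}), whereas you actually carry out the fractional-knapsack analysis and handle the $n-k=1$ boundary case explicitly; your argument is thus more self-contained but not methodologically different.
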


\begin{proof}
To prove this lemma, we first observe that without loss of generality we can assume that the optimising operator for $H_H^{\eps}(A)_{\rho}$ diagonalises in the same basis as $\rho^A$. To see this, we argue via contradiction. Suppose the assumption isn't true. Let $\rho^A=\sum\limits_{a}p(a) \ketbra{v_a}^A$. Then, by definition:
    \begin{align*}
        &\Tr\left[\Pi^A\rho\right]\\
        = & \sum\limits_{a}p(a) \braket{v_a|\Pi|v_a}\\\
        \geq & 1-\eps.
    \end{align*}
    Since $0\leq \Pi\leq \I$, it holds that for all $a$, $0\leq \braket{v_a|\Pi|v_a}\leq 1$. We can then define a new operator $\Pi_{\textsc{opt}}$ whose eigenbasis contains the vectors $\brak{\ket{v_a}}$ (it can have more eigenvectors since the rank of $\Pi$ may be larger than the rank of $\rho$), and which has the same eigenvalues as $\Pi$. Clearly, $\Pi_{\textsc{opt}}$ satisfies the criteria that $\Tr\left[\Pi_{\textsc{opt}}\rho\right]\geq 1-\eps$, and also $\Tr\left[\Pi_{\textsc{opt}}\right]=\Tr\left[\Pi\right]$. Therefore, we can always assume that the optimiser for $H_H^{\eps}(A)_{\rho}$ commutes with $\rho$. This immediately implies the upper bound since we obtain $\thmax^{\eps}(A)_{\rho}$ by projecting onto all but those eigenvectors of $\rho^A$ whose eigenvalues are the smallest and add up to at most $\eps$. 

Now, since we know that the optimising operator for $H_H^{\eps}(A)_{\rho}$ diagonalises in the same basis as $\rho^A$, once can assume that the following holds for all such candidate operators $\Pi^A$:
\[
\begin{aligned}
\rho^A &= \sum\limits_{a} P_A(a) \ketbra{a}^A \\
\Pi^A &= \sum\limits_{a} \lambda(a) \ketbra{a}^A
\end{aligned}
\]
Then, it holds that the problem of finding $H_H^{\eps}(A)_{\rho}$ can be reduced to solving the following LP:
\[
\begin{aligned}
\min & \sum\limits_{a} \lambda(a) \\
& \sum\limits_{a} P_A(a)\lambda(a) \geq 1-\eps
\end{aligned}
\]
We know from \cite{Pranab_notes} that the log of the solution to this LP is at least $\thmax^{\eps}(A)_{\rho}-1$. The lower bound follows. This concludes the proof.
\end{proof}

\begin{lemma}{\bf (Subadditivity of the Smooth Hypothesis Testing Entropy)}
Given a bipartite quantum state $\rho^{AB}$, it holds that
\[
H_H^{3\sqrt{\eps}}(AB)_{\rho} \leq H_H^{\eps}(A)_{\rho}+H_H^{\eps}(B)_{\rho}.
\]
\end{lemma}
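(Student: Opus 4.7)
\medskip

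\noindent\textbf{Proof plan.} The strategy is to produce a candidate operator for the hypothesis testing problem on $AB$ by taking the tensor product of the optimisers on $A$ and on $B$, and to control the loss in acceptance probability using a gentle measurement type argument. Let $\Pi_A$ and $\Pi_B$ be the operators achieving $H_H^{\eps}(A)_{\rho}$ and $H_H^{\eps}(B)_{\rho}$ respectively, so that $0\le \Pi_A\le \mathbb{I}^A$, $0\le \Pi_B\le \mathbb{I}^B$, $\Tr[\Pi_A\rho^A]\ge 1-\eps$, $\Tr[\Pi_B\rho^B]\ge 1-\eps$, and $\Tr[\Pi_A]\cdot \Tr[\Pi_B] = 2^{-H_H^{\eps}(A)_{\rho}}\cdot 2^{-H_H^{\eps}(B)_{\rho}}$. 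I will take $\Pi_{AB}\coloneqq \Pi_A\otimes \Pi_B$ as the candidate operator in the definition of $H_H^{3\sqrt{\eps}}(AB)_{\rho}$; clearly $0\le \Pi_{AB}\le \mathbb{I}^{AB}$, and $\Tr[\Pi_{AB}] = \Tr[\Pi_A]\cdot \Tr[\Pi_B]$, so taking logs will give the desired inequality provided I can verify that $\Tr[\Pi_{AB}\rho^{AB}]\ge 1-3\sqrt{\eps}$.

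The core of the proof is this acceptance bound, which is where the $\sqrt{\eps}$ slack enters. The plan is to rewrite
\[
\Tr\bigl[(\Pi_A\otimes \Pi_B)\rho^{AB}\bigr] = \Tr\bigl[(\Pi_A\otimes \mathbb{I})\,\widetilde{\sigma}^{AB}\bigr],
\]
where $\widetilde{\sigma}^{AB}\coloneqq (\mathbb{I}\otimes \sqrt{\Pi_B})\,\rho^{AB}\,(\mathbb{I}\otimes \sqrt{\Pi_B})$ is the (sub-normalised) post-measurement state obtained by applying $\sqrt{\Pi_B}$ on the $B$ side. By the gentle measurement lemma applied to $\rho^{AB}$ with the operator $\mathbb{I}^A\otimes \Pi_B$ (whose acceptance on $\rho^{AB}$ equals $\Tr[\Pi_B\rho^B]\ge 1-\eps$), I get
\[
\bigl\|\rho^{AB} - \widetilde{\sigma}^{AB}\bigr\|_1 \le 2\sqrt{\eps}.
\]
Combining this with H\"older's inequality and $\|\Pi_A\otimes \mathbb{I}\|_\infty \le 1$, I obtain
\[
\bigl|\Tr[(\Pi_A\otimes \mathbb{I})\widetilde{\sigma}^{AB}] - \Tr[(\Pi_A\otimes \mathbb{I})\rho^{AB}]\bigr| \le 2\sqrt{\eps}.
\]
Since $\Tr[(\Pi_A\otimes \mathbb{I})\rho^{AB}] = \Tr[\Pi_A\rho^A]\ge 1-\eps$, this yields
\[
\Tr[(\Pi_A\otimes \Pi_B)\rho^{AB}] \ge 1-\eps-2\sqrt{\eps} \ge 1-3\sqrt{\eps},
\]
where the last step uses $\eps \le \sqrt{\eps}$ for $\eps\le 1$.

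Finally, plugging $\Pi_{AB}$ into the definition of $H_H^{3\sqrt{\eps}}(AB)_{\rho}$ and taking logarithms gives
\[
H_H^{3\sqrt{\eps}}(AB)_{\rho} \le -\log \Tr[\Pi_A\otimes \Pi_B] = H_H^{\eps}(A)_{\rho} + H_H^{\eps}(B)_{\rho},
\]
which is the claim. The only non-routine step is the gentle-measurement estimate bounding $\|\rho^{AB}-\widetilde{\sigma}^{AB}\|_1$; everything else is just bookkeeping. I do not anticipate genuine technical obstacles here since the gentle measurement lemma in the required sub-normalised form is standard.
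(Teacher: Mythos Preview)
Your proof is correct and follows essentially the same approach as the paper: take the tensor product $\Pi_A\otimes\Pi_B$ of the marginal optimisers and use the gentle measurement lemma to lower-bound the joint acceptance probability by $1-\eps-2\sqrt{\eps}\ge 1-3\sqrt{\eps}$. One minor sign slip: with the paper's convention $2^{H_H^{\eps}(A)}=\Tr[\Pi_A]$ (not $2^{-H_H^{\eps}(A)}$), so your final display should read $H_H^{3\sqrt{\eps}}(AB)_{\rho}\le \log\Tr[\Pi_A\otimes\Pi_B]$ rather than $-\log$; the conclusion is unchanged.
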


\begin{proof}
To see that this holds, let $\Pi^A$ and $\Pi^B$ be the optimising operators  for $H_H^{\eps}(A)_{\rho}$ and $H_H^{\eps}(B)_{\rho}$ respectively. Then,
\begin{align*}
    \Tr\left[\Pi^A\otimes \Pi^B \rho^{AB}\right] &= \Tr \left[ \left(\sqrt{\Pi^A}\otimes \sqrt{\Pi^B} \right)\cdot \rho^{AB} \right] \\
    &= \Tr \left[\left(\I^A\otimes \sqrt{\Pi}^B\right)\cdot\left(\sqrt{\Pi}^A\otimes \I^B\cdot \rho^{AB}-\rho^{AB}\right)\right]+\Tr\left[\left(\I^A\otimes \sqrt{\Pi}^B\right)\cdot \rho^{AB}\right] \\
    &\geq 1-\eps - \norm{\sqrt{\Pi}^A\cdot\rho^{AB}-\rho^{AB}}_1
\end{align*}
We know that
\[
\Tr\left[\Pi^A\otimes \I^B \rho^{AB}\right] \geq 1-\eps
\]
By the Gentle Measurement Lemma, we can then see that
\[
\norm{\sqrt{\Pi}^A\cdot\rho^{AB}-\rho^{AB}}_1 \leq 2\sqrt{\eps}
\]
Therefore we can conclude that,
\[
\Tr\left[\Pi^A\otimes \Pi^B \rho^{AB}\right] \geq 1-3\sqrt{\eps}
\]
Therefore, $\Pi^A\otimes \Pi^B$ is a candidate optimiser for $H_H^{3\sqrt{\eps}}(AB)$, which implies that
\[
H_H^{3\sqrt{\eps}}(AB)_{\rho} \leq H_H^{\eps}(A)_{\rho}+H_H^{\eps}(B)_{\rho}.
\]
This concludes the proof.
\end{proof}

\begin{lemma}\label{lem:additivehh}
    Given the quantum state $\rho^A\otimes \pi^B$, where $\pi^B$ is the maximally mixed state on the system $B$, it holds that:
    \[
    H_H^{\eps}(AB)_{\rho\otimes \pi}=H_H^{\eps}(A)_{\rho}+\log \abs{B}.
    \]
\end{lemma}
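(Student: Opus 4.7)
The plan is to reduce the computation of $H_H^{\eps}(AB)_{\rho\otimes\pi}$ to a linear program that mirrors the one arising in the proofs of Lemmas \ref{lem:purehh} and \ref{lem:puretensorsame}, and then to observe that replacing the pure factor by $\pi^B = \mathbb{I}^B/|B|$ multiplies the LP optimum by exactly $|B|$, yielding the claimed additive shift by $\log|B|$.

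First I would argue, exactly as in the proof of Lemma \ref{lem:equivHHtildeHmax}, that the optimizing operator $\Pi^{AB}$ in $2^{H_H^{\eps}(AB)_{\rho\otimes\pi}} = \min \Tr[\Pi^{AB}]$ can be taken to diagonalize in the product eigenbasis of $\rho^A \otimes \pi^B$. Writing $\rho^A = \sum_a P_A(a)\ketbra{a}^A$ and fixing any orthonormal basis $\{\ket{b}^B\}$, the state $\rho \otimes \pi$ has eigenvectors $\{\ket{a}\ket{b}\}$ with eigenvalues $P_A(a)/|B|$. The standard diagonal-replacement trick (zeroing out the off-diagonal entries of $\Pi^{AB}$ in this basis) preserves both $\Tr[\Pi^{AB}]$ and $\Tr[\Pi^{AB}(\rho \otimes \pi)]$ while keeping $0 \leq \Pi^{AB} \leq \mathbb{I}$, so this restriction is free.

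Next I would parametrize the feasible $\Pi^{AB}$ as $\sum_{a,b} \lambda(a,b) \ketbra{a}^A \otimes \ketbra{b}^B$ with $\lambda(a,b) \in [0,1]$. The hypothesis-testing constraint becomes
\[
\sum_{a,b} \lambda(a,b) \frac{P_A(a)}{|B|} \geq 1-\eps,
\qquad \text{i.e.,} \qquad \sum_a P_A(a)\,\bar\lambda(a) \geq 1-\eps,
\]
where $\bar\lambda(a) := \frac{1}{|B|}\sum_b \lambda(a,b) \in [0,1]$, while the objective is $\Tr[\Pi^{AB}] = |B|\sum_a \bar\lambda(a)$. Since any choice of $\bar\lambda(a) \in [0,1]$ can be realized by setting $\lambda(a,b) = \bar\lambda(a)$ for all $b$, minimizing over the $\lambda(a,b)$'s for fixed $\bar\lambda$ is trivial, and the problem collapses to
\[
\min\, |B|\sum_a \bar\lambda(a) \quad \text{s.t.} \quad \sum_a P_A(a)\,\bar\lambda(a) \geq 1-\eps, \quad 0 \leq \bar\lambda(a) \leq 1.
\]

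This is exactly $|B|$ times the LP identified in the proof of Lemma \ref{lem:purehh} whose logarithm equals $H_H^{\eps}(A)_\rho$. Taking logarithms on both sides of the resulting identity $2^{H_H^{\eps}(AB)_{\rho\otimes\pi}} = |B|\cdot 2^{H_H^{\eps}(A)_\rho}$ yields the lemma. I do not expect any real obstacle; the only place where one must be careful is the reduction to diagonal operators, but this is by now a routine move in the paper, and once it is in hand the LP reduction is purely algebraic.
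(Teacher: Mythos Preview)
Your proposal is correct and follows essentially the same approach as the paper: both reduce to a diagonal LP via the argument of Lemma \ref{lem:equivHHtildeHmax}, then show that the LP for $\rho\otimes\pi$ is exactly $|B|$ times the LP for $\rho$. The paper phrases the last step as an equivalence between two explicitly named LPs (converting optimizers of one into feasible points of the other), whereas you do it via the averaging substitution $\bar\lambda(a)=\tfrac{1}{|B|}\sum_b\lambda(a,b)$, but this is the same argument.
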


\begin{proof}
    By the arguments used in the proof of Lemma \ref{lem:equivHHtildeHmax}, we know that $H_H^{\eps}(AB)_{\rho\otimes \pi}$ is obtained by solving the following LP, which we call $LP1$:
    \begin{align*}
        \min & \sum\limits_{a,b} \lambda(a,b) \\
& \sum\limits_{a,b} \frac{P_A(a)}{\abs{B}}\lambda(a,b) \geq 1-\eps,
    \end{align*}
    where $\rho^A=\sum\limits_{a}P_A(a)\ketbra{a}^A$ and $\pi^B=\sum\limits_{b}\frac{1}{\abs{B}}\ketbra{b}^B$. 
   Consider also the following LP, which we call $LP2$:
    \begin{align*}
        \min & \abs{B}\cdot \left(\sum\limits_{a} \lambda(a)\right) \\
& \sum\limits_{a} P_A(a)\lambda(a) \geq 1-\eps.
    \end{align*}
    Now note that an optimising $\brak{\lambda(a)}$ for $LP2$ can be turned into a \emph{feasible} $\brak{\lambda(a,b)}$ for $LP1$ by simply declaring $\lambda(a,b)=\lambda(a), \forall b$. Similarly, an optimising $\brak{\lambda(a,b)}$ for $LP1$ can be turned into a \emph{feasible} $\brak{\lambda(a)}$ for $LP2$ by defining:
    \[
    \lambda(a)\coloneqq \sum\limits_{b}\frac{1}{\abs{B}}\lambda(a,b).
    \] 
    This argument immediately implies that the minima of $LP1$ and $LP2$ are equal. However, note that the minima of $LP2$ is precisely:
    \[
    \abs{B}\cdot 2^{H_H^{\eps}(A)_{\rho}}.
    \]
    Collating all the arguments above implies that:
    \[
    H_H^{\eps}(AB)_{\rho\otimes \pi}=H_H^{\eps}(A)_{\rho}+\log \abs{B}.
    \]
    This concludes the proof.
\end{proof}

\begin{corollary}\label{corol:upperboundhhdim}
    Given a state $\rho^{AB}$, it holds that :
    \[
    H_H^{\eps}(AB)_{\rho}\leq H_H^{\eps}(A)_{\rho}+\log \abs{B}.
    \]
\end{corollary}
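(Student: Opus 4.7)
The plan is to reduce the statement to Lemma \ref{lem:additivehh} using a data processing argument. Consider the quantum channel $\mathcal{N}^{AB \to AB}$ defined by $\mathcal{N}(\sigma^{AB}) := \Tr_B[\sigma^{AB}] \otimes \pi^B$, which traces out $B$ and replaces it with the maximally mixed state. This channel is unital: indeed, $\mathcal{N}(\I^{AB}) = (|B|\, \I^A) \otimes \pi^B = \I^A \otimes \I^B = \I^{AB}$. In particular $\mathcal{N}(\rho^{AB}) = \rho^A \otimes \pi^B$.

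The key observation is that $H_H^{\eps}$, when viewed as an unconditional entropy, is non-decreasing under unital CPTP maps. This follows immediately from the standard data processing inequality for the hypothesis testing relative entropy: for any CPTP map $\mathcal{M}$ and any states $\tau, \omega$ one has $D_H^{\eps}(\mathcal{M}(\tau) \| \mathcal{M}(\omega)) \leq D_H^{\eps}(\tau \| \omega)$. Applying this with $\mathcal{M} = \mathcal{N}$, $\tau = \rho^{AB}$, and $\omega = \I^{AB}/\Tr[\I^{AB}]$ (and using the fact that $D_H^{\eps}(\cdot \| c \sigma) = D_H^{\eps}(\cdot \| \sigma) - \log c$ to absorb the normalisation), one gets
\begin{align*}
H_H^{\eps}(AB)_{\rho} \;=\; -D_H^{\eps}(\rho^{AB}\,\|\,\I^{AB}) \;\leq\; -D_H^{\eps}\bigl(\mathcal{N}(\rho^{AB})\,\|\,\mathcal{N}(\I^{AB})\bigr) \;=\; H_H^{\eps}(AB)_{\rho^A \otimes \pi^B}.
\end{align*}

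Now I would invoke Lemma \ref{lem:additivehh} directly on the right-hand side to conclude
\begin{align*}
H_H^{\eps}(AB)_{\rho} \;\leq\; H_H^{\eps}(AB)_{\rho^A \otimes \pi^B} \;=\; H_H^{\eps}(A)_{\rho} + \log|B|,
\end{align*}
which is the desired inequality.

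The only step requiring any care is the data processing step, since the excerpt states data processing for $H_{\min}^{\eps}$ and $H_{\max}^{\eps}$ but not explicitly for $H_H^{\eps}$. This is however a standard fact; alternatively, one can prove it by hand in a single line using the LP definition: if $0 \leq \Pi \leq \I^{AB}$ achieves the minimum in $H_H^{\eps}(AB)_{\rho^A \otimes \pi^B}$, then $\mathcal{N}^{\dagger}(\Pi)$ is a feasible operator for the LP corresponding to $H_H^{\eps}(AB)_{\rho}$ with the same value of the objective, since $\mathcal{N}^{\dagger}$ is unital and completely positive. No other obstacle is anticipated; the proof is a short composition of unitality of $\mathcal{N}$ with the already-established additive identity for the product case.
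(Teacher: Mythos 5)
Your proof is correct and follows essentially the same route as the paper: both reduce to Lemma \ref{lem:additivehh} by passing $\rho^{AB}$ through the unital CPTP map that replaces the $B$ marginal with $\pi^B$ and invoking data processing for $H_H^{\eps}$ (the paper merely realises that same channel as a mixed-unitary twirl via a unitary $1$-design rather than writing it as $\Tr_B[\cdot]\otimes\pi^B$ directly). No gap.
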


\begin{proof}
    From the theory of unitary $1$-designs we know that there exists a set of unitaries $U_i^B$ on the system $B$ and probability distribution $\brak{p_i}$ such that, for any matrix $M^B$, it holds that:
    \[
    \sum\limits_{i}p_i U_iMU_i^{\dagger}= \Tr\left[M\right]\frac{I^B}{\abs{B}}.
    \]
    Let us denote the operation $\sum\limits_{i}p_iU_i(\cdot)U_i^{\dagger}$ as $\mathcal{T}^{B\to B}$. Note that $\mathcal{T}$ is a unital CPTP map. Also, note that it is not hard to show that for any state $\rho^{AB}$, it holds that:
    \[
    \left(\I^A\otimes \mathcal{T}^B \right) (\rho^{AB}) = \rho^A\otimes \pi^B,
    \]
    where $\pi^B$ is the maximally mixed state on the system $B$.    Therefore, by Lemma\ref{lem:additivehh}, we see that:
    \begin{align*}
        H_H^{\eps}(AB)_{\rho^{AB}} &\leq H_H^{\eps}(AB)_{\left(\I^A\otimes \mathcal{T}^B \right)(\rho^{AB})} \\
        &= H_H^{\eps}(AB)_{\rho^A\otimes \pi^B}\\
        &= H_H^{\eps}(A)_{\rho}+\log \abs{B}.
    \end{align*}
    This concludes the proof.
\end{proof}

\begin{lemma}\label{lem:hhneg}
Let $\sigma^A$ be a state such that
\[
\norm{\sigma^A-\ketbra{0}^A}_1 \leq \eps
\]
Then
\[
H_H^{\eps}(A)_{\sigma} \leq 0
\]
\end{lemma}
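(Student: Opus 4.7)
The plan is to work directly from the definition of the smooth hypothesis testing entropy and exhibit an explicit feasible operator in the minimisation. Recall that for a single system,
\[
H_H^{\eps}(A)_{\sigma} = -D_H^{\eps}(\sigma \,\|\, \mathbb{I}^A),
\]
so showing $H_H^{\eps}(A)_{\sigma} \le 0$ is equivalent to showing $D_H^{\eps}(\sigma \,\|\, \mathbb{I}^A) \ge 0$, which in turn amounts to finding some $0 \le \Pi \le \mathbb{I}^A$ with $\Tr[\Pi \sigma] \ge 1-\eps$ and $\Tr[\Pi \cdot \mathbb{I}^A] = \Tr[\Pi] \le 1$.

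The natural candidate is $\Pi := \ketbra{0}^A$, which is a rank-one projector and thus trivially satisfies $0 \le \Pi \le \mathbb{I}^A$ and $\Tr[\Pi] = 1$. The only thing to verify is the constraint $\Tr[\Pi \sigma] \ge 1 - \eps$, i.e.\ $\braket{0|\sigma|0} \ge 1-\eps$. This follows immediately from the hypothesis via the standard inequality $|\Tr[M(\rho_1 - \rho_2)]| \le \|M\|_\infty \cdot \|\rho_1-\rho_2\|_1$ applied with $M = \ketbra{0}^A$:
\[
\bigl|\braket{0|\sigma|0} - \braket{0|\ketbra{0}|0}\bigr| \;\le\; \|\sigma - \ketbra{0}\|_1 \;\le\; \eps,
\]
and since $\braket{0|\ketbra{0}|0} = 1$, we get $\braket{0|\sigma|0} \ge 1-\eps$.

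Therefore $\Pi = \ketbra{0}^A$ is a feasible operator in the minimisation defining $2^{-D_H^{\eps}(\sigma \|\mathbb{I}^A)}$, and it achieves objective value $\Tr[\Pi] = 1$. Hence the minimum is at most $1$, giving $D_H^{\eps}(\sigma\|\mathbb{I}^A) \ge 0$ and $H_H^{\eps}(A)_{\sigma} \le 0$. There is no real obstacle here; the only mildly delicate point is remembering to invoke the duality between trace-norm closeness and the measurement output probability of the projector $\ketbra{0}$, which is a one-line application of Hölder's inequality.
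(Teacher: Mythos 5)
Your proof is correct and follows exactly the paper's argument: take $\Pi=\ketbra{0}^A$ as the feasible operator, note $\braket{0|\sigma|0}\geq 1-\eps$ from the trace-norm hypothesis, and conclude from $\Tr[\Pi]=1$. The only difference is that you spell out the H\"older-type step $\abs{\Tr[M(\rho_1-\rho_2)]}\leq \norm{M}_\infty\norm{\rho_1-\rho_2}_1$, which the paper leaves implicit.
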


\begin{proof}
The condition in the statement of the lemma implies that
\[
\braket{0|\sigma|0} \geq 1-\eps
\]
This implies that $\ketbra{0}^A$ is a valid candidate for the optimising operator for $H_H^{\eps}(A)_{\sigma}$. Since $\ketbra{0}$ has trace $1$, the result follows. This concludes the proof.
\end{proof}

\begin{lemma}\label{lem:averageToWorstcaseHH}
    Given a quantum cq state $\rho^{XB}=\sum\limits_{x} P_X(x)\ketbra{x}^X\otimes \rho_x^B$ where $x\in \mathcal{X}$, it holds that there exists a subset $\mathcal{S}\subseteq \mathcal{X}$ such that
    \begin{align*}
        & \Pr\limits_{P_X}\left[\mathcal{S}\right]\geq 1-2\sqrt{\eps} \\
        & H_H^{\sqrt{\eps}}(\rho^B_x)\leq H_H^{\eps}(B|X)_{\rho}-\log\eps, ~~~~\forall x\in \mathcal{S}.
    \end{align*}
\end{lemma}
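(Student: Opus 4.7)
The plan is to unpack the definition of $H_H^{\eps}(B|X)_{\rho}$, reduce the optimal measurement operator to block-diagonal form across $X$, and then pass from the average constraint to a pointwise bound via two applications of Markov's inequality.

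First I would write
\[
2^{H_H^{\eps}(B|X)_{\rho}} \;=\; \min\bigl\{\Tr[\Pi(\I^B\otimes \rho^X)] \;:\; 0\leq \Pi\leq \I,\ \Tr[\Pi\rho^{XB}]\geq 1-\eps\bigr\}.
\]
Since both $\rho^{XB}$ and $\I^B\otimes \rho^X$ are block diagonal in the classical basis of $X$, completely dephasing $\Pi$ in $X$ preserves the constraint $0\leq \Pi\leq \I$ and leaves both traces unchanged. Thus I may assume that the optimizer has the form $\Pi=\sum_x \ketbra{x}^X\otimes \Pi_x^B$ with $0\leq \Pi_x\leq \I^B$; this is essentially the same maneuver used in the proof of Lemma \ref{lem:equivHHtildeHmax}. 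Under this assumption the two constraints read
\begin{align*}
\sum_x P_X(x)\,\Tr[\Pi_x\rho_x^B] &\geq 1-\eps, \\
\sum_x P_X(x)\,\Tr[\Pi_x] &= 2^{H_H^\eps(B|X)_\rho}.
\end{align*}

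Next, I would apply Markov's inequality to each sum with threshold $\sqrt{\eps}$. Setting $e_x := 1-\Tr[\Pi_x\rho_x^B]\geq 0$ and $t_x := \Tr[\Pi_x]\geq 0$, this yields
\[
\Pr\nolimits_{P_X}[e_x\geq \sqrt{\eps}]\leq \sqrt{\eps},\qquad \Pr\nolimits_{P_X}\bigl[t_x \geq 2^{H_H^\eps(B|X)_\rho}/\sqrt{\eps}\bigr]\leq \sqrt{\eps}.
\]
Taking $\mathcal{S}$ to be the set of $x$ for which \emph{both} $e_x<\sqrt{\eps}$ and $t_x<2^{H_H^\eps(B|X)_\rho}/\sqrt{\eps}$, a union bound gives $\Pr_{P_X}[\mathcal{S}]\geq 1-2\sqrt{\eps}$, matching the claimed probability bound exactly.

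For the entropy bound, for each $x\in\mathcal{S}$ the operator $\Pi_x$ satisfies $0\leq \Pi_x\leq \I^B$ and $\Tr[\Pi_x\rho_x^B]\geq 1-\sqrt{\eps}$, so it is feasible for the optimization defining $2^{H_H^{\sqrt{\eps}}(\rho_x^B)}$. Hence
\[
2^{H_H^{\sqrt{\eps}}(\rho_x^B)} \;\leq\; t_x \;<\; 2^{H_H^\eps(B|X)_\rho}/\sqrt{\eps},
\]
and taking logarithms gives $H_H^{\sqrt{\eps}}(\rho_x^B) < H_H^\eps(B|X)_\rho - \tfrac{1}{2}\log\eps \leq H_H^\eps(B|X)_\rho - \log\eps$, where the last step uses $\log\eps\leq 0$. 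The only slightly technical step is the block-diagonal reduction for $\Pi$; the remainder is a routine union bound with Markov, so I do not anticipate any genuine obstacle.
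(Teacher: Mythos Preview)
Your proposal is correct and follows essentially the same approach as the paper: block-diagonalize the optimizer in $X$, then apply Markov's inequality twice and take a union bound. The only cosmetic difference is that the paper uses threshold $\eps$ (rather than your $\sqrt{\eps}$) in the second Markov step, obtaining $\Tr[\Pi_x]\leq 2^{H_H^\eps(B|X)}/\eps$ directly and a set $\mathcal{T}_2$ of probability $1-\eps$; the intersection then has probability $\geq 1-\sqrt{\eps}-\eps\geq 1-2\sqrt{\eps}$, arriving at the same conclusion without your final relaxation $-\tfrac{1}{2}\log\eps\leq -\log\eps$.
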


\begin{proof}
    Without loss of generality we can assume that the optimising operator $\Pi^{XB}$ in the definition of $H_H^{\eps}(B|X)_{\rho}$ is of the form:
    \[
    \Pi^{XB}=\sum\limits_{x}\ketbra{x}\otimes \Pi_x^B.
    \]
    By definition, this operator has the property that:
    \[
    \sum\limits_{x}P_X(x)\Tr\left[\Pi_x^B\rho_x^B\right] \geq 1-\eps.
    \]
    By Markov's inequality, we can then see that there exists a set $\mathcal{T}_1$ such that $\Pr\limits_{P_X}\left[\mathcal{S}\right]\geq 1-\sqrt{\eps}$ and for all $x\in \mathcal{T}_1$,
    \[
    \Tr\left[\Pi_x^B\rho_x^B\right]\geq 1-\sqrt{\eps}.
    \]
    Again, by definition, it holds that:
    \[
    2^{H_H^{\eps}(B|X)_{\rho}}=\sum\limits_{x}P_X(x)\Tr\left[\Pi_x^B\right].
    \]
    Again, Markov's inequality tells us that there exists a set $\mathcal{T}_2\subseteq \mathcal{X}$ of probability (under $P_X$) of at least $1-\eps$ such that for all $x\in \mathcal{T}_2$, it holds that:
    \[
    \Tr\left[\Pi_x^B\right]\leq \frac{2^{H_H^{\eps}(B|X)_{\rho}}}{\eps}.
    \]
    Therefore, for all $x\in \mathcal{T}_1\bigcap \mathcal{T}_2$ (which has probability at least $1-2\sqrt{\eps}$ under $P_X$), it holds that $\Pi_x^B$ is a candidate for the optimiser in the definition of $H_H^{\sqrt{\eps}}(\rho_x^B)$. Thus defining $\mathcal{S}\coloneqq \mathcal{T}_1\bigcap \mathcal{T}_2$ we see that the result follows. This concludes the proof.
\end{proof}

\begin{lemma}\label{lem:hhzeroforconditionalpure}
Given a cq state 
\[\rho^{XB}= \sum\limits_{x}P_X(x)\ketbra{x}^X\otimes \ketbra{v_x}^B,\]
it holds that $H_H^{\eps}(B~|~X)_{\rho}\leq 0$.
\end{lemma}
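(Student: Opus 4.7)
The plan is to exhibit an explicit optimiser in the variational definition of $D_H^{\eps}$ that certifies $D_H^{\eps}(\rho^{XB} \| \I^B \otimes \rho^X) \geq 0$, which immediately yields $H_H^{\eps}(B|X)_\rho \leq 0$. Unrolling the definitions, we need to find an operator $\Pi$ with $0 \leq \Pi \leq \I$, $\Tr[\Pi \rho^{XB}] \geq 1-\eps$, and $\Tr[\Pi (\I^B \otimes \rho^X)] \leq 1$.

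The natural candidate, exploiting the purity of each conditional state $\ket{v_x}^B$, is
\[
\Pi \;\coloneqq\; \sum_{x} \ketbra{x}^X \otimes \ketbra{v_x}^B.
\]
This is a valid test operator: since the $\ket{x}^X$ are orthogonal, the terms in the sum act on orthogonal subspaces of $X$, so $\Pi$ is a sum of mutually orthogonal rank-one projectors and hence satisfies $0 \leq \Pi \leq \I$.

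Next I would compute the two traces. For the first,
\[
\Tr[\Pi \rho^{XB}] \;=\; \sum_{x} P_X(x) \braket{x|x} \braket{v_x | v_x}^2 \;=\; 1 \;\geq\; 1-\eps,
\]
so $\Pi$ is feasible. For the second, using $\I^B \otimes \rho^X = \sum_y P_X(y) \ketbra{y}^X \otimes \I^B$ and orthogonality in $X$,
\[
\Tr\!\left[\Pi (\I^B \otimes \rho^X)\right] \;=\; \sum_{x} P_X(x)\, \Tr\!\left[\ketbra{v_x}^B\right] \;=\; \sum_{x} P_X(x) \;=\; 1.
\]

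Thus $2^{-D_H^{\eps}(\rho^{XB} \| \I^B \otimes \rho^X)} \leq 1$, giving $D_H^{\eps}(\rho^{XB} \| \I^B \otimes \rho^X) \geq 0$ and hence $H_H^{\eps}(B|X)_\rho \leq 0$. There is no real obstacle here; the only subtlety is noticing that the purity of the conditional states lets us take the optimiser $\Pi$ to be block-diagonal in $X$ with each block a rank-one projector onto $\ket{v_x}$, which simultaneously achieves trace $1$ against $\rho^{XB}$ and trace $1$ against $\I^B \otimes \rho^X$.
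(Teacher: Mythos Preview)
Your proof is correct and uses essentially the same idea as the paper: exhibit the block-diagonal candidate $\Pi = \sum_x \ketbra{x}^X \otimes \ketbra{v_x}^B$ and verify the two trace conditions directly. The paper's proof takes a slightly less direct route, starting from an arbitrary optimiser $\Pi_{\opt}$ and replacing each block $\Pi_x$ by $\Tr[\Pi_x\ketbra{v_x}]\,\ketbra{v_x}$, but the upshot is the same rank-one-per-block structure you wrote down.
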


\begin{lemma}\label{lem:hhswitchforconditionalpure}
    Given a cq state of the form 
    \[
    \rho^{XAB}=\sum\limits_{x}P_X(x)\ketbra{x}^X\otimes \ketbra{v_x}^{AB},
    \]
    it holds that $H_H^{\eps}(B~|~X)_{\rho}=H_H^{\eps}(A~|~X)_{\rho}$.
\end{lemma}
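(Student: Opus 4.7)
The strategy is to reduce both $H_H^{\eps}(B|X)_\rho$ and $H_H^{\eps}(A|X)_\rho$ to exactly the same linear program, expressed only in terms of the Schmidt coefficients of the conditional pure states $\ket{v_x}^{AB}$ and the prior $P_X$.

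First I would argue that without loss of generality the optimiser $\Pi^{XB}$ appearing in the definition of $H_H^{\eps}(B|X)_\rho$ can be taken to be block-diagonal in the $X$ basis, i.e.\ of the form $\Pi^{XB}=\sum_x \ketbra{x}^X \otimes \Pi_x^B$. The reason is that both $\rho^{XB}$ and $\mathbb{I}^B \otimes \rho^X$ are invariant under the completely dephasing channel on $X$, so replacing $\Pi^{XB}$ by its dephased version leaves both $\Tr[\Pi^{XB}\rho^{XB}]$ and $\Tr[(\mathbb{I}^B\otimes \rho^X)\Pi^{XB}]$ unchanged. Next, the eigenbasis trick used in the proofs of Lemmas~\ref{lem:purehh} and \ref{lem:equivHHtildeHmax} allows me to further assume that each $\Pi_x^B$ commutes with $\rho_x^B\coloneqq \Tr_A \ketbra{v_x}^{AB}$ and is supported inside $\mathrm{supp}(\rho_x^B)$.

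For each $x$, I would then invoke the Schmidt decomposition
\[
\ket{v_x}^{AB} = \sum_i \sqrt{\lambda_i^{(x)}}\, \ket{a_i^{(x)}}^A \ket{b_i^{(x)}}^B,
\]
which yields $\rho_x^A = \sum_i \lambda_i^{(x)} \ketbra{a_i^{(x)}}$ and $\rho_x^B = \sum_i \lambda_i^{(x)} \ketbra{b_i^{(x)}}$, so the two reduced states share the same spectrum $\{\lambda_i^{(x)}\}$. Writing $\Pi_x^B = \sum_i \mu_i^{(x)} \ketbra{b_i^{(x)}}$ with $0\le \mu_i^{(x)} \le 1$, the computation of $2^{H_H^{\eps}(B|X)_\rho}$ becomes the linear program
\begin{align*}
\min \quad & \sum_x P_X(x) \sum_i \mu_i^{(x)} \\
\text{s.t.} \quad & \sum_x P_X(x) \sum_i \lambda_i^{(x)} \mu_i^{(x)} \;\geq\; 1-\eps, \\
& 0 \leq \mu_i^{(x)} \leq 1.
\end{align*}

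Finally, I would observe that running the exact same reduction for $H_H^{\eps}(A|X)_\rho$, with $\Sigma_x^A = \sum_i \nu_i^{(x)} \ketbra{a_i^{(x)}}$ in place of $\Pi_x^B$, produces the identical linear program: only the Schmidt coefficients $\{\lambda_i^{(x)}\}$ and probabilities $\{P_X(x)\}$ enter the coefficients, while the eigenvectors $\ket{a_i^{(x)}}$ versus $\ket{b_i^{(x)}}$ are irrelevant. Hence the two optima coincide, yielding $H_H^{\eps}(B|X)_\rho = H_H^{\eps}(A|X)_\rho$. The only step that requires care is the justification of the block-diagonal-plus-commuting form of the optimiser, but this is exactly the dephasing/averaging argument already employed in the earlier lemmas of this section, so no new ideas are needed.
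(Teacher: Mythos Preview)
Your proof is correct and rests on the same key observation as the paper's: both conditional hypothesis-testing entropies depend only on the Schmidt data $\{P_X(x),\lambda_i^{(x)}\}$ of the pure states $\ket{v_x}^{AB}$, not on the particular eigenvectors on the $A$ or $B$ side.

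The route differs slightly in mechanics. You first reduce each $\Pi_x^B$ to diagonal form in the Schmidt basis (via the dephasing/pinching argument from Lemmas~\ref{lem:purehh} and~\ref{lem:equivHHtildeHmax}) and then observe that both optimisations collapse to the \emph{same} linear program in the variables $\mu_i^{(x)}$. The paper instead skips the diagonalisation step: it takes the optimiser $\Pi_x^A$ for $H_H^{\eps}(A|X)$ as is, and transports it to $B$ by the Schmidt isometry $V_x:\ket{a_i^{(x)}}\mapsto\ket{b_i^{(x)}}$, setting $\widetilde{\Pi}_x^B\coloneqq V_x\Pi_x^A V_x^\dagger$. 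Since $V_x$ is an isometry, both $\Tr[\Pi_x^A\rho_x^A]$ and $\Tr[\Pi_x^A]$ are preserved, giving $H_H^{\eps}(B|X)\leq H_H^{\eps}(A|X)$ directly; the reverse inequality is symmetric. Your LP formulation makes the $A\leftrightarrow B$ symmetry completely explicit at the cost of one extra reduction; the paper's isometry argument is marginally shorter but less transparent about why the two problems coincide. Either proof is fully adequate.
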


The proofs of Lemma \ref{lem:hhzeroforconditionalpure} and \ref{lem:hhswitchforconditionalpure} can be found in Appendix \ref{appendix:hhlemmas}. Finally, we will require a data processing inequality for $H_H^{\eps}$:
\begin{lemma}\label{lem:dataprocessinghh}
    Given a state $\rho^{AB}$, a CPTP map $\mathcal{E}^{B\to D}$, and a unital CPTP map $\mathcal{F}^{A\to C}$, it holds that:
    \[
    \begin{aligned}
    &H_H^{\eps}(A|B)_{\rho}\leq H_H^{\eps}(A|D)_{\left(\I^A\otimes \mathcal{E}^B\right)(\rho^{AB})}, \\
    &H_H^{\eps}(A|B)_{\rho}\leq H_H^{\eps}(C|B)_{\left(\mathcal{F}^A\otimes \I^B\right)(\rho^{AB})} .
    \end{aligned}
    \]
    
\end{lemma}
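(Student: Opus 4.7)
The plan is to deduce both inequalities from a single data processing property for $D_H^{\eps}$, by choosing the CPTP map appropriately in each case and then carefully tracking what it does to the reference state $\I^A \otimes \rho^B$ that defines the conditional hypothesis testing entropy.

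First I would establish, or rather recall, the underlying statement: for any CPTP map $\mathcal{N}^{S \to T}$ and any states $\rho^S,\sigma^S$, one has
\[
D_H^{\eps}(\mathcal{N}(\rho)\,\|\,\mathcal{N}(\sigma)) \leq D_H^{\eps}(\rho\,\|\,\sigma).
\]
The proof is the standard adjoint trick: if $\Pi'$ achieves the minimum in the definition of $D_H^{\eps}(\mathcal{N}(\rho)\,\|\,\mathcal{N}(\sigma))$, let $\Pi := \mathcal{N}^{*}(\Pi')$, where $\mathcal{N}^{*}$ is the Heisenberg adjoint. Then $\mathcal{N}^{*}$ is completely positive and $\mathcal{N}^{*}(\I) = \I$ by trace preservation of $\mathcal{N}$, so $0 \leq \Pi \leq \I$. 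The adjoint relation gives $\Tr[\Pi\rho] = \Tr[\Pi'\mathcal{N}(\rho)] \geq 1-\eps$ and $\Tr[\Pi\sigma] = \Tr[\Pi'\mathcal{N}(\sigma)]$, so $\Pi$ is feasible for the LHS minimization and the inequality follows.

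For the first inequality I would apply this with $\mathcal{N} = \I^A \otimes \mathcal{E}^{B\to D}$. Writing $\sigma^{AD} := \mathcal{N}(\rho^{AB})$, I observe that $\mathcal{N}(\I^A \otimes \rho^B) = \I^A \otimes \mathcal{E}(\rho^B) = \I^A \otimes \sigma^D$, and hence
\[
D_H^{\eps}(\rho^{AB}\,\|\,\I^A \otimes \rho^B) \geq D_H^{\eps}(\sigma^{AD}\,\|\,\I^A \otimes \sigma^D),
\]
which after negation is exactly $H_H^{\eps}(A|B)_{\rho} \leq H_H^{\eps}(A|D)_{\sigma}$. For the second inequality I would apply the same principle to $\mathcal{N} = \mathcal{F}^{A\to C} \otimes \I^B$. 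Writing $\tau^{CB} := \mathcal{N}(\rho^{AB})$, trace preservation on the $A$-factor gives $\tau^B = \rho^B$. The only step where a hypothesis enters nontrivially is
\[
\mathcal{N}(\I^A \otimes \rho^B) \;=\; \mathcal{F}(\I^A) \otimes \rho^B \;=\; \I^C \otimes \tau^B,
\]
where the second equality uses that $\mathcal{F}$ is \emph{unital}. Data processing then gives $D_H^{\eps}(\rho^{AB}\,\|\,\I^A \otimes \rho^B) \geq D_H^{\eps}(\tau^{CB}\,\|\,\I^C \otimes \tau^B)$, i.e.\ the second claim.

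I do not expect any real obstacle; the argument is essentially a transcription of the standard DPI for $D_H^{\eps}$. The only subtlety worth flagging is the asymmetry between the two parts: on the conditioning side the reference state is the marginal $\rho^B$, which any CPTP map updates correctly, whereas on the primary side the reference is $\I^A$, which is only preserved by \emph{unital} maps. This is precisely why the hypothesis of unitality on $\mathcal{F}$ appears in the second inequality but no analogous hypothesis is needed on $\mathcal{E}$ in the first, and it mirrors the asymmetry already present in Fact~\ref{fact:unitalhmax}.
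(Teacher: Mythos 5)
Your proposal is correct and uses essentially the same argument as the paper: the first inequality is the standard data-processing inequality for $D_H^{\eps}$, and the second is the adjoint (Heisenberg-picture) pull-back of the optimizing test operator, with unitality of $\mathcal{F}$ entering exactly where you flag it, to preserve the reference operator $\I^A\otimes\rho^B$. The only cosmetic difference is that you package both cases as one DPI for $D_H^{\eps}$ with a general second argument, whereas the paper runs the adjoint argument directly on the optimizer of $H_H^{\eps}(C|B)$; the content is identical.
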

\begin{proof}
       The proof of the first inequality follows directly from the data-processing inequality for $D_H^{\eps}(\cdot||\cdot)$ and the definition of $H_H^{\eps}(A|B)_{\rho}$. For the second inequality, note that by definition, we know that:
    \[
    \exp\left(H_H^{\eps}(C|B)_{\left(\mathcal{F}^A\otimes \I^B\right)(\rho)}\right)= \min\limits_{\substack{\Pi^{CB}~:~0\leq \Pi^{CB}\leq I^{CB} \\ \hspace{3em}\Tr\left[\Pi\left(\mathcal{F}^A\otimes \I^B\right)(\rho^{AB})\right]\geq 1-\eps}} \Tr\left[\Pi^{CB}\left(I^{C}\otimes \rho^B\right)\right].
    \]
    Let $\Pi^*$ be the optimising operator in the expression of $H_H^{\eps}(C|B)_{\left(\mathcal{F}^A\otimes \I^B\right)(\rho)}$. We will show that the operator $(\mathcal{F}^{\dagger})^{C\to A}\otimes \I^{B}(\Pi^*)$ is a candidate optimiser for $H_H^{\eps}(A|B)_{\rho}$, where $\mathcal{F}^{\dagger}$ is the adjoint of $\mathcal{F}$. Firstly, note that since $\mathcal{F}$ is unital and completely positive, $\mathcal{F}^{\dagger}$ is trace preserving and completely positive, i.e., CPTP. Also, since $\mathcal{F}$ is trace preserving, $\mathcal{F}^{\dagger}$ is unital. Note that since CPTP maps preserve operator inequalities, it holds that:
    \begin{align*}
        0 &\leq \left(\mathcal{F}^{\dagger C}\otimes \I^B\right)(\Pi^*) \\
        &\leq \left(\mathcal{F}^{\dagger C}\otimes \I^B\right)(I^{C}\otimes I^B) \\
        &\overset{(a)}{=} I^{A}\otimes I^B.
    \end{align*}
    In equality $(a)$ we have used the fact that $\mathcal{F}^{\dagger}$ is unital, which implies that $\mathcal{F}^{\dagger}(I^C)=I^A$.    With these observations in hand, note that the following holds:
    \begin{align*}
       \Tr\left[\Pi^*\left(\mathcal{F}^A\otimes \I^B\right)(\rho^{AB})\right] &=  \ip{\Pi^*}{\left(\mathcal{F}^A\otimes \I^B\right)(\rho^{AB})} \\
       &= \ip{\left(\mathcal{F}^{\dagger C}\otimes \I^B\right)(\Pi^*)}{\rho^{AB}} \\
       &= \Tr\left[\left(\mathcal{F}^{\dagger C}\otimes \I^B\right)(\Pi^*)\rho^{AB}\right] \\
       &\geq 1-\eps.
    \end{align*}
    This implies that $\left(\mathcal{F}^{\dagger C}\otimes \I^B\right)(\Pi^*)$ is a candidate optimiser for $H_H^{\eps}(A|B)_{\rho}$, which in turn implies that:
    \begin{align*}
        \exp(H_H^{\eps}(A|B)_{\rho}) &\leq \Tr\left[\left(\mathcal{F}^{\dagger C}\otimes \I^B\right)(\Pi^*)\left(I^A\otimes \rho^B\right)\right] \\
        &= \ip{\left(\mathcal{F}^{\dagger C}\otimes \I^B\right)(\Pi^*)}{I^A\otimes \rho^B}\\
        &= \ip{\Pi^*}{\left(\mathcal{F}^{A}\otimes \I^B\right)\left(I^A\otimes \rho^B\right)}\\
        &\overset{(b)}{=} \ip{\Pi^*}{I^C\otimes \rho^B}\\
        &= \Tr\left[\Pi^*\left(I^C\otimes \rho^B\right)\right]\\
        &= \exp\left(H_H^{\eps}(C|B)_{\left(\mathcal{F}^A\otimes \I^B\right)(\rho)}\right),
    \end{align*}
    where in equality $(b)$ we have used the fact that $\mathcal{F}$ is unital.
\end{proof}

\section{Definitions: $\eps$-Purity, Local and Distributed Purity Distillation}\label{sec:defpurity}
In this section we present the formal definitions of the $\eps$-purity of a state and the tasks of local and distributed purity distillation.

\subsection{$\eps$-Purity and Allowable Local Operations}
We will first define the $\eps$-purity of a state:
\begin{definition}{\bf $\eps$-Purity}
    Given a state $\rho^A$, the $\eps$-purity of $\rho^A$ is defined to as the number $\log \abs{A}-H_H^{\eps}(A)_{\rho}$.
\end{definition}

As mentioned in the introduction, the notion of $\eps$-purity puts a bound on the number of single qubit pure states $\ket{0}$ that may be extracted from a given state $\rho$. To make this connection precise, we have to list the kinds of local operations that a party is allowed to perform on $\rho$ to extract pure states from it. It is crucial that these operations do not increase the $\eps$-purity of the state. To that end, we consider below a list of allowed local operations. We later show in Lemma \ref{lem:puritynonincreasing} that indeed the operations listed below cannot increase the $\eps$-purity of a given state.

\begin{definition}{\bf Allowable Local Operations}\label{def:allowableoperations}
    Given a state $\rho^A$, we allow the following operations to be performed on the system $A$:
    \begin{enumerate}
        \item Appending a register $A_{\mix}$ to the system $A$, where the state on $A_{\mix}$ is maximally mixed.
        \item Unitary operations.
        \item Local completely dephasing maps $\mathcal{P}$.\item Tracing out a subsystem.
    \end{enumerate}
    Along with the above operations, we will also allow appending pure states $\ketbra{0}$ to the system $A$ in a register $C_{\pure}$. To account for this, we require that the formula for the $\eps$-purity of the state on $AC_{\pure}$ be modified as follows:
    \[
    \log \abs{AC_{\pure}}-H_H^{\eps}(AC_{\pure})_{\rho\otimes \ketbra{0}}-\log \abs{C_{\pure}}.
    \]
\end{definition}
\subsection{Local Purity Distillation}

We will now give an operational interpretation to the $\eps$-purity, by building protocols out of the allowable operations which extract pure states from the given input state. To do this, we first define the notion of a local purity distillation code:
\begin{definition}{\bf (Local Purity Distillation Code)}
Given a quantum state $\rho^A$ in the register $A$, we define a $\eps$ local purity distillation code as a sequence of allowable operations which produce a state $\sigma^{A_p}$, with the property that:
\[
\norm{\sigma^{A_p}-\ketbra{0}^{A_p}}_1 \leq \eps.
\]
The rate of the code is given by
\[
R_{\textup{local}}^{\eps} \coloneqq \log \abs{A_p} - \log \abs{C_{\pure}}.
\]
\end{definition}   
A rate $R$ is said to be $\eps$-achievable for local purity distillation with respect to the state $\rho^A$ if there exists an $\eps$ purity distillation code such that
\[
R^{\eps}_{\textup{local}}=R-O(\log \frac{1}{\eps})
\]

\begin{definition}{\bf ($\eps$-Local Distillable Purity)}
    Given a state $\rho^A$, the \emph{$\eps$-local distillable purity} $\kappa_{\eps}(\rho^A)$ is defined as the supremum over all $\eps$-achievable rates $R$ for local purity distillation.
\end{definition}

\subsection{Distributed Purity Distillation}

As mentioned in the introduction, the main topic of this paper is the task of distributed purity distillation. In this task we envision two parties, Alice and Bob, each of whom possess a share of a bipartite quantum state $\rho^{AB}$. The goal is for them to coordinate and extract pure states from this shared state. Under the supposition that Alice and Bob are allowed only local allowable operations on their systems $A$ and $B$, they can each perform an optimal local purity distribution protocol, and recover pure states roughly at the rate $\log \abs{AB}-H_H^{\eps^2}(A)_{\rho}-H_H^{\eps^2}(B)_{\rho}$. However, note that in this setup since we did not allow any communication between Alice and Bob, this is the best that they can do. The question then is that whether given the ability to communicate, can they do better?

We must keep in mind that whatever communication channel we introduce must be implementable by composing some allowable operations. This naturally leads us to the following definition of a distributed purity distillation protocol:

\begin{definition}{\bf (Distributed Purity Distillation (DPD))}\label{def:dfd}
Given a bipartite quantum state $\rho^{AB}$ to two parties Alice and Bob, where Alice has access to the register $A$ and Bob has access to the register $B$. A distributed purity distillation protocol with error $\eps$ is then defined as a protocol consisting of :
\begin{enumerate}
    \item Local allowable operations on the system $A$.
    \item A completely dephasing channel $\mathcal{P}^{X_A\to X_B}$, where the system $X_A$ is generated at Alice's end and $X_B$ is a classical register belonging to Bob.
    \item Local allowable operations on the system $B$.
\end{enumerate}
Suppose that the state generated at the end of the protocol is $\sigma^{A_pB_p}$, where the system $A_p$ belongs to Alice and $B_p$ belongs to Bob. We require that:
\[
\norm{\sigma^{A_pB_p}-\ketbra{0}^{A_p}\otimes \ketbra{0}^{B_p}}_1 \leq \eps.
\]
\end{definition}

The rate of the protocol is defined as
\[
R_{\textup{dist}}^{\eps}\coloneqq \log \abs{A_p} +\log \abs{B_p} -\log \abs{C}
\]
where the system $C\cong C_{\alice}\otimes C_{\bob}$ accounts for the local pure ancilla qubits borrowed by both Alice and Bob in the registers $C_{\alice}$ and $C_{\bob}$ respectively.

In this paper, as in \cite{KroviDevetak}, we will be concerned with DPD protocols with bounded classical communication from Alice to Bob. To that end, we introduce the following definition:

\begin{definition}{\bf (DPD with Bounded Classical Communication)}\label{def:dfdbc}
Given a bipartite state $\rho^{AB}$, we define a $(R^{\eps}_{\textup{dist}}, C_{\textup{com}}^{\eps}, \eps)$ protocol as a distributed purity distillation protocol with error $\eps$, as defined in Definition \ref{def:dfd}, where it holds that
\[
\log \abs{X_B}\leq C_{\textup{com}}^{\eps}.
\]
where $X_B$ is the classical output register of the perfectly dephasing channel $\mathcal{P}^{X_A\to X_B}$.
\end{definition}
A rate pair $(R_{\textup{pure}},C_{\textup{classical}})$ is said to be $\eps$-achievable for DPD with bounded classical communication if there exists a $(R^{\eps}_{\textup{dist}}, C_{\textup{com}}^{\eps}, \eps)$ protocol such that:
\begin{align*}
    R^{\eps}_{\textup{dist}} &= R_{\textup{pure}}-O(\log \frac{1}{\eps})\\
   C_{\textup{com}}^{\eps}  &\leq C_{\textup{classical}}+O(\log \frac{1}{\eps}).
\end{align*}

\begin{definition}{\bf ($\eps$ $1$-way Distillable Purity)}
    Given a state $\rho^{AB}$ and $C_{\textup{classical}}\geq 0$, the $\eps$ $1$-way distillable purity $\kappa^{\to}_{\eps}(\rho^{AB}, C_{\textup{classical}})$ is defined as the supremum of $R_{\textup{pure}}$ over all $\eps$-achievable rates $(R_{\textup{pure}}, C_{\textup{classical}})$ for distributed purity distillation.
\end{definition}

\begin{remark}
    We will use the notation $\kappa_{\eps}^{\to}(\rho^{AB}, \infty)$ to indicate the $1$-way distillable purity in the case when we allow unbounded but finite classical communication.
\end{remark}

\section{Optimal Protocols for Local Purity Distillation}\label{sec:localprotocols}

In this section we will show that given a state $\rho^A$, any finite sequence of allowable operations cannot increase the $\eps$-purity of this state. We will then give an operational interpretation of the $\eps$-purity, by constructing a local purity distillation code which extracts pure states from the given state at a rate which is almost equal to the $\eps$-purity. We will also show that the $\eps$-purity is the best rate of pure state production which any local purity distillation code can hope to achieve.

\begin{lemma}\label{lem:puritynonincreasing}
    The $\eps$-purity of a state $\rho^A$ is non-increasing under allowable local operations.
\end{lemma}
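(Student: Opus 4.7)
The plan is to verify non-increase of the $\eps$-purity $\log \abs{A}-H_H^{\eps}(A)_{\rho}$ (with the appropriate catalytic correction when pure ancilla are present) separately for each of the operations listed in Definition \ref{def:allowableoperations}. In every case the strategy is the same: quote an already established lemma on $H_H^{\eps}$ that matches the way the register sizes change, so that the dimension term and the entropy term either cancel exactly or the entropy term dominates.

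For unitary operations, the optimiser $\Pi$ in the definition of $H_H^{\eps}(A)_{\rho}$ may be conjugated by $U^{\dagger}$ to produce an equally good optimiser $U^{\dagger}\Pi U$ for $U\rho U^{\dagger}$, so $H_H^{\eps}$ is invariant and, since $\abs{A}$ is unchanged, the purity is preserved. For appending a maximally mixed ancilla on a fresh register $A_{\mix}$, Lemma \ref{lem:additivehh} gives $H_H^{\eps}(AA_{\mix})_{\rho\otimes \pi}=H_H^{\eps}(A)_{\rho}+\log \abs{A_{\mix}}$, which exactly cancels the extra $\log \abs{A_{\mix}}$ dimension term in the purity formula. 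For appending a pure ancilla $\ketbra{0}^{C_{\pure}}$, the modified purity formula in Definition \ref{def:allowableoperations} is $\log \abs{AC_{\pure}}-H_H^{\eps}(AC_{\pure})_{\rho\otimes \ketbra{0}}-\log \abs{C_{\pure}}$, and Lemma \ref{lem:puretensorsame} shows $H_H^{\eps}(AC_{\pure})_{\rho\otimes \ketbra{0}}=H_H^{\eps}(A)_{\rho}$, so the $\log \abs{C_{\pure}}$ correction exactly kills the only other new term.

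The two remaining operations require genuine inequalities rather than equalities. For the completely dephasing map $\mathcal{P}^{A\to A}$, which is a unital CPTP map leaving the dimension unchanged, I invoke the unconditioned version of Lemma \ref{lem:dataprocessinghh} (taking the conditioning system trivial) to get $H_H^{\eps}(A)_{\rho}\leq H_H^{\eps}(A)_{\mathcal{P}(\rho)}$, so $\log \abs{A}-H_H^{\eps}(A)$ can only decrease. For tracing out a subsystem, starting from $\rho^{AB}$ and discarding $B$, I would compare the new purity $\log \abs{A}-H_H^{\eps}(A)_{\rho^A}$ with the old one $\log \abs{AB}-H_H^{\eps}(AB)_{\rho}$; the required inequality $H_H^{\eps}(AB)_{\rho}\leq H_H^{\eps}(A)_{\rho^A}+\log \abs{B}$ is exactly the content of Corollary \ref{corol:upperboundhhdim}.

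The one subtlety worth flagging is the bookkeeping of pure ancilla across a longer protocol: the purity formula in use is always the modified one that discounts whatever $C_{\pure}$ is currently borrowed. Since none of the other four operations touches $C_{\pure}$, the single-step analyses above apply verbatim with $AC_{\pure}$ in place of $A$, so no new ideas are needed. The most substantive individual step is the partial-trace case, but the relevant inequality is already packaged as Corollary \ref{corol:upperboundhhdim}, whose proof uses the unitary $1$-design randomisation to reduce to Lemma \ref{lem:additivehh}; so the proof of the present lemma amounts to a clean case analysis citing the lemmas of Section \ref{sec:def}.
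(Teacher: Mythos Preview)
Your proposal is correct and follows essentially the same case analysis as the paper's own proof, invoking the same lemmas (Lemma~\ref{lem:additivehh}, Lemma~\ref{lem:puretensorsame}, Lemma~\ref{lem:dataprocessinghh}, and Corollary~\ref{corol:upperboundhhdim}) for the respective operations. The only cosmetic difference is that the paper explicitly writes the dephasing map as acting on a subsystem $X_1$ of $A=A'X_1$, whereas you treat it as $\mathcal{P}^{A\to A}$; since $\I^{A'}\otimes\mathcal{P}^{X_1}$ is itself unital CPTP on $A$, this is immaterial.
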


\begin{proof}
     Recall that, given a state $\rho^A$, the following local operations are allowed:
    \begin{enumerate}
        \item Introducing a maximally mixed state in a register $A_{\mix}$.
        \item Introducing pure states in a register $C_{\pure}$ which must be accounted for.
        \item Unitary operations.
        \item A local completely dephasing channel $\mathcal{P}$.
        \item Discard (trace out) a subsystem.
    \end{enumerate}
    We will show that each of the above operations do not increase the $\eps$-purity of $\rho^A$ i.e., $\log \abs{A}-H_H^{\eps}(A)_{\rho}$.\\
    \vspace{1mm}
    
    \noindent {\bf Appending $\pi^{A_{\mix}}$}:\\ The state under consideration is now $\rho^A\otimes \pi^{A_{\mix}}$. Then, the following holds:
    \begin{align*}
        &\log \abs{AA_{\mix}}-H_H^{\eps}(AA_{\mix})_{\rho\otimes\pi} \\
        \overset{(a)}{=}~& \log \abs{AA_{\mix}}-H_H^{\eps}(A)_{\rho}-\log \abs{A_{\mix}}\\
       = ~&\log \abs{A}-H_H^{\eps}(A)_{\rho},
    \end{align*}
    where equality $(a)$ follows from Lemma \ref{lem:additivehh}.\\
    \vspace{1mm}

    \noindent{\bf Appending $C_{\textup{pure}}$}:\\ In this case, recall that Definition \ref{def:allowableoperations} requires the formula for the $\eps$-purity to be adjusted with a correction term $-\log \abs{C_{\textup{pure}}}$. With this correction and the fact that $H_H^{\eps}(AC_{\textup{pure}})_{\rho\otimes \ketbra{0}}=H_H^{\eps}(A)_{\rho}$ (see Lemma \ref{lem:puretensorsame}) it is trivial to see that the $\eps$-purity does not change.\\
    \vspace{1mm}

    \noindent{\bf Unitary Operations}:\\ In this case, suppose that a unitary operator $U^A$ acts on $\rho^A$ to give $\sigma^A$. The unitary invariance of $H_H^{\eps}(\cdot)$ implies that $H_H^{\eps}(A)_{\rho}=H_H^{\eps}(A)_{\sigma}$. This directly implies that unitary operations keep the $\eps$-purity invariant.\\
    \vspace{1mm}

    \noindent{\bf Completely Dephasing Maps}:\\ In this case, suppose that the system $A$ is comprised of the registers $A'X_1$, and there exists a completely dephasing map $\mathcal{P}^{X_1\to X_2}$, where $\abs{X_1}=\abs{X_2}$. Since there exists a natural isomorphism between $A$ and $A'X_1$, we can write the following:
    \begin{align*}
        H_H^{\eps}(A)_{\rho} &= H_H^{\eps}(A'X_1)_{\rho} \\
        &\overset{(b)}{\leq} H_H^{\eps}(A'X_2)_{\I^{A'}\otimes \mathcal{P}^{X_1}(\rho)}, 
    \end{align*}
    where we have used the fact that the map $\I^{A'}\otimes \mathcal{P}^{X_1}$ is a unital CPTP map and Lemma \ref{lem:dataprocessinghh} in step $(b)$. This directly implies that the $\eps$-purity is non-increasing under these maps, since $\log \abs{A}=\log \abs{A'X_1}$.\\
    \vspace{1mm}

    \noindent{\bf Discarding Subsystems}:\\ Again, suppose that $A$ is comprised of the systems $A''G$, where the system $G$ is to be discarded. Suppose that the state after discarding $G$, on the system $A''$ is $\sigma^{A''}$. Then, the following holds:
    \begin{align*}
        H_H^{\eps}(A)_{\rho} &= H_H^{\eps}(A''G)_{\rho}\\
        &\overset{(c)}{\leq } H_H^{\eps}(A'')_{\sigma}+\log \abs{G}.
    \end{align*}
    Step $(c)$ follows from Corollary \ref{corol:upperboundhhdim}. This implies that:
    \begin{align*}
        &\log \abs{A}-H_H^{\eps}(A)_{\rho}\\
        =~&\log \abs{A''G}-H_H^{\eps}(A''G)_{\rho} \\
        \geq ~&\log \abs{A''G}-H_H^{\eps}(A'')_{\sigma}-\log \abs{G}\\
        =~&\log \abs{A''}-H_H^{\eps}(A'')_{\sigma}.
    \end{align*}
This concludes the proof.
\end{proof}

We will now provide an operational interpretation of the $\eps$-purity, by exhibiting an $\eps$ purity distillation code which recovers pure states from the given input state at a rate almost equal to the $\eps$-purity of the input state. Such a protocol was shown to exist in \cite[Theorem 1.7]{CNB23}. Further using Fact \ref{lem:equivHHtildeHmax} along with the results of \cite{CNB23}, we get the following fact:

\begin{fact}{\bf {Lower Bound for $\kappa_{\eps}(\rho^A)$}}\label{lem:localLB}
Given a quantum state $\rho^A$, there exists an $\eps$ purity distillation code with rate
\[
R^{\eps}_{\textup{local}} = \log \abs{A} - H_H^{\eps^2/9}(A)_{\rho}+O(\log \eps)-1
\]
This also implies that:
\[
\kappa_{\eps}(\rho^A)\geq \log \abs{A} - H_H^{\eps^2/9}(A)_{\rho}+O(\log \eps)-1.
\]
In fact, the $\eps$ purity distillation code which achieves the above lower bound consists only of a unitary operator $U^A$ acting on the system $A$, and does not require any other allowable operations.
\end{fact}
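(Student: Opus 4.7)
The plan is to construct the unitary $U^A$ directly from the ascending eigendecomposition of $\rho^A$ and then translate the resulting support-based rate into an $H_H^{\eps^2/9}$ rate via Lemma~\ref{lem:equivHHtildeHmax}. Diagonalise $\rho^A = \sum_i \lambda_i \ketbra{v_i}^A$ in ascending order and let $k$ be the largest index with $\sum_{i\leq k}\lambda_i \leq \eps^2/9$, so that by definition $\thmax^{\eps^2/9}(A)_{\rho} = \log(\abs{A}-k)$. Split $\rho = \rho_{\good} + \rho_{\mathrm{bad}}$, where $\rho_{\good}\coloneqq \sum_{i > k}\lambda_i \ketbra{v_i}^A$ has $\Tr(\rho_{\good}) \geq 1-\eps^2/9$ and $\rho_{\mathrm{bad}}$ is supported on the $k$ smallest eigenvectors with $\Tr(\rho_{\mathrm{bad}}) \leq \eps^2/9$.

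Next I would factor $A \cong A_p \otimes A_r$ with $\log\abs{A_r} = \lceil \log(\abs{A}-k)\rceil$ (defaulting to $A_r = A$ in the degenerate regime where this would exceed $\log\abs{A}$, in which case the claimed bound is vacuous). Since $\abs{A_r} \geq \abs{A}-k$, the $(\abs{A}-k)$-dimensional subspace $\textup{supp}(\rho_{\good})$ isometrically embeds into the $\abs{A_r}$-dimensional subspace $\ket{0}^{A_p} \otimes A_r$ of $A$, and I take $U^A$ to be any unitary on $A$ realising such an embedding on $\textup{supp}(\rho_{\good})$ and extended arbitrarily on the orthogonal complement. By construction the support of $U\rho_{\good}U^{\dagger}$ lies inside $\ket{0}^{A_p} \otimes A_r$, so its $A_p$-marginal equals $\Tr(\rho_{\good})\ketbra{0}^{A_p}$.

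I would then verify the error condition on the $A_p$-marginal of $U\rho U^{\dagger}$. By the triangle inequality,
\[
\bigl\|\Tr_{A_r}\bigl(U\rho U^{\dagger}\bigr) - \ketbra{0}^{A_p}\bigr\|_1 \;\leq\; \bigl(1-\Tr(\rho_{\good})\bigr) + \Tr(\rho_{\mathrm{bad}}) \;\leq\; 2\eps^2/9 \;\leq\; \eps,
\]
so declaring $A_p$ to be the output register yields an $\eps$-approximate pure state, and the entire protocol is indeed the single unitary $U^A$ with no other allowable operation needed. The rate achieved is
\[
\log\abs{A_p} \;=\; \log\abs{A} - \lceil \log(\abs{A}-k)\rceil \;\geq\; \log\abs{A} - \thmax^{\eps^2/9}(A)_{\rho} - 1,
\]
and Lemma~\ref{lem:equivHHtildeHmax} converts this to $\log\abs{A_p} \geq \log\abs{A} - H_H^{\eps^2/9}(A)_{\rho} - 2$, matching the stated lower bound with the additive constant absorbed into the $O(\log\eps)$ slack.

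The conceptual content is the gentle-measurement-style observation that after discarding the smallest eigenvalue tail (total mass at most $\eps^2/9$), the remaining support can be unitarily rotated into a pre-selected pure subspace. The main technical subtlety is the rounding of $\abs{A}-k$ to a power of two, which is the source of the explicit $-1$ in the statement; this is where the bookkeeping requires the most care, although conceptually it only loses at most one qubit. The conversion from $\thmax$ to $H_H$ at the end is exactly Lemma~\ref{lem:equivHHtildeHmax}, and it is this step that promotes the spectral statement into the desired $H_H$-based rate.
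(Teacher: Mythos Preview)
Your proposal is correct and is essentially the argument the paper relies on: the paper cites this statement as a fact from \cite{CNB23} together with Lemma~\ref{lem:equivHHtildeHmax} rather than proving it in full, but the same spectral-tail-plus-embedding construction you give is spelled out verbatim in the proof of Proposition~\ref{prop:firstProtocol}. Your constant additive loss of $2$ is in fact tighter than the stated $O(\log\eps)-1$ slack, which is harmless.
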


The following theorem encapsulates out discussion so far and connects the local distillable purity with the $\eps$-purity by showing that the latter is an upper bound for the former:

\begin{theorem}\label{thm:local}
    Given a quantum state $\rho^A$ and $\eps>0$, the $\eps$ local distillable purity of the state $\kappa_{\eps}(\rho^A)$ satisfies the following bounds:
    \[
    \log \abs{A}-H_H^{\eps^2/9}(A)_{\rho}+O(\log\eps)-1\leq\kappa_{\eps}(\rho^A)\leq \log \abs{A}-H_H^{\eps}(A)_{\rho}
    \]
\end{theorem}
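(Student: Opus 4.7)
The lower bound is immediate: by Fact \ref{lem:localLB}, the stated local purity distillation code exists, so $\kappa_\eps(\rho^A)$ is at least that rate. All the work is in the upper bound, which will follow from combining two earlier results, Lemma \ref{lem:puritynonincreasing} (monotonicity of $\eps$-purity under allowable local operations) and Lemma \ref{lem:hhneg} (the $H_H^\eps$ of an $\eps$-approximation to $\ketbra{0}$ is non-positive).

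The plan is to fix an arbitrary $\eps$ local purity distillation code of rate $R^\eps_{\textup{local}} = \log|A_p|-\log|C_{\pure}|$, with output $\sigma^{A_p}$ satisfying $\|\sigma^{A_p}-\ketbra{0}^{A_p}\|_1\leq \eps$, and then track the $\eps$-purity through the protocol. The initial state is $\rho^A$, whose $\eps$-purity (by definition) equals $\log|A|-H_H^\eps(A)_\rho$. Any pure ancilla borrowed into $C_{\pure}$ enters the protocol accompanied by the correction term $-\log|C_{\pure}|$ mandated by Definition \ref{def:allowableoperations}; by Lemma \ref{lem:puretensorsame}, appending $\ketbra{0}^{C_{\pure}}$ leaves $H_H^\eps$ unchanged, so the adjusted $\eps$-purity at the outset is still exactly $\log|A|-H_H^\eps(A)_\rho$.

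Applying Lemma \ref{lem:puritynonincreasing} to the whole sequence of allowable operations (mixed-ancilla appending, unitaries, dephasing, partial traces) gives that the $\eps$-purity of the final state cannot exceed its initial value. Hence, after tracing down to the output register $A_p$,
\[
\log|A_p|-H_H^\eps(A_p)_\sigma-\log|C_{\pure}|\;\leq\;\log|A|-H_H^\eps(A)_\rho.
\]
Now invoke Lemma \ref{lem:hhneg} with the closeness assumption $\|\sigma^{A_p}-\ketbra{0}^{A_p}\|_1\leq \eps$ to conclude $H_H^\eps(A_p)_\sigma\leq 0$. Substituting yields
\[
R^\eps_{\textup{local}}\;=\;\log|A_p|-\log|C_{\pure}|\;\leq\;\log|A|-H_H^\eps(A)_\rho,
\]
and taking the supremum over all $\eps$-achievable rates $R$ (modulo the $O(\log\tfrac{1}{\eps})$ slack in the definition of $\eps$-achievability, which I will absorb into the stated bound) gives the claimed upper bound on $\kappa_\eps(\rho^A)$.

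There is no real obstacle here; the theorem is essentially the composition of two lemmas already in hand. The only point requiring care is bookkeeping: keeping the $-\log|C_{\pure}|$ correction attached throughout the protocol and making sure it survives partial traces, so that the final inequality correctly separates $\log|A_p|-\log|C_{\pure}|$ on the left from the input $\eps$-purity on the right.
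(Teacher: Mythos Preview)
Your proposal is correct and follows essentially the same approach as the paper: both invoke Fact~\ref{lem:localLB} for the lower bound, and for the upper bound both track the $\eps$-purity through an arbitrary protocol via Lemma~\ref{lem:puritynonincreasing}, then apply Lemma~\ref{lem:hhneg} to the output state to drop the $H_H^\eps(A_p)_\sigma$ term. The paper additionally spells out a canonical form for any local protocol (appending ancilla first, then alternating unitaries and dephasing, then discarding last) in Table~\ref{table:dummylocalprotocol}, but this is only a presentational device and your argument does not need it; your remark about the $O(\log\tfrac{1}{\eps})$ slack from the definition of $\eps$-achievability is in fact slightly more careful than the paper, which silently writes $\kappa_\eps(\rho^A)\leq \log|A_p|-\log|C_{\pure}|$ without commenting on that slack.
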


\begin{proof}
       The lower bound follows directly from Fact \ref{lem:localLB}. To get the upper bound, note that any $\eps$ purity distillation code is a sequence of allowable operations, which finally output a state $\sigma^{A_p}$, such that:
    \[
    \norm{\sigma^{A_p}-\ketbra{0}^{A_p}}_1\leq \eps.
    \]
    Using the allowed operations listed above, we will now characterise the form of any finite sequence of operations. To do this, we adopt the notation that any subsystem with the name $G_i$ (for some $i\in \mathbb{N}$) will be discarded at the end of the protocol. We also denote the final output state as $\sigma^{A_pG}$, where $A_p$ is to be retained and $G$ discarded. Note that without loss of generality we can assume that any registers which contain maximally mixed states or pure states can be introduced at the very beginning of the protocol, and any systems that are to be traced out can be discarded at the very end. Then, any general local purity distillation protocol takes the form in Table \ref{table:dummylocalprotocol}.
\begin{table}[h]
        \centering
            \begin{tabular}{  l c r }
            \hline \\
            \textbf{Alice} & \\
            \hline \\
            \hdashline\\
            & State $\rho^A$ &\\
            \hdashline\\
             Append $\pi^{A_{\mix}}\otimes \ketbra{0}^{C_{\pure}}$. \\
             \hdashline\\
              & State $\sigma^{AA_{\mix}C_{\pure}}\coloneqq \rho^A\otimes \pi^{A_{\mix}}\otimes \ketbra{0}^{C_{\pure}}$ \\
              \hdashline \\
              Unitary $U_1: {AA_{\mix}C_{\pure}}\to A_1X_1G_1$ & \\
              Channel $\mathcal{P} : X_1\to X_2G_2$ & \\
              Unitary $U_2: A_1X_2\to A_2 X_3G_3$ & \\
                {\bf $\ldots$} & \\
               \hdashline\\
               & State $\sigma^{A_pG}$ \\
               \hdashline\\
               Discard the subsystem $G$ & \\
               \hdashline\\
               & State $\sigma^{A_p}$\\
            \hline
\end{tabular}
        \caption{\bf General Schema of a Local Protocol} 
        \label{table:dummylocalprotocol}
    \end{table}

 Then, one can use Lemma \ref{lem:puritynonincreasing} at every step of the protocol iteratively, to see that:
\[
\log \abs{A}-H_H^{\eps}(A)_{\rho}\geq \log \abs{A_p}-H_H^{\eps}(A_p)_{\sigma}-\log \abs{C_{\textup{pure}}}.
\]
    However, using the requirement that $\sigma^{A_p}$ has to be close to the pure state $\ketbra{0}^{A_p}$ and invoking Lemma \ref{lem:hhneg}, we see that:
    \[
    \log \abs{A_p}-H_H^{\eps}(A_p)_{\sigma}\geq \log \abs{A_p}.
    \]
    Collating these arguments, we see that:
    \begin{align*}
        \kappa_{\eps}(\rho^A) &\leq \log \abs{A_p}-\log \abs{C_{\pure}} \\
        &\leq \log \abs{A_p}-H_H^{\eps}(A_p)_{\sigma}-\log \abs{C_{\pure}}\\
        &\leq \log \abs{A}-H_H^{\eps}(A)_{\rho}.
    \end{align*}
    This concludes the proof.
\end{proof}

In the following section, we will thus refer to the \emph{locally optimal} protocol, in reference to Theorem \ref{thm:local}. Note that this locally optimal protocol consists only of a a unitary operator $U^A$ acting on the system $A$, as given in Fact \ref{lem:localLB}.

\section{Distributed Protocols with Ancilla: Upper Bounds}\label{sec:UBdistributed}

In this section we prove a one-shot upper bound on the number of qubit states that Alice and Bob can hope to distil, given the setting of the distributed purity distillation problem with classical communication bounded by the rate $C_{\textup{classical}}$. Throughout the rest of this section, to impose the bound on classical communication, we make the following assumption:
\begin{assumption}\label{asmptn:size}
    The completely dephasing channel $\mathcal{P}^{X_A\to X_B}$ is such that
    \[
    \log \abs{X_B}\leq C_{\textup{classical}},
    \]
    where $C_{\textup{classical}}$ is the maximum allowable rate of classical communication.
\end{assumption}

\subsection*{Notation}

In our proofs of the upper bounds for distributed purity distillation, we will have to deal with several entropic quantities related to states which exist at different times during the protocol. For example, we may use a relation of the form $H_H^{\eps}(A_pA_gX_A)\geq H_H^{\eps}(AC_{\alice}A_{\mix})$. These two entropic quantities correspond to two different states, related by Alice's application of her local operations. In the interest of brevity, we will not explicitly spell out the state corresponding to which these registers are defined. However, in all cases the state and the point in the protocol when that state exists will be clear from the context provided by Table \ref{table:dummyprotocol}.\\
\vspace{1mm}

\noindent We will also require the following lemma:

\begin{lemma}\label{lem:upperbounddist}
Given a quantum state $\rho^{AB}$ with the $A$ register belonging to Alice and the $B$ register belonging to Bob, any distributed purity distillation protocol making error at most $\eps$ can achieve a rate at most
\[
R^{\eps}_{\textup{dist}}\leq \log \abs{A}+\log \abs{B}-H_{\max}^{g(\eps)}(A)-H_{\min}^{f(\eps)}(B~|~X_B)+2\log \eps,
\] 
where the entropic quantities are computed with respect to states as defined in Table \ref{table:dummyprotocol}.
\end{lemma}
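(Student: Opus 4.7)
The plan is to separate the upper bound into two pieces, one charged to Alice and one to Bob, each proved by a monotonicity argument: smoothed max entropy for Alice and conditional smoothed min entropy for Bob, both exploiting the fact that every operation in the protocol other than the final partial traces is a unital CPTP map. In notation I will assume as in Table~\ref{table:dummylocalprotocol} that $\theta_1 = \rho^{AB}\otimes\ketbra{0}^{C_{\alice}}\otimes \pi^{A_{\mix}}$ is the state after Alice appends her ancillas, $\theta_2$ the joint state on $A_p A_g X_B B$ after Alice's local operations and the dephasing channel, $\theta_3 = \theta_2\otimes\ketbra{0}^{C_{\bob}}\otimes \pi^{B_{\mix}}$, and $\theta_4$ the joint state on $A_p A_g B_p B_g$ after Bob's local operations; $\sigma^{A_pB_p}=\Tr_{A_gB_g}[\theta_4]$ is then the $\eps$-close-to-$\ketbra{0}^{A_p}\otimes \ketbra{0}^{B_p}$ output.

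For Alice's piece I proceed as follows. Appending a pure register does not change $H_{\max}^\eps$ and appending a maximally mixed $A_{\mix}$ contributes exactly $\log|A_{\mix}|$, so $H_{\max}^\eps(AC_{\alice}A_{\mix})_{\theta_1} = H_{\max}^\eps(A)_\rho+\log|A_{\mix}|$. Since all of Alice's operations (unitaries and local dephasing) are unital CPTP, Fact~\ref{fact:unitalhmax} gives $H_{\max}^\eps(A_pA_gX_A)_{\theta_2} \geq H_{\max}^\eps(A)_\rho+\log|A_{\mix}|$. On the upper side, $\sigma^{A_p}$ is $\eps$-close to $\ketbra{0}^{A_p}$ (partial trace is contractive), hence $H_{\max}^{g(\eps)}(A_p) \leq 0$ for a suitable $g$, and a max-entropy subadditivity bound $H_{\max}^{g(\eps)}(A_pA_gX_A)\leq H_{\max}^{g(\eps)}(A_p)+\log|A_gX_A|$ yields $\log|A_gX_A| \geq H_{\max}^{g(\eps)}(A)_\rho + \log|A_{\mix}| - O(\log 1/\eps)$. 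Plugging in the dimension identity $|A_pA_gX_A|=|A||C_{\alice}||A_{\mix}|$ and rearranging delivers $\log|A_p|-\log|C_{\alice}|\leq \log|A|-H_{\max}^{g(\eps)}(A)_\rho + \log\eps$.

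For Bob's piece I track the conditional smooth min entropy with $X_B$ as the conditioning register. Appending Bob's ancillas gives $H_{\min}^\eps(BC_{\bob}B_{\mix}|X_B)_{\theta_3} = H_{\min}^\eps(B|X_B)_{\theta_2}+\log|B_{\mix}|$. Bob's operations act as unital CPTP on his quantum side (classically controlled by $X_B$, which is untouched and persists as a classical register), so applying the conditional unital-CPTP monotonicity of Fact~\ref{fact:unitalhmax} branch-wise in $X_B$ gives $H_{\min}^\eps(B_pB_g|X_B)_{\theta_4}\geq H_{\min}^\eps(B|X_B)_{\theta_2}+\log|B_{\mix}|$. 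For the upper bound, since $\sigma^{A_pB_p}\approx \ketbra{00}$, the marginal $\sigma^{B_p}$ is $\eps$-close to $\ketbra{0}^{B_p}$, and by a standard continuity-of-purification argument the joint with any other register must be close to a tensor product, i.e.\ $\sigma^{B_pX_B}\approx \ketbra{0}^{B_p}\otimes\sigma^{X_B}$, giving $H_{\min}^{f(\eps)}(B_p|X_B)\leq 0$. A min-entropy chain rule $H_{\min}^{f(\eps)}(B_pB_g|X_B)\leq H_{\min}^{f(\eps)}(B_p|X_B)+\log|B_g|$ (which follows directly from comparing the $\sigma^{X_B}$ in the defining operator inequality) then produces $\log|B_g|\geq H_{\min}^{f(\eps)}(B|X_B)+\log|B_{\mix}|-O(\log 1/\eps)$. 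Since $X_B$ is a classical control and not a quantum output of Bob's map, the dimension identity reads $|B_pB_g|=|B||C_{\bob}||B_{\mix}|$, and rearranging yields $\log|B_p|-\log|C_{\bob}|\leq \log|B|-H_{\min}^{f(\eps)}(B|X_B)_{\theta_2}+\log\eps$. Adding the Alice and Bob bounds gives the claim, with the two accumulated $\log\eps$ absorbed into $+2\log\eps$.

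The main obstacles I anticipate are two delicate bookkeeping points. First, one must rigorously pass from ``$\sigma^{B_p}$ is $\eps$-close to a pure state'' to ``$\sigma^{B_pX_B}$ is $O(\sqrt\eps)$-close to the product $\ketbra{0}^{B_p}\otimes \sigma^{X_B}$'' in order to convert closeness-to-purity into an upper bound on $H_{\min}^{f(\eps)}(B_p|X_B)$; this is where the $f(\eps)$ smoothing parameter comes from. Second, one has to justify the chain-rule style inequalities $H_{\max}^{g(\eps)}(XY)\leq H_{\max}^{g(\eps)}(X)+\log|Y|$ and $H_{\min}^{f(\eps)}(XY|Z)\leq H_{\min}^{f(\eps)}(X|Z)+\log|Y|$ with matching smoothing parameters; both follow from the operator-inequality definitions by partial trace, but care is needed so that the $O(\log 1/\eps)$ slack does not blow up beyond the claimed $+2\log\eps$ correction.
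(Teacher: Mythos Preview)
Your Alice piece is sound in spirit and close to what the paper does there. The genuine gap is in your treatment of Bob.

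You assume that $X_B$ ``is untouched and persists as a classical register'' so that Bob's map is classically controlled by $X_B$, which lets you (i) propagate $H_{\min}^\eps(\cdot\mid X_B)$ through Bob's operations via Fact~\ref{fact:unitalhmax} branch-wise, and (ii) use the dimension identity $|B_pB_g|=|B||C_{\bob}||B_{\mix}|$. Neither is justified by the protocol in Table~\ref{table:dummyprotocol}. After the dephasing channel the joint state is indeed cq in $X_B$, but Bob is then allowed to apply \emph{arbitrary} local unitaries and local dephasings on $BB_{\mix}C_{\bob}X_B$; nothing forces these to commute with the computational basis on $X_B$, and the general protocol consumes $X_B$ into $B_pB_g$. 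Consequently the correct dimension identity is $|B_pB_g|=|B||C_{\bob}||B_{\mix}||X_B|$, not $|B||C_{\bob}||B_{\mix}|$, and the quantity $H_{\min}^\eps(B_pB_g\mid X_B)_{\theta_4}$ is not even defined because $X_B$ need not exist as a register in $\theta_4$. Your attempted fix of ``applying monotonicity branch-wise'' therefore has no object to act on; and you cannot restore the argument by copying $X_B$ first, since that copy would cost $\log|X_B|$ pure ancilla charged to $C_{\bob}$.

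The paper's proof avoids this trap by not tracking a conditional entropy through Bob at all. It lower-bounds $\log|A_gB_g|$ jointly, passes through $H_H^{\eps}$ and then $H_{\max}$, and the term $H_{\min}^{f(\eps)}(B\mid X_B)$ is extracted at the single moment right after the channel---where $X_B$ \emph{is} a well-defined classical register---via the Dupuis chain rule $H_{\max}^{3\sqrt\eps}(BX_B)\geq H_{\min}^{f(\eps)}(B\mid X_B)+H_{\max}^{3\sqrt\eps}(X_B)-O(\log\tfrac1\eps)$ (Fact~\ref{fact:dupuischainrules}). The $H_{\max}(X_B)$ so produced is then recombined with $H_{\max}(A_1)$ via subadditivity to reconstitute $H_{\max}(A_1X_A)$, which is exactly what absorbs the ``missing'' $\log|X_B|$ in the dimension count. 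If you want to salvage a two-piece argument, this chain-rule step at the channel boundary is the missing idea.
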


\begin{proof}
Before we start the proof, we will first characterise what any general distributed purity distillation protocol looks like:

\begin{table}[h]
        \centering
            \begin{tabular}{  l c r }
            \hline \\
            \textbf{Alice} & & \textbf{Bob} \\
            \hline \\
            \hdashline\\
& State $\rho^{AB}$ & \\
\hdashline\\
Append state $\pi^{A_{\mix}}\otimes \ketbra{0}^{C_{\alice}}$ & & Append state $\pi^{B_{\mix}}\otimes \ketbra{0}^{C_{\bob}}$\\
& & \hspace{1em}  \\
Allowable local operations & & \\
& & \hspace{1em}  \\
Create state $\sigma^{A_1X_A}$ & & \\
\hdashline\\
& State $\sigma^{A_1X_AB}$ & \\
\hdashline\\
& & \hspace{1em}  \\
& $\xrightarrow{X_A\to X_B}$ &\\
            Allowable local operations on $A_1$ & & Allowable local operations $BB_{\mix}C_{\bob}X_B$.\\
            & & \hspace{1em}  \\
            \hdashline\\
            & State $\sigma^{A_pA_gB_pB_g}$ & \\
            \hdashline\\
            Discard system $A_g$ & & Discard system $B_g$\\
            \hdashline\\
            & Final state $\sigma^{A_pB_p}$\\
            \hline
\end{tabular}
        \caption{\bf General Schema of a Distributed Protocol} 
        \label{table:dummyprotocol}
    \end{table}
\pagebreak
Note that in the general protocol, although we can roll all of Bob's actions together, we must treat Alice's actions before and after she sends the classical messages to Bob separately. As in the proof of Theorem \ref{thm:local}, we can assume without loss of generality that all systems that contain either maximally mixed states or pure states can be appended at the very beginning of the protocol and all systems to be traced out can be discarded at the end of the protocol. To that end, we make the convention that the actual state before both Alice and Bob discard some sub-systems is given by $\sigma^{A_pA_gB_pB_g}$, where $A_g$ and $B_g$ contain all systems that are to be discarded. Note that this means that the expression `Allowable local operations' during the protocol execution refers only to some finite sequence of local unitary operators and local completely dephasing maps.

Before we move on with the main proof, we will state a useful claim:

\begin{claim}\label{claim:inequalities}
    In reference to the protocol in Table \ref{table:dummyprotocol}, it holds, for any $\delta>0$, that:
    \begin{align*}
        &H_H^{\delta}(B_pB_g)\geq H_H^{\delta}(BB_{\mix}C_{\bob}X_B)\\
        & H_{\max}^{\delta}(A_pA_g)\geq H_{\max}^{\delta}(A_1)\\
        & H_{\max}^{\delta}(A_1X_A)\geq H_{\max}^{\delta}(AA_{\mix}C_{\alice}). 
    \end{align*}
\end{claim}
\begin{proof}
    Note that in going from $\sigma^{BB_{\mix}C_{\bob}X_B}$ to $\sigma^{B_pB_g}$, Bob uses either local unitary operators or completely dephasing maps. Since $H_H^{\delta}(\cdot)$ for any state is non-decreasing under these operations, the claim first inequality follows. Similar observations holds for Alice's actions in going from $\sigma^{A_1}$ to $\sigma^{A_pA_g}$, and from $\rho^{AA_{\mix}C_{\alice}}$ to $\sigma^{A_1X_A}$. Since the smooth max entropy is invariant under the action of isometries and non-increasing under the action of unital CPTP maps (Fact \ref{fact:unitalhmax}), the other two inequalities follow.
\end{proof}
 Next, note that:
\[
\log \abs{A_pB_p} - \log \abs{C_{\alice}C_{\bob}} =  \log \abs{ABA_{\mix}B_{\mix}}-\log \abs{A_gB_g}
\]
We will lower bound $\log \abs{A_gB_g}$ which will in turn allow us to upper bound $\log \abs{A_pB_p}-\log \abs{C_{\alice}C_{\bob}}$. Before we begin, we would like to point out that the  systems $X_A$ and $X_B$ are isomorphic, however, they differ in the fact that the system $X_B$ holds a classical state (diagonalisable with respect to the basis $\{\ket{x}\}$ of the completely dephasing channel) which is the output of the completely dephasing channel upon acting on the contents of the system $X_A$. Thus the state on the registers $X_BBB_{\mix}$ after Alice sends the contents of the register $X_A$ through the channel is a cq state (with pure qubits $\ket{0}^{C_{\bob}}$ in the register $C_{\bob}$ in tensor with the rest of the systems), while the state on the systems $A_pA_gX_A$ are \emph{not} cq in general. 

We will now lower bound $\log\abs{A_gB_g}$:
\begin{align*}
    \log\abs{A_gB_g} & \geq H_H^{\eps}(A_g) + H_H^{\eps}(B_g)+2\log \eps \\
                     & \geq H_H^{3\sqrt{\eps}}(A_pA_g) + H_H^{3\sqrt{\eps}}(B_pB_g)+2\log \eps \\
                     \intertext{The above inequality uses the subadditivity of the smooth hypothesis testing entropy twice, along with the fact that both $H_H^{\eps}(A_p)$ and $H_H^{\eps}(B_p)$ are 0. Thus, LHS is}
                     & \overset{(a)}{\geq} H_H^{3\sqrt{\eps}}(A_pA_g) + H_H^{3\sqrt{\eps}}(BX_BB_{\mix}C_{\bob})+2\log \eps \\
                     & \overset{(b)}{=} H_H^{3\sqrt{\eps}}(A_pA_g) + H_H^{3\sqrt{\eps}}(BX_B)+\log \abs{B_{\mix}}+2\log \eps \\
                     & \overset{(c)}{\geq} H_{\max}^{3\sqrt{\eps}}(A_pA_g)+ H_{\max}^{3\sqrt{\eps}}(X_B) + H_{\max}^{3\sqrt{\eps}}(BX_B)-H_{\max}^{3\sqrt{\eps}}(X_B)+\log \abs{B_{\mix}}+O(\log\eps) \\
                     & \overset{(d)}{\geq} H_{\max}^{3\sqrt{\eps}}(A_pA_g)+ H_{\max}^{3\sqrt{\eps}}(X_A) + H_{\min}^{f(\eps)}(B~|~X_B)+\log \abs{B_{\mix}}+O(\log\eps) \\
                     & \overset{(e)}{\geq} H_{\max}^{3\sqrt{\eps}}(A_1)+ H_{\max}^{3\sqrt{\eps}}(X_A) + H_{\min}^{f(\eps)}(B~|~X_B)+\log \abs{B_{\mix}}+O(\log\eps)
                     \intertext{In inequality $(a)$ we have used Claim \ref{claim:inequalities}. In equality $(b)$ we used Lemma \ref{lem:additivehh} and also the fact that the register $C_{\bob}$ contains a pure state in tensor with all the other systems. In inequality $(c)$ we have used Fact \ref{fact:tildemaxHrelation} and Lemma \ref{lem:equivHHtildeHmax} to lower bound both $H_H^{3\sqrt{\eps}}$ terms by $H_{\max}^{3\sqrt{\eps}}$, and we have absorbed the constant $-1$ arising from Lemma \ref{lem:equivHHtildeHmax} into the $O(\log \eps)$ term, assuming small enough $\eps$. In inequality $(d)$ above we have used the fact that the completely dephasing channel is a unital CPTP and the smooth max entropy cannot be decreased by the action of such a map \cite{Tomamichel_thesis}. We have also used the chain rules from Fact \ref{fact:dupuischainrules}. Inequality $(e)$ follows from Claim \ref{claim:inequalities}. Next, we will use the subadditivity of the max entropy to see that the LHS is:}
                     & \geq H_{\max}^{h(\eps)}(A_1X_A)+ H_{\min}^{f(\eps)}(B~|~X_B)+\log \abs{B_{\mix}}+O(\log\eps) \\
                     &\overset{(f)}{\geq} H_{\max}^{h(\eps)}(AC_{\alice}A_{\mix})+ H_{\min}^{f(\eps)}(B~|~X_B)+\log \abs{B_{\mix}}+O(\log\eps) \\
                     &\overset{(g)}{\geq} H_{\max}^{g(\eps)}(A)+ H_{\min}^{g(\eps)}(C_{\alice}A_{\mix}|A)+ H_{\min}^{f(\eps)}(B~|~X_B)+\log \abs{A_{\mix}}+\log \abs{B_{\mix}}+O(\log\eps)\\
                     &\overset{(h)}{=} H_{\max}^{g(\eps)}(A)+ H_{\min}^{f(\eps)}(B~|~X_B)+\log \abs{A_{\mix}}+\log \abs{B_{\mix}}+O(\log\eps) .
\end{align*}
We have used Claim \ref{claim:inequalities} in inequality $(f)$. In inequality $(g)$ we have used the chain rule from Fact \ref{fact:dupuischainrules}. Finally, for equality $(h)$, we use the following observation, which holds for any $\delta>0$:
\begin{align*}
    H_{\min}^{\delta}(C_{\alice}A_{\mix}|A) &\geq H_{\min}(C_{\alice}A_{\mix}|A) \\
    &\geq \max\limits_{\sigma^A}\sup\log\brak{\lambda~|~2^{-\lambda} \I^{C_{\alice}A_{\mix}}\otimes \sigma^A\geq \pi^{A_{\mix}}\otimes \ketbra{0}^{C_{\alice}}\otimes \rho^A} \\
    &\geq \sup\log \brak{\lambda~|~2^{-\lambda} \I^{C_{\alice}A_{\mix}}\otimes \rho^A\geq \pi^{A_{\mix}}\otimes \ketbra{0}^{C_{\alice}}\otimes \rho^A} \\
    &\geq \log \abs{A_{\mix}}.
\end{align*}
This shows that, for any distributed purity distillation protocol with error at most $\eps$, it holds that
\[
R^{\eps}_{\textup{dist}} \leq  \log \abs{A}+\log \abs{B}-H_{\max}^{g(\eps)}(A)-H_{\min}^{f(\eps)}(B~|~X_B)+O(\log \eps)
\]
This concludes the proof.
\end{proof}

We are now ready to state and prove a theorem about the upper bound of the distributed purity of any quantum state:
\begin{theorem}{\bf (Upper Bound for Distributed Purity of a State)}\label{thm:UBgeneral}
Given a quantum state $\rho^{AB}$, the $1$-way distillable purity $\kappa_{\eps}^{\to}(\rho^{AB}, C_{\textup{classical}})$ is at most
\[
\kappa_{\eps}^{\to}(\rho^{AB}, C_{\textup{classical}}) \leq \log \abs{A} + \log \abs{B} - H_{\max}^{g(\eps)}(A) - \min\limits_{\Lambda\in \mathcal{S}} H_{\min}^{f(\eps)}(B ~|~X)_{\Lambda^A\otimes \I^B (\rho^{AB})}+O(\log \eps)
\]

where the set $\mathcal{S}$ is a subset of the set of all POVMs on the system $A$ and is defined as follows:
\[
\mathcal{S}\coloneqq \brak{\Lambda^{A\to X}~|~I_{\max}^{\eps}(X:RB)_{\I^{RB}\otimes \Lambda(\ketbra{\rho}^{ABR})}+O(\log\eps)\leq C_{\textup{classical}}}
\]
where $\ket{\rho}^{ABR}$ is an arbitrary purification of $\rho^{AB}$.
\end{theorem}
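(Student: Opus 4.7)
The plan is to reduce the theorem to Lemma \ref{lem:upperbounddist} by reinterpreting Alice's actions as inducing a POVM $\Lambda$ on her share of $\rho^{AB}$, and then verifying that this POVM automatically belongs to the set $\mathcal{S}$.

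First I would invoke Lemma \ref{lem:upperbounddist}, which for any valid protocol with error at most $\eps$ already gives
\[
R^{\eps}_{\textup{dist}} \;\leq\; \log|A| + \log|B| - H_{\max}^{g(\eps)}(A) - H_{\min}^{f(\eps)}(B\,|\,X_B) + O(\log\eps),
\]
where $X_B$ is the classical register Bob receives through the dephasing channel. So it suffices to identify the $H_{\min}$ term above with $H_{\min}^{f(\eps)}(B\,|\,X)_{\Lambda^A\otimes \I^B(\rho^{AB})}$ for some POVM $\Lambda\in\mathcal{S}$, and then minimise over $\Lambda\in\mathcal{S}$.

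Next I would extract $\Lambda$ from the protocol. Tracing through Table \ref{table:dummyprotocol}, Alice's actions prior to sending her message are: append $\pi^{A_{\mix}}\otimes |0\rangle\langle 0|^{C_{\alice}}$, apply a unitary $U_1 : AA_{\mix}C_{\alice}\to A_1 X_A$, dephase $X_A\to X_B$ via $\mathcal{P}$, and retain $A_1$. The composite map
\[
\rho^A \;\longmapsto\; \Tr_{A_1}\Bigl[\mathcal{P}^{X_A\to X_B}\Bigl(U_1\bigl(\rho^A\otimes \pi^{A_{\mix}}\otimes |0\rangle\langle 0|^{C_{\alice}}\bigr)U_1^{\dagger}\Bigr)\Bigr]
\]
is a CPTP map from $A$ to a classical output, so it has the form $\rho^A\mapsto\sum_x\Tr[\Lambda_x\rho^A]\,|x\rangle\langle x|^{X_B}$ for some POVM $\Lambda = \{\Lambda_x\}$ on $A$. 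Applying the same map in tensor with $\I^{RB}$ to a purification $\ket{\rho}^{ABR}$ shows that the state on $BX_B$ that appears in Lemma \ref{lem:upperbounddist} is exactly the $BX$ marginal of $(\Lambda^{A\to X}\otimes\I^{RB})(|\rho\rangle\langle\rho|^{ABR})$, so the two $H_{\min}$ quantities coincide.

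The main step, and the main obstacle, is to verify $\Lambda\in\mathcal{S}$, i.e.\ that $I_{\max}^{\eps}(X:RB)$ evaluated on this cq state is at most $C_{\textup{classical}}$ up to the admissible $O(\log\eps)$ slack. Assumption \ref{asmptn:size} gives $|X|=|X_B|\leq 2^{C_{\textup{classical}}}$, but the naive bound $I_{\max}(X:RB)\leq \log|X|$ fails because $\sigma_x^{RB}\leq p_x^{-1}\sigma^{RB}$ blows up when some $p_x$ is tiny. I would therefore use the following standard smoothing: for $\sigma^{XRB}=\sum_x p_x|x\rangle\langle x|\otimes\sigma_x^{RB}$, delete the outcomes with $p_x<\eps/|X|$, which perturbs $\sigma$ by at most $\eps$ in trace distance, and apply $\sigma_x^{RB}\leq p_x^{-1}\sigma^{RB}\leq (|X|/\eps)\sigma^{RB}$ on the surviving outcomes to conclude
\[
I_{\max}^{\eps}(X:RB) \;\leq\; \log|X| + \log(1/\eps) \;\leq\; C_{\textup{classical}} + O(\log\eps),
\]
which places $\Lambda$ in $\mathcal{S}$. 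Since every valid protocol induces some such $\Lambda\in\mathcal{S}$, taking the minimum over $\mathcal{S}$ of the bound from Lemma \ref{lem:upperbounddist} yields the theorem.
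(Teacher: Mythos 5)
Your proposal is correct and follows essentially the same route as the paper: reduce to Lemma \ref{lem:upperbounddist}, observe that Alice's pre-communication actions followed by the dephasing channel induce a POVM $\Lambda^{A\to X}$ (the paper does this explicitly via Stinespring dilations and a cyclicity-of-trace computation, whereas you argue abstractly that any CPTP map with classical output is of measurement form), and then verify $\Lambda\in\mathcal{S}$. The only micro-difference is in that last verification, where the paper chains $\log\abs{X_B}\geq H_{\max}^{O(\eps^2)}(X_B)+O(\log\eps)\geq I_{\max}^{\eps}(X_B:RB)+O(\log\eps)$ while you give a direct truncation-and-smoothing bound $I_{\max}^{\eps}(X:RB)\leq \log\abs{X}+\log\frac{1}{\eps}$; both land within the $O(\log\eps)$ slack built into the definition of $\mathcal{S}$.
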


\begin{proof}
From Lemma \ref{lem:upperbounddist}, we know that any distributed purity distillation protocol for $\rho^{AB}$ and which makes an error at most $\eps$, can extract a purity of at most
\begin{align}\label{eq:sup}
R^{\eps}_{\textup{dist}} \leq  \log \abs{A}+\log \abs{B}-H_{\max}^{g(\eps)}(A)-H_{\min}^{f(\eps)}(B~|~X_B)+O(\log \eps)
\end{align}
Recall that we obtained the system $X_B$ by:
\begin{enumerate}
    \item Using local allowed operations to obtain $\sigma^{A_1X_A}$.
    \item Sending $X_A$ through the completely dephasing channel $\mathcal{P}^{X_A\to X_B}$.
    \item Using local allowed operations on $A_1$ to obtain the systems $A_pA_g$.
\end{enumerate}
Suppose that $V_1^{AA_{\mix}C_{\pure}\to A_1X_AE_1}$ and $V_2^{A_1\to A_pA_gE_2}$ are the Stinespring dilations of the maps that Alice enacts in Steps $1$ and $3$ above. Now consider the isometry:
\begin{align*}
    V_3 &: AA_{\mix}C_{\pure}\to A_pA_gX_AE_1E_2\\
    &\coloneqq \left(V_2\otimes \I^{X_A}\right)\circ V_1.
\end{align*}
Define the unitary extension of $V_3$ in the usual way by appending an ancilla system $W$ to the domain of $W_3$. We will call this unitary $U_{\textsc{alice}}$. Then note that if we measure the $X_A$ system in the computation basis, the probability of getting the outcome $x$ is given by the following expression:

\begin{align*}
 \Tr\left[\left(\I^{A_pA_gE_1E_2}\otimes \ketbra{x}^{X_A}\right)\cdot~ \left(U_{\textsc{alice}}  (\rho^A\otimes \pi^{A_{\mix}}\otimes \ketbra{0}^{C_{\alice}}\otimes \ketbra{0}^W)U_{\textsc{alice}}^{\dagger}\right)\right]
\end{align*}
By defining $\tau^{A_{\mix}C_{\alice}W}\coloneqq \pi^{A_{\mix}}\otimes \ketbra{0}^{C_{\alice}}\otimes \ketbra{0}^{W}$, and using the cyclicity of trace, we see that the above expression simplifies to:
\[
\begin{aligned}
&\Tr_A\left[\Tr_{A_{\mix}C_{\alice}W}\left(\left(I^{A}\otimes \sqrt{\tau}^{A_{\mix}C_{\alice}W}\right)\cdot U_{\textsc{alice}}^{\dagger}\left(\I^{A_pA_gE_1E_2}\otimes \ketbra{x}^{X_A}\right)U_{\textsc{alice}}\right)(\rho^A)\right] 
\end{aligned}
\]

We define:
\[
\Lambda_x^{A}\coloneqq \Tr_{A_{\mix}C_{\alice}W}\left[\left(\left(I^{A}\otimes \sqrt{\tau}^{A_{\mix}C_{\alice}W}\right)\cdot U_{\textsc{alice}}^{\dagger}\left(\I^{A_pA_gE_1E_2}\otimes \ketbra{x}^{X_A}\right)U_{\textsc{alice}}\right)\right].
\]
Clearly $\Lambda_x\geq 0$. Additionally, it is easy to see that:
\[
\sum\limits_{x}\Lambda_x^A=I^A.
\]
Therefore, $\Lambda\coloneqq \brak{\Lambda_x^A}$ is a POVM. Note that the following holds, with respect to the state $(\I^{RB}\otimes \Lambda)(\ketbra{\rho}^{ABR})$:
\begin{align*}
    C_{\textup{classical}}\geq &~ \log \abs{X_B} \\
    \geq &~ H_{\max}^{O(\eps^{2})}(X_B)_{(\I^{RB}\otimes \Lambda)(\ketbra{\rho}^{ABR}}+O(\log\eps) \\
    \geq &~ I_{\max}^{\eps}(X_B:RB)_{(\I^{RB}\otimes \Lambda)(\ketbra{\rho}^{ABR}}+O(\log \eps).
\end{align*}

Thus, renaming $X_B$ to $X$, it is clear that the supremum of the rate $R_{\textup{pure}}$ over all $\eps$ achievable rates pairs $(R_{\textup{pure}},C_{\textup{classical}},\eps)$ is bounded above by the supremum of the upper bound on $R^{\eps}_{\textup{dist}}$ obtained in Equation \ref{eq:sup}, over the set $\mathcal{S}$ of all POVMS $\Lambda^{A\to X}$ such that $I_{\max}^{\eps}(X:RB)_{\I^{RB}\otimes \Lambda(\ketbra{\rho}^{ABR})}+O(\log\eps)\leq C_{\textup{classical}}$. This immediately implies that:
\[
\kappa_{\eps}^{\to}(\rho^{AB}, C_{\textup{classical}}) \leq \log \abs{A}+\log \abs{B}-H_{\max}^{g(\eps)}(A)-\inf\limits_{\Lambda\in \mathcal{S}} H_{\min}^{f(\eps)}(B ~|~X)_{\Lambda^A\otimes \I^B (\rho^{AB})}+O(\log \eps)
\]
This concludes the proof.
\end{proof}

\subsection{The Special Case of Unbounded Classical Communication}\label{sec:unbounded}
As mentioned earlier in the introduction, the original version of the distributed purity distillation problem was considered by Devetak in \cite{Devetak_purity}, in the regime when unbounded communication is allowed. In that spirit, we will show in this section that $\kappa_{\eps}^{\to}(\rho^{AB}, \infty)$ can be bounded above by the expression in Theorem \ref{thm:UBgeneral}, with the important distinction that the infimum over all POVMs in the set $\mathcal{S}$ can be replaced by an infimum over all rank-$1$ POVMs.
\begin{theorem}
    Given a quantum state $\rho^{AB}$, the $1$-way distillable purity in the case of unbounded communication, $\kappa_{\eps}^{\to}(\rho^{AB},\infty)$, can be bounded above by:
    \[
    \kappa_{\eps}^{\to}(\rho^{AB},\infty) \leq \log \abs{A}+\log \abs{B}-H_{\max}^{g(\eps)}(A)-\inf\limits_{\Lambda: \textup{rank-}1} H_{\min}^{f(\eps)}(B ~|~X)_{\Lambda^A\otimes \I^B (\rho^{AB})}+O(\log \eps).
    \]
\end{theorem}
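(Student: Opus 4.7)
The plan is to start from Theorem \ref{thm:UBgeneral}, observe that in the unbounded communication regime the constraint defining $\mathcal{S}$ becomes vacuous, and then argue that the infimum over all POVMs can be restricted to rank-$1$ POVMs by a spectral refinement argument together with the data processing inequality for $H_{\min}^{\eps}$.

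First, I would specialise Theorem \ref{thm:UBgeneral} to the unbounded setting. Since $C_{\textup{classical}} = \infty$, the constraint $I_{\max}^{\eps}(X:RB) + O(\log\eps) \leq C_{\textup{classical}}$ is trivially satisfied by every POVM $\Lambda^{A \to X}$, so $\mathcal{S}$ coincides with the set of all POVMs on $A$. This immediately gives
\[
\kappa_{\eps}^{\to}(\rho^{AB},\infty) \leq \log\abs{A}+\log\abs{B}-H_{\max}^{g(\eps)}(A) - \inf_{\Lambda} H_{\min}^{f(\eps)}(B\,|\,X)_{\Lambda^A \otimes \I^B(\rho^{AB})} + O(\log\eps),
\]
where the infimum is over all POVMs $\Lambda$ on $A$. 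It therefore suffices to show that this infimum equals the one restricted to rank-$1$ POVMs.

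Next, given an arbitrary POVM $\Lambda = \{\Lambda_x\}_{x \in \mathcal{X}}$, I would take the spectral decomposition $\Lambda_x = \sum_i \lambda_{x,i}\ketbra{v_{x,i}}$ of each element and define the refined collection $\Lambda' = \{\Lambda'_{x,i}\}$ with $\Lambda'_{x,i} \coloneqq \lambda_{x,i}\ketbra{v_{x,i}}$. Since $\sum_{x,i}\Lambda'_{x,i} = \sum_x \Lambda_x = \I^A$, this is a valid POVM, and every element is manifestly rank-$1$. Writing the classical outcome register as $X' = XI$, the post-measurement cq state
\[
\sigma^{X'B}_{\Lambda'} = \sum_{x,i}\ketbra{x,i}^{X'} \otimes \Tr_A\bigl[(\Lambda'_{x,i}\otimes \I^B)\rho^{AB}\bigr]
\]
satisfies $\Tr_I \sigma^{X'B}_{\Lambda'} = \sigma^{XB}_{\Lambda}$, i.e., marginalising the refinement index recovers the state produced by $\Lambda$.

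Now I would invoke the data processing inequality for the smooth min-entropy (stated as a fact earlier in the paper) applied to the partial trace $\Tr_I : X' \to X$, which is a CPTP map acting only on the conditioning system. This yields
\[
H_{\min}^{f(\eps)}(B\,|\,X')_{\sigma_{\Lambda'}} \leq H_{\min}^{f(\eps)}(B\,|\,X)_{\sigma_{\Lambda}}.
\]
Hence for every POVM $\Lambda$ there exists a rank-$1$ POVM $\Lambda'$ whose smooth conditional min-entropy is no larger. Taking the infimum over $\Lambda$ on the right-hand side and over rank-$1$ POVMs on the left yields $\inf_{\Lambda:\,\text{rank-}1} H_{\min}^{f(\eps)}(B|X) \leq \inf_{\Lambda} H_{\min}^{f(\eps)}(B|X)$. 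The reverse inequality is immediate because rank-$1$ POVMs form a subclass of all POVMs, so the two infima coincide. Substituting this into the bound obtained in the first step completes the proof.

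The argument is essentially a structural observation rather than a technical computation; the only point requiring care is orienting the data processing inequality correctly (the partial trace acts on the conditioning register, so it increases $H_{\min}^{\eps}$, giving the inequality in the direction we need), and verifying that the spectral refinement produces a genuine rank-$1$ POVM without changing the marginal on $B$.
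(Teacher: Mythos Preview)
Your proposal is correct and follows essentially the same route as the paper: specialise Theorem~\ref{thm:UBgeneral} to the unbounded regime and then refine an arbitrary POVM to a rank-$1$ POVM, invoking data processing for $H_{\min}^{\eps}$ on the conditioning register. The paper's argument is terser and only establishes the one-sided inequality $\inf_{\Lambda\in\mathcal{S}} H_{\min}^{f(\eps)}(B\,|\,X) \geq \inf_{\Lambda:\ \text{rank-}1} H_{\min}^{f(\eps)}(B\,|\,X)$, which already suffices since it only weakens the upper bound; your additional remark that the reverse inequality holds trivially (rank-$1$ POVMs being a subclass) and hence the two infima coincide is correct but not strictly needed for the theorem as stated.
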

\begin{proof}
    The proof follows easily from Theorem \ref{thm:UBgeneral}, by noticing that: 
    \[
    \inf\limits_{\Lambda\in \mathcal{S}} H_{\min}^{f(\eps)}(B ~|~X)_{\Lambda^A\otimes \I^B (\rho^{AB})}\geq \inf\limits_{\Lambda: \textup{rank-}1} H_{\min}^{f(\eps)}(B ~|~X)_{\Lambda^A\otimes \I^B (\rho^{AB})},
    \]
    by the data processing inequality for the smooth min entropy. Specifically, for any POVM $\Lambda$, one can always create a new POVM $\Lambda'$ by decomposing each POVM element in $\Lambda$ into rank one operators $\ketbra{\varphi}$ such that $0\leq \Tr\left[\ketbra{\varphi}\right]\leq 1$, and then assigning a new label to the outcome corresponding to each of these operators. This concludes the proof. 
\end{proof}
\section{Distributed Protocols with Ancilla: Lower Bounds}\label{sec:LBdistributed}
In this section we will present a DPD protocol with bounded classical communication, which uses additional ancilla qubits in a catalytic manner, with an almost optimal rate of pure state distillation. We call this protocol \kd (see Section \ref{sec:kd}). This protocol can be viewed as a one-shot version of the protocol presented in \cite{KroviDevetak}.

The main theorem in this section quantifies the rate of pure state distillation for the protocol \kd. The theorem will take the following form: we will first fix the state $\rho^{AB}$ and a rate of classical communication $C_{\textup{classical}}$. We will then fix a POVM $\Lambda^{A\to X}$ such that $I_{\max}^{\eps}(X:RB)_{\I^{RB}\otimes \Lambda(\ketbra{\rho}^{ABR})}+O(\log\eps)\leq C_{\textup{classical}}$. Finally, we will show the existence of a protocol which distils pure states at the rate (roughly) $\log \abs{A}-H^{\eps}_H(A|X)+\log \abs{B}-H^{\eps}_H(B|X)-I_{\max}^{\eps}(RB:X)$, with communication $I_{\max}^{\eps}(RB:X)+O(\log \frac{1}{\eps})$ (which is at most $C_{\textup{classical}}+O(\log\frac{1}{\eps})$). This protocol is \kd. The final lower bound on $\kappa^{\to}_{\eps}(\rho^{AB}, C_{\textup{classical}})$ is given by taking the supremum over all POVMs $\Lambda$ such that $I_{\max}^{\eps}(X:RB)_{\I^{RB}\otimes \Lambda(\ketbra{\rho}^{ABR})}+O(\log\eps)\leq C_{\textup{classical}}$.

To be precise, we prove the following theorem:
\begin{theorem}
    Given the bipartite quantum state $\rho^{AB}$, it holds that
    \[
    \kappa^{\to}_{\eps^{1/32}}(\rho^{AB}, C_{\textup{classical}})\geq \sup\limits_{\Lambda\in \mathcal{S}}R_{\textup{pure}}+O(\log\eps),
    \]
    where
    \[
    R_{\textup{pure}}=\log\abs{A}-H_{H}^{\eps}(A~|~X)+\log \abs{B}-H_H^{\eps}(B~|~X)-I_{\max}^{\eps^4}(X:RB)
    \]
    and the set $\mathcal{S}$ is as follows:
    \[
    \mathcal{S}\coloneqq \brak{\Lambda^{A\to X}~|~I_{\max}^{\eps}(X:RB)_{\I^{RB}\otimes \Lambda(\ketbra{\rho}^{ABR})}+O(\log\eps)\leq C_{\textup{classical}}}.
    \]
    The rate of communication of the protocol is $I_{\max}^{\eps^4}(X:RB)+O(\log \frac{1}{\eps})$. All the entropic quantities above are computed with respect to the state $(\I^{RB}\otimes \Lambda) (\ketbra{\rho}^{ARB})$.
\end{theorem}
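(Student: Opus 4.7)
The strategy is to build the \kd protocol in three conceptual stages and then aggregate the rate and error accounting. The overall picture follows the Krovi--Devetak philosophy: Alice coherently implements a compressed version of the POVM $\Lambda$, sends the measurement outcome through the dephasing channel, and then both parties independently run the locally optimal distillation protocol of Fact~\ref{lem:localLB} on their post-measurement states, conditioned on the outcome.

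\emph{Stage 1 (Compressing the POVM).} First I would invoke the one-shot measurement compression theorem of \cite{ChakrabortyPadakandlaSen_22} to replace the given POVM $\Lambda^{A\to X}$ with a compressed POVM $\widetilde{\Lambda}^{A\to \wtt{X}}$ whose outcome alphabet has size roughly $2^{I_{\max}^{\eps^4}(X:RB) + O(\log 1/\eps)}$, and whose action on $\rho^{AB}$ produces a cq-state on $\wtt{X}B$ that is $O(\eps^{1/c})$-close in trace distance to the cq-state induced by $\Lambda$, for some small constant $c$. This is the crucial step because the coherent Stinespring dilation of $\widetilde{\Lambda}$ will need only $\log|\wtt{X}|$ ancilla qubits, rather than the potentially huge $\log|X|$.

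\emph{Stage 2 (Coherent measurement and dephasing).} Alice borrows pure ancilla in register $C_{\alice}\cong \wtt{X}$ (together with a small Stinespring environment that she later discards) and applies the isometry $V = \sum_{\wtt{x}} \ket{\wtt{x}}^{\wtt{X}}\, \sqrt{\widetilde{\Lambda}_{\wtt{x}}}^{A}$ as part of a unitary on $A\otimes C_{\alice}$. She then passes $\wtt{X}$ through the dephasing channel $\mathcal{P}^{X_A\to X_B}$ to Bob, using $\log|\wtt{X}| = I_{\max}^{\eps^4}(X:RB) + O(\log 1/\eps)$ bits of classical communication, which is within the budget $C_{\textup{classical}}$ by the definition of $\mathcal{S}$.

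\emph{Stage 3 (Conditional local distillation and averaging).} After dephasing, the joint state is $O(\eps^{1/c})$-close to $\sum_{\wtt{x}} p(\wtt{x})\,\ketbra{\wtt{x}}^{X_B}\otimes \rho^{A}_{\wtt{x}}\otimes \rho^{B}_{\wtt{x}}$ up to classical-quantum correlations that the compression theorem guarantees. Conditioned on the classical label $\wtt{x}$, Alice and Bob each apply the locally optimal protocol of Fact~\ref{lem:localLB} to their respective post-measurement systems. By Lemma~\ref{lem:averageToWorstcaseHH}, for a subset of outcomes of $P$-probability at least $1-2\sqrt{\eps}$, Alice's local run distills at least $\log|A| - H_H^{\sqrt{\eps}}(A)_{\rho_{\wtt{x}}} + O(\log \eps)$ pure qubits and Bob's run distills at least $\log|B| - H_H^{\sqrt{\eps}}(B)_{\rho_{\wtt{x}}} + O(\log \eps)$ pure qubits; averaging these yields per-protocol rates $\log|A| - H_H^{\eps}(A|X) + O(\log \eps)$ and $\log|B| - H_H^{\eps}(B|X) + O(\log \eps)$. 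Subtracting the $I_{\max}^{\eps^4}(X:RB) + O(\log 1/\eps)$ pure ancilla qubits borrowed in Stage 2 (and noting that Stinespring environment ancilla can be dumped into the discarded register without affecting the count, since the net purity balance of adding and then tracing out a pure register is zero on the trace-norm side) gives the claimed $R_{\textup{pure}}$.

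\emph{Main obstacle.} The principal difficulty will be error bookkeeping rather than conceptual novelty. The measurement compression contributes an $O(\eps^{1/c})$ perturbation, Lemma~\ref{lem:averageToWorstcaseHH} and the gentle measurement lemma each cost a square-root loss, the tensor-product structure across Alice's and Bob's independent local distillations costs another round of triangle inequalities, and the pure-ancilla registers must be shown to return to $\ket{0}$ with high fidelity so as to be legitimately subtracted. Tracking these losses and verifying that the compound error on $\ketbra{0}^{A_p}\otimes \ketbra{0}^{B_p}$ sits below $\eps^{1/32}$, while simultaneously ensuring that each entropic quantity in the final rate carries the correct smoothing parameter ($\eps$ for the conditional $H_H$ terms and $\eps^4$ inside $I_{\max}$), is where the bulk of the careful analysis lives.
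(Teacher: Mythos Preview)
Your outline follows the paper's \kd\ construction at the high level, but there are two places where the plan skips over nontrivial work that the paper has to do explicitly.

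First, the one-shot measurement compression theorem you invoke (Fact~\ref{fact:measurementCompression}) does not hand you a single compressed POVM $\widetilde{\Lambda}$: it produces a \emph{family} $\Theta^A(1),\ldots,\Theta^A(K)$ indexed by shared randomness, together with a recovery map $f:[K]\times[L]\to\mathcal{X}$. The paper therefore first builds a randomised protocol with public coin $K$ (Protocol~B, Proposition~\ref{prop:secondProtocol}) and only afterwards derandomises by fixing a good $k$ (Lemma~\ref{lem:derandK}). Your Stage~1 hides this step entirely.

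Second, and this is the real gap: in Stage~3 you invoke Lemma~\ref{lem:averageToWorstcaseHH} and then assert that ``averaging these yields per-protocol rates $\log|A|-H_H^{\eps}(A|X)$''. But Lemma~\ref{lem:averageToWorstcaseHH} applied to the cq state indexed by the \emph{compressed} label $\wtt{x}$ can only bound $H_H^{\sqrt{\eps}}(\rho^A_{\wtt{x}})$ in terms of $H_H^{\eps}(A\,|\,\wtt{X})$, not in terms of $H_H^{\eps}(A\,|\,X)$ for the original $\Lambda$. The crucial obstruction is that measurement compression guarantees closeness of post-measurement states only on the \emph{complementary} systems $RB$ (Fact~\ref{fact:statescloseQkl}: $\sigma^{RB}_{f(k,\ell)}\approx\rho^{RB}_{f(k,\ell)}$), never on $A$; the states $\sqrt{\Theta_\ell}\,\rho^A\sqrt{\Theta_\ell}$ and $\sqrt{\Lambda_{f(\ell)}}\,\rho^A\sqrt{\Lambda_{f(\ell)}}$ are in general far apart. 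The paper therefore routes through $RB$: (i) use purity of $\ket{\rho_{k,\ell}}^{ARB}$ and Lemma~\ref{lem:purehh} to write $H_H(\rho^A_{k,\ell})=H_H(\rho^{RB}_{k,\ell})$; (ii) prove a dedicated transfer result (Lemma~\ref{lem:mainderandhhtwice}, via Lemma~\ref{lem:condhhderand}) combining Fact~\ref{fact:statescloseQkl} with the near-uniformity of $Q_{KL}$ (Fact~\ref{fact:closeQkl}) to conclude that for most $(k,\ell)$ one has $H_H^{O(\eps^{1/8})}(\rho^{RB}_{k,\ell})\le H_H^{O(\eps)}(RB\,|\,X)+O(\log\tfrac{1}{\eps})$ and likewise for $B$; (iii) finally identify $H_H(RB\,|\,X)=H_H(A\,|\,X)$. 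Step~(ii) is genuine technical content, not mere bookkeeping, and your plan supplies no substitute for it.
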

\begin{proof}
    First, fix $\Lambda\in \mathcal{S}$. Then, using this POVM in conjunction with Proposition \ref{prop:secondProtocol} and Lemma \ref{lem:derandK}, shows that there exists a protocol (\kd) which for which the rate $R_{\textup{pure}}$ is achievable with error at most $\eps^{1/32}$, and classical communication $I_{\max}^{\eps^4}(X:RB)+O(\log \frac{1}{\eps})$. Then, taking the supremum of $R_{\textup{pure}}$ over $\mathcal{S}$ concludes the proof.
\end{proof}

In the following sections, we will prove Proposition \ref{prop:secondProtocol} and Lemma \ref{lem:derandK}. To state and prove these claims, we will assume that a POVM $\Lambda$ is already provided, and no further mention of the set $\mathcal{S}$ will be made. Before we present the actual protocol, we will first start with a bad protocol, which distils a small number of qubits and needs unbounded communication. This protocol, although bad, will serve towards building intuition. We present this in Section \ref{sec:basicProtocol}. The full description of \kd can be found in the Sections \ref{sec:optWithAncilla} and \ref{sec:kd}. 

\begin{remark}\label{remark:noextraregister}
    An important feature of our 'achievable' protocols will be that we will not require the use of local randomness in the form of $\pi^{A_{\mix}}$ and $\pi^{B_{\mix}}$, neither will we need Bob to borrow ancilla qubits. We will also not need Alice or Bob to use \emph{local} completely dephasing channels, although they will of course need access to the dephasing channel which sends messages from Alice to Bob. Nevertheless, we will show that our achievable protocols will be almost optimal, in the sense that they will be able to recover pure qubit states \emph{almost} at the optimal rate given in Theorem \ref{thm:UBgeneral}.
    
    Thus, in all that follows, the registers $A_{\mix}, B_{\mix}$ and $C_{\bob}$ will be omitted. In the interest of brevity we thus abbreviate the register $C_{\alice}$ to just $C$. 
\end{remark}

\subsection{The Need for Measurement Compression}\label{sec:basicProtocol}
In this section we will introduce a `bad' protocol for distributed purity distillation which is not optimal with respect to the number of pure qubit states that it distils, but nevertheless helps in understanding some of the key ideas that lead to the other optimal protocol construction that follow in later sections. For the purposes of this demonstration we will not put a bound on classical communication.

We remind the reader of Fact \ref{lem:localLB}, restated here for convenience, which we shall use throughout the rest of the paper:
 \begin{fact}\label{fact:reminder}
Given a quantum state $\rho^A$, there exists an $\eps$ purity distillation code, which takes the form a unitary operator $U^A$, which is almost optimal.
 \end{fact}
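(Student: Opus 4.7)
The plan is to construct the required $\eps$ purity distillation code as a single in-place unitary $U^A$, exploiting the fact that $\rho^A$ is effectively supported on a low-dimensional subspace. A change of basis concentrates that subspace into one tensor factor of $A$ and leaves the complementary factor in the state $\ket{0}$, which after tracing out the former yields the desired nearly-pure output.

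First I would replace the hypothesis-testing entropy with the more geometric smoothed support max entropy via Lemma~\ref{lem:equivHHtildeHmax}, which gives $H_H^{\eps^2/9}(A)_{\rho} \geq \thmax^{\eps^2/9}(A)_{\rho} - 1$. By Definition~\ref{def:hmaxtilde}, $d := 2^{\thmax^{\eps^2/9}(A)_{\rho}}$ equals $\abs{\textup{supp}(\rho)}-k$, where $k$ is the largest integer such that the $k$ smallest eigenvalues of $\rho$ sum to at most $\eps^2/9$. Let $V$ denote the span of the remaining $d$ eigenvectors of $\rho$, so that the projector $\Pi_V$ satisfies $\Tr[\Pi_V \rho] \geq 1-\eps^2/9$.

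Next, factor $A \cong A_p \otimes A_g$ with $\abs{A_g}$ equal to the smallest power of two at least $d$; this costs at most one qubit in $\log\abs{A_p}$ and is the source of the $-1$ in the rate formula. Choose any unitary $U^A$ sending an orthonormal basis of $V$ to vectors of the form $\ket{0}^{A_p}\otimes\ket{j}^{A_g}$, and extending arbitrarily on $V^{\perp}$; such an extension exists by Gram--Schmidt. After applying $U^A$ and tracing out $A_g$, the resulting state $\sigma^{A_p}$ satisfies $\bra{0}\sigma^{A_p}\ket{0} \geq 1-\eps^2/9$, so by Fuchs--van de Graaf $\norm{\sigma^{A_p}-\ketbra{0}^{A_p}}_1 \leq 2\sqrt{\eps^2/9} \leq \eps$. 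The achieved rate is $\log\abs{A_p} \geq \log\abs{A}-\thmax^{\eps^2/9}(A)_{\rho}-1 \geq \log\abs{A}-H_H^{\eps^2/9}(A)_{\rho}-1$, matching the claim up to the stated $O(\log\eps)$ slack.

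I do not anticipate any serious obstacle: the argument is essentially rank-truncation followed by a basis change, and the smoothing parameter $\eps^2/9$ is chosen precisely so that the fidelity deficit of at most $\eps^2/9$ converts via Fuchs--van de Graaf into a trace-distance error of at most $\eps$. No register other than $A$ itself is manipulated, so no pure ancilla is borrowed ($\abs{C_{\pure}}=1$, hence $\log\abs{C_{\pure}}=0$); in particular none of the other operations in Definition~\ref{def:allowableoperations} beyond a single local unitary is needed, which is precisely the final clause of the statement.
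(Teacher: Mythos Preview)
Your proposal is correct and is precisely the construction the paper attributes to \cite{CNB23} and spells out in the proof of Proposition~\ref{prop:firstProtocol}: truncate the small-eigenvalue subspace, rotate the surviving eigenvectors into $\ket{0}^{A_p}\otimes A_g$ by a single unitary, and bound the trace-distance error via the discarded mass and Fuchs--van de Graaf. One cosmetic slip: in your final chain the step $\log\abs{A}-\thmax^{\eps^2/9}(A)_{\rho}-1 \geq \log\abs{A}-H_H^{\eps^2/9}(A)_{\rho}-1$ uses Lemma~\ref{lem:equivHHtildeHmax} in the wrong direction (it gives $\thmax \leq H_H+1$, not $\thmax \leq H_H$), so the constant should be $-2$ rather than $-1$, but this is already absorbed by the $O(\log\eps)$ slack you invoke.
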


To setup the protocol, we recall the setup of the distributed purity distillation problem, modified suitably according to the statements made in Remark \ref{remark:noextraregister}:
\begin{enumerate}
\item Alice and Bob share the state $\rho^{AB}$ at the beginning of the protocol, where Alice has access to the system $A$ and Bob has access to the system $B$. Alice is also given the POVM $\brak{\Lambda_x^{A\to X}}$ which has outcomes $x$ from the set of symbols $\mathcal{X}$. 
    \item Alice can borrow any number of qubits $\ket{0}$ as ancilla, but has to account for them at the end of the protocol. For example, Alice can choose to act the POVM $\Lambda$ on the system $A$, but she has to do this \emph{coherently} by borrowing $\log \abs{\mathcal{X}}$ number of qubits.
    \item Suppose Alice borrows the ancilla qubits in the system $C$. Then she is allowed to perform any local unitary of the following form:
    \[
    U_{\textsc{Alice}} : AC\to A_pA_gX_A .
    \]
    The system $A_p$ is meant to hold the pure states that Alice distils on her end. Note that as per Remark \ref{remark:noextraregister}, our protocol will not require Alice to have access to private randomness or local completely dephasing maps.
    \item Alice and Bob share a completely dephasing channel, i.e. a CPTP map $\mathcal{P} : X_A\to X_B$ where the systems $X_A$ and $X_B$ are isomorphic. The action of the map is described with respect to a fixed basis $\brak{\ket{x}^{X_A}}$:
    \[
    \mathcal{P}\left(\rho^{X_A}\right) = \sum_x \braket{x|\rho|x} \ketbra{x}^{X_B}
    \]
    The choice of basis can be fixed by Alice and Bob before the protocol starts.
    \item Bob is allowed to use local unitaries on the systems in his possession, i.e., he is allowed to use unitaries of the following sort:
    \[
    U_{\textsc{Bob}} : BX_B \to B_pB_g
    \]
    where the system $B_p$ is meant to hold the pure states that he distils. Note that as per Remark \ref{remark:noextraregister}, our protocol will not require Bob to have access to local pure states, private randomness or local completely dephasing maps.
\end{enumerate}
We require that at the end of the protocol, the state $\sigma^{A_pB_p}$ should satisfy the following constraint:
\[
\norm{\sigma^{A_pB_p}-\ketbra{0}^A_p\otimes \ketbra{0}^{B_p}}_1 \leq \eps
\]
For the purposes of this section we will assume that $X_A\cong X_B\cong X$. The protocol itself is given in Table \ref{prot:ProtocolA}.\\
\vspace{1mm}\\

\begin{table}[h]
        \centering
            \begin{tabular}{  l c r }
            \hline \\
            \textbf{Alice} & & \textbf{Bob} \\
            \hline \\
            Borrow $\log \abs{\mathcal{X}}$ qubits as pure ancilla. & & \\
            \hspace{1em} & & \\
            Act the POVM $\Lambda^{A\to X}$ coherently on $\ket{\rho}^{ABR}$ & & \\
            \hspace{1em} & & \\
            State : $\ket{\rho}^{X_AABR}\coloneqq \sum\limits_{x\in \mathcal{X}}\ket{x}^{X_A}\sqrt{\Lambda_x^A}\ket{\rho}^{ABR}$ & & \\
            \hspace{1em} & & \\
            Define:  $\ket{\rho_x}^{ABR}\coloneqq \frac{1}{\sqrt{\Tr\left[\Lambda_x^A\rho^{A}\right]}}\sqrt{\Lambda_x}\ket{\rho}^{ABR}$ & &\\
            \hspace{1em} & & \\
            **Define: $U_x^{A\to A_pA_g}$ be the locally optimal local \\ purity distillation protocol for $\rho_x^A$ & &\\
            \hspace{1em} & & \\
            Act unitary $    \sum\limits_{x\in \mathcal{X}}\ketbra{x}^{X_A}\otimes U_x^{A\to A_pA_g}$ \\
            on state $\ket{\rho}^{X_AABR}$ & & \\
            & $\xrightarrow{X_A\to X_B}$ &
            \\
            & End of  & \\ 
            & Alice's Actions & \\
            \hdashline\\
            & & State on $X_BB$: \\ & & $\rho^{X_BB}\coloneqq 
\sum\limits_{x}P_X(x)\ketbra{x}^{X_B}\otimes \rho_x^B$ \\ & & \hspace{1em}  \\ & &**Define: $V_x^{B\to B_pB_g}$ be the locally optimal local \\ & &purity distillation protocol for $\rho_x^B$\\ & & \hspace{1em}  \\ & & Act unitary $\sum\limits_{x\in \mathcal{X}}\ketbra{x}^{X_B}\otimes V_x^{B\to B_pB_g}$ \\
& & on state $\rho^{X_BB}$\\
            \hline
\end{tabular}
        \caption{\bf Protocol A}
        \label{prot:ProtocolA} 
    \end{table}
 **Here (see Table \ref{prot:ProtocolA}) the unitary operators $U_x$ and $V_x$ are given by Fact \ref{fact:reminder}. Let us analyse Protocol A. We claim the following proposition: \begin{proposition}\label{prop:firstProtocol}
     Protocol A produces 
     \[
     \log \abs{A}-H_H^{\eps^2}(A|X)+\log \abs{B}-H_H^{\eps^2}(B|X)-\log \abs{\mathcal{X}}-O\left(\log \frac{1}{\eps}\right)
     \]
     number of pure qubit states.
 \end{proposition}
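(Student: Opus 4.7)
The strategy is to analyze Protocol A conditional on the coherent POVM outcome $x$. After Alice's application of the coherent $\Lambda$ and the action of the dephasing channel on $X_A$, the state on $X_B A B R$ is the cq state $\sum_x P_X(x)\ketbra{x}^{X_B}\otimes\ketbra{\rho_x}^{ABR}$, and the controlled unitaries invoke the locally optimal distillation unitary of Fact \ref{lem:localLB} for $\rho_x^A$ on Alice's side and for $\rho_x^B$ on Bob's side. The plan is to show that for a typical $x$ the marginals of the resulting state on $A_p$ and $B_p$ are each close to $\ketbra{0}$, and then to lift this to closeness of the joint.

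The first step is to translate the conditional-entropy bounds $H_H^{\eps^2}(A|X)$ and $H_H^{\eps^2}(B|X)$ into pointwise bounds on a typical $x$. Applying Lemma \ref{lem:averageToWorstcaseHH} separately to the cq states $\sum_x P_X(x)\ketbra{x}\otimes \rho_x^A$ and $\sum_x P_X(x)\ketbra{x}\otimes \rho_x^B$, with smoothing parameter chosen so that the lemma's $\sqrt{\delta}$ matches the $(\eps')^2/9$ required by Fact \ref{lem:localLB}, and using monotonicity of $H_H^{\delta}$ in $\delta$, produces a subset $\mathcal{S}\subseteq\mathcal{X}$ of $P_X$-probability at least $1-O(\mathrm{poly}(\eps))$ on which
\begin{align*}
H_H^{(\eps')^2/9}(A)_{\rho_x} &\leq H_H^{\eps^2}(A|X)+O(\log(1/\eps)),\\
H_H^{(\eps')^2/9}(B)_{\rho_x} &\leq H_H^{\eps^2}(B|X)+O(\log(1/\eps)).
\end{align*}
I then fix $\log|A_p|=\log|A|-H_H^{\eps^2}(A|X)-O(\log(1/\eps))$ and $\log|B_p|=\log|B|-H_H^{\eps^2}(B|X)-O(\log(1/\eps))$; by Fact \ref{lem:localLB}, for every $x\in \mathcal{S}$ the unitaries $U_x$ and $V_x$ produce outputs whose marginals on $A_p$, $B_p$ are each $\eps'$-close to $\ketbra{0}$ in trace distance.

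The second step lifts closeness of the two marginals to closeness of the joint. Writing $\tau_x^{A_pB_p}$ for the joint output conditional on $X_B=x$, the projectors $P_1=\ketbra{0}^{A_p}\otimes I^{B_p}$ and $P_2=I^{A_p}\otimes \ketbra{0}^{B_p}$ satisfy $\Tr[\tau_x P_i]\geq 1-\eps'$, and since $P_1+P_2-P_1P_2\leq I$ we obtain $\braket{00|\tau_x|00}\geq 1-2\eps'$. The gentle measurement lemma then yields $\|\tau_x^{A_pB_p}-\ketbra{0}^{A_p}\otimes\ketbra{0}^{B_p}\|_1\leq O(\sqrt{\eps'})$. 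Averaging over $x$, with the bad set contributing at most $O(\mathrm{poly}(\eps))$ in trace distance, and choosing $\eps'=O(\eps^2)$, delivers an overall trace distance $O(\eps)$ from $\ketbra{00}$; the net qubit count is $\log|A_p|+\log|B_p|-\log|\mathcal{X}|$, which matches the rate in the proposition.

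The main obstacle is aligning the three smoothing-parameter losses — the square root from Lemma \ref{lem:averageToWorstcaseHH}, the $(\cdot)^2/9$ squaring from Fact \ref{lem:localLB}, and the square root from the gentle measurement step — so that the final rate is controlled by $H_H^{\eps^2}(\cdot|X)$ with only an additive $O(\log(1/\eps))$ loss. Once these powers are chosen consistently the pieces assemble cleanly into the stated rate.
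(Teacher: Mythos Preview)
Your overall strategy is sound and reaches the same conclusion as the paper, but the route differs in one structural respect. The paper does not use Fact~\ref{lem:localLB} as a black box on Alice's side: it opens up the action of $U_x$ on the Schmidt decomposition of $\ket{\rho_x}^{ABR}$ and proves the stronger statement that, after tracing out $A_g$, the state on $A_pRB$ is $O(\sqrt{\eps})$-close to the \emph{factorised} state $\ketbra{0}^{A_p}\otimes\rho_x^{RB}$. This directly hands Bob the state $\rho_x^B$ and avoids your second step entirely. Your approach instead keeps $U_x$ opaque, argues only at the level of the $A_p$ and $B_p$ marginals, and then recombines via the projector-overlap/gentle-measurement trick. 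Both are valid; yours is more modular, while the paper's buys the explicit tensor structure (which is reused later, e.g.\ in the analysis of Protocol~B) and saves one square-root in the error.

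One bookkeeping point: your choice $\eps'=O(\eps^2)$ is inconsistent with controlling the rate by $H_H^{\eps^2}(\cdot\mid X)$. Running Lemma~\ref{lem:averageToWorstcaseHH} with input parameter $\delta$ gives $H_H^{\sqrt{\delta}}(\rho_x)\le H_H^{\delta}(\cdot\mid X)-\log\delta$; to make the right-hand side match $H_H^{\eps^2}(\cdot\mid X)+O(\log(1/\eps))$ via monotonicity you need $\delta\ge\eps^2$, and to feed the left-hand side into Fact~\ref{lem:localLB} you need $(\eps')^2/9\ge\sqrt{\delta}\ge\eps$, forcing $\eps'\ge 3\sqrt{\eps}$. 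Your gentle-measurement step then yields final error $O(\sqrt{\eps'})=O(\eps^{1/4})$, not $O(\eps)$. This is harmless for the proposition (the paper itself only claims $O(\sqrt{\eps})$), but the sentence asserting $\eps'=O(\eps^2)$ should be corrected.
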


 \begin{proof}
     First, fix an $x\in \mathcal{X}$, and consider the Schmidt decomposition of the state $\ket{\rho_x}^{ABR}$:
     \[
     \ket{\rho_x}^{ABR}=\sum_{s} \lambda_s \ket{s}^{A}\ket{s}^{BR}.
     \]
Consider the set of the smallest $\lambda_s$ whose squares add up to at most $\eps$. Let us call this set \textsc{bad}. The action of $U_x$ is to relabel those $\ket{s}^A$ which have a corresponding $\lambda_s$ which is not in \textsc{bad}:
\[
U_x : \ket{s}^A\to \ket{s}^{A_g}\ket{0}^{A_p}~~\forall \ket{s}~~\textup{ such that } \lambda_s\notin \textsc{bad}.
\]
The vector $\ket{s}^{A_g}$ is simply a low dimensional embedding of $\ket{s}^{A}$ into the system $A_g$, which has dimension at least $2^{\widetilde{H}_{H}^{\eps}(A)-1}$. This embedding preserves the pairwise inner products between the vectors, i.e., for all $s, s'$ such that $\lambda_s, \lambda_{s'}\notin \textsc{bad}$:
\[
\braket{s|s'}^{A_g}=\braket{s|s'}^{A}.
\]
We can then write:
\[
U_x^{A\to A_pA_g}\ket{\rho_x}^{ARB}= \sum\limits_{s : \lambda_s\notin \textsc{bad}}\lambda_s \ket{0}^{A_p}\ket{s}^{A_g}\ket{s}^{RB}+\ket{\textsc{junk}}^{A_pA_gRB}.
\]
It is then not hard to see that
\[
\norm{U_x\cdot \rho^{ARB}_x-\ketbra{0}^{A_p}\otimes \sum\limits_{\substack{s,s' \\ \lambda_s, \lambda_{s'}\notin \textsc{bad} }}\lambda_s\lambda_{s'}\ket{s}\bra{s'}^{A_g}\otimes \ket{s}\bra{s'}^{RB}}_1 \leq O(\sqrt{\eps}).
\]
Tracing out the system $A_g$ and noting that the substate $\ketbra{0}^{A_p}\otimes \sum\limits_{s \lambda_s\notin \textsc{bad} }\lambda_s \ket{s}\bra{s}^{RB}$ is $\eps$ close to $\rho_x^{RB}$,  one can see that:
\[
\norm{\Tr_{A_g}\left[U_x\cdot \rho^{ARB}_x\right]-\ketbra{0}^{A_p}\otimes \rho_x^{RB}}_1 \leq O(\sqrt{\eps}).
\]
Now, consider the cq state:
\[
\sum\limits_{x} P_X(x)\ketbra{x}^{X_A}\otimes \rho_x^{A},
\]
where $P_X(x)$ is the probability of the outcome $x$ when $\rho^{A}$ is measured with the POVM $\Lambda$. Then, using Lemma \ref{lem:averageToWorstcaseHH} we see that there exists a subset of $x$'s, which we call $\mathcal{S}_{\textsc{alice}}$ such that 
\begin{align*}
    &\Pr\limits_{P_X}\left[\mathcal{S}_{\textsc{alice}}\right]\geq 1-2\sqrt{\eps}\\
    &H_H^{\eps}(\rho_x^A)\leq H_H^{\eps^2}(A|X)-\log \eps ~~ \forall x\in \mathcal{S}_{\textsc{alice}}.
\end{align*}
Collating the arguments above, one can then see that the state on the system $X_BRB$ after Alice sends the system $X_A$ through the dephasing channel satisfies the following property:
\[
\norm{\sum\limits_{x}P_X(x)\ketbra{x}^{X_B}\otimes \Tr_{A_g}\left[U_x\cdot \rho^{ARB}_x\right]- \ketbra{0}^{A_p}\otimes \left(\sum\limits_{x}P_X(x)\ketbra{x}^{X_B}\otimes \rho_x^{RB}\right)}_1 \leq O(\sqrt{\eps}),
\]
where the system $A_p$ is constituted by $H_H^{\eps^2}(A|X)-\log \eps$ qubits. Another applpication of Lemma \ref{lem:averageToWorstcaseHH} shows us that there exists a set $\mathcal{S}_{\textsc{bob}}$ such that
\begin{align*}
    &\Pr\limits_{P_X}\left[\mathcal{S}_{\textsc{bob}}\right]\geq 1-2\sqrt{\eps}\\
    &H_H^{\eps}(\rho_x^B)\leq H_H^{\eps^2}(B|X)-\log \eps ~~ \forall x\in \mathcal{S}_{\textsc{bob}}.
\end{align*}
where the entropic quantities in the expression above are computed with respect to the cq state $\sum\limits_{x}P_X(x)\ketbra{x}^{X_B}\otimes \rho_x^B$. Therefore, using arguments that are similar to those we used in the case of Alice, we see that after Bob's actions and discarding the system $B_g$, the global state is $O(\sqrt{\eps})$ close to pure states on the $A_p$ and $B_p$ system, where:
\[
\log \abs{A_pB_p}\geq \log \abs{AB}-H_H^{\eps^2}(A|X)-H_H^{\eps^2}(B|X)+O(\log \eps).
\]
Recall however that we now have to adjust for the fact that Alice had borrowed $\log \abs{\mathcal{X}}$ qubits. Therefore, the net number of pure qubits distills is:
\[
\log \abs{A_pB_p}-\log \abs{C}\geq \log \abs{AB}-H_H^{\eps^2}(A|X)-H_H^{\eps^2}(B|X)-\log \abs{\mathcal{X}}+O(\log \eps).
\]
This concludes the proof.
 \end{proof}

 As mentioned earlier, the number of pure qubit states that Protocol A distils is nowhere near optimal. The is of course due to the $-\log \abs{\mathcal{X}}$ term over which we have no control. To fix this issue, we need to replace the POVM $\Lambda$ with some other POVM $\Lambda'$ which has far fewer number of outcomes, yet still allows Alice and Bob to distil $\log \abs{A}-H_H^{\eps}(A|X)$ and $\log \abs{B}-H_H^{\eps}(B|X)$ pure qubit states. This is exactly what the measurement compression theorem allows us to do, as we explain in the next section.

\subsection{Measurement Compression}
As mentioned in the last section, the we need to replace the POVM $\Lambda$ with a POVM $\Lambda'$ which has a much smaller number of outcome, in order to increase the number of pure qubit states that Protocol A distils. However, an issue with this strategy is that this new POVM may not allow Alice and Bob to individually distil $\log \abs{A}-H_H^{\eps}(A|X)$ and $\log \abs{B}-H_H^{\eps}(B|X)$ pure qubits. The measurement compression theorem comes to our aid here. Thus in this section we take a small detour from our exposition to state the measurement compression theorem.
Suppose we are given a bipartite quantum state $\rho^{AB}$ and a POVM $\Lambda^{A\to X}$. To understand the action of this POVM on the state $\rho^{AB}$, consider a purification $\ket{\rho}^{ABR}$. It can be shown (see \cite{Wilde_etal_measurementcomp}) that the global state, after the action of the POVM on the system $A$, looks like
\[\label{eq:idealpostmeasurementstate}
\sum\limits_{x}P_X(x) \ketbra{x}^X\otimes \rho_x^{BR} \tag{1},
\]
where
\[
\rho_x^{BR}\coloneqq \frac{1}{\Tr\left[\Lambda_x\ketbra{\rho}^{ABR}\right]}\Tr_A\left[\Lambda_x\ketbra{\rho}^{ABR}\right].
\]
and $P_X(x)$ is the probability of the outcome $x$ when $\rho^A$ is measured using the POVM $\Lambda$. The goal of the measurement compression theorem is to replace $\Lambda$ by some other POVM $\Lambda'^{A\to Y}$ such that the support size of the distribution $P_Y$ induced by $\Lambda'$ is much smaller than that of $P_X$, yet the post measurement state $\sum\limits_{y}P_Y(y)\ketbra{y}^Y\otimes \rho_y^{BR}$ is close to the ideal post measurement state in Equation \ref{eq:idealpostmeasurementstate}. This of course may not be possible with a single POVM $\Lambda'$ (the distribution $P_X$ may not be compressible). However, the measurement compression theorem gets around this issue by using multiple POVMs, indexed by $k\in [K]$, each with a small number of outcomes. Which of these POVMs one chooses to actually do the measurement is decided by picking $k$ randomly. Let us refer to these `smaller' POVMs as $\Theta^A(k)=\brak{\Theta_1^A(k), \Theta_2^A(k),\ldots, \Theta_L^A(k) }$, where $L$ is the number of outcomes. The new measurement process can then be encapsulated as follows:\\
\vspace{1mm}

\noindent\fbox{\parbox{\textwidth}{
\begin{enumerate}
    \item Pick $k\overset{R}{\gets} [K]$.
    \item Measure the register $A$ of the state $\rho^{AB}$ using the smaller POVM $\Theta^A(k)$. \\Suppose this measurement produces an outcome $\ell\in [L]$.
    \item Map the symbol $(k,\ell)$ appropriately to an $x\in \mathcal{X}$ to recover the correct measurement outcome.
\end{enumerate}
}
}
\vspace{1mm}

Roughly, the measurement compression theorem says that, as long as $K$ and $L$ are large enough, the procedure above produces a post measurement state that is close to the ideal state in Equation \ref{eq:idealpostmeasurementstate}. We give the precise statement of the theorem below \cite{ChakrabortyPadakandlaSen_22}:
\begin{fact}\label{fact:measurementCompression}
    Given the bipartite quantum state $\rho^{AB}$ and the POVM $\brak{\Lambda_x}_x$ where $x\in \mathcal{X}$, let $\ket{\rho}^{ABR}$ be some purification  of $\rho^{AB}$ and the ideal post measurement state, when the $A$ register of $\rho^{AB}$ is measured using $\Lambda$ is given by:
    \[
    \sum\limits_{x}P_X(x) \ketbra{x}^X\otimes \rho_x^{BR}.
    \]
    Here $P_X$ is the distribution induced by the measurement on the set of symbols $\mathcal{X}$. Suppose we are given integers $K$ and $L$. Then, as long as
     \begin{align*}
        &\log K + \log L \geq H^{'\eps^4}_{\max}(X)+O(\log \frac{1}{\eps}) \\
        &\log L \geq I_{\max}^{\eps^4}(X:RB)+O(\log \frac{1}{\eps}).
    \end{align*}
    there exist POVMs $\Theta^A(1), \Theta^A(2),\ldots \Theta^A(K)$, where each POVM $\Theta_k^A$ has outcomes in the set $[L]\bigcup \brak{\bot}$ ($\bot$ signifying the outcome corresponding o failure), and a function 
    \[
    f : [K]\times [L]\to \mathcal{X}
    \]
    such that
    \[
    \norm{\rho^{XBR}-\sum_x\sum_{k,\ell}Q_{KL}(k,\ell)\cdot \mathtt{1}_{f(k,\ell)=x}\ketbra{x}^X\otimes \sigma_{f(k,\ell)}^{RB}}_1 \leq O(\eps).
    \]
    where $Q_K\overset{\eps^{1/4}}{\approx} \mathbf{Unif}[K]$, $Q_{L|k}$ is the distribution induced on the set $[L]\bigcup \brak{\bot}$ by the POVM $\Theta^A(k)$ and 
    \[
\sigma^{RB}_{f(k,\ell)}\coloneqq \frac{1}{\Tr\left[\Theta_{\ell}^A(k)\ketbra{\rho}^{ARB}\right]} \Tr_A\left[\Theta_{\ell}^A(k)\ketbra{\rho}^{ARB}\right].
\]
\end{fact}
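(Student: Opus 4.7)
The plan is to follow the one-shot measurement compression template of \cite{ChakrabortyPadakandlaSen_22} via a random coding argument. The two hypotheses encode two distinct requirements: the bound $\log K + \log L \geq \phmax^{\eps^4}(X) + O(\log \frac{1}{\eps})$ asserts that the total number of samples in the codebook is large enough to cover the effective support of $P_X$ in an operator sense (so that the completeness condition $\sum_{\ell} \Theta_\ell^A(k) \leq \I^A$ can be arranged), while $\log L \geq I_{\max}^{\eps^4}(X:RB) + O(\log \frac{1}{\eps})$ ensures that each individual codebook of size $L$ already suffices to reproduce the correct conditional state on $RB$ via a soft-covering / convex-split argument.

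The first step is to replace $\Lambda$ by its restriction to the typical outcomes $\mathcal{T} \coloneqq \brak{x \in \mathcal{X} : P_X(x) \geq 2^{-\phmax^{\eps^4}(X) - O(\log \frac{1}{\eps})}}$. By the definition of $\phmax^{\eps^4}$, the outcomes outside $\mathcal{T}$ carry total probability at most $O(\eps^4)$, so the ideal post-measurement state is perturbed by at most $O(\eps^4)$ in trace distance; after truncation every surviving outcome satisfies $P_X(x) \geq 2^{-\phmax^{\eps^4}(X) - O(\log \frac{1}{\eps})}$, which is precisely the lower bound required by the operator Chernoff step that follows.

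The random construction then samples $KL$ outcomes $\brak{x_{k,\ell}}_{k \in [K], \ell \in [L]}$ independently from $P_X$ restricted to $\mathcal{T}$, defines $f(k,\ell) \coloneqq x_{k,\ell}$, and sets
\[
\Theta_\ell^A(k) \coloneqq \frac{\alpha}{L \cdot P_X(x_{k,\ell})} \cdot \Lambda_{x_{k,\ell}}, \qquad \Theta_\bot^A(k) \coloneqq \I^A - \sum_{\ell} \Theta_\ell^A(k),
\]
with $\alpha = 1 - O(\eps)$ providing slack for the $\bot$ outcome. Two properties must then hold with high probability over the random codebook. First, $\sum_\ell \Theta_\ell^A(k) \leq \I^A$ for every $k$: this is an operator Chernoff / matrix tail bound whose failure probability is controlled by the total sample count $KL \geq 2^{\phmax^{\eps^4}(X) + O(\log \frac{1}{\eps})}$ together with the lower bound on $P_X(x)$ from the truncation step. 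Second, for every $k$ the averaged conditional state $\frac{1}{L} \sum_\ell \sigma^{RB}_{f(k,\ell)}$ is $O(\eps)$-close to $\rho^{RB}$; this is precisely the one-shot convex split / soft covering lemma under the hypothesis $\log L \geq I_{\max}^{\eps^4}(X:RB) + O(\log \frac{1}{\eps})$.

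The main obstacle is that both guarantees must hold simultaneously for a single realization of the codebook and for all $k$ at once. A naive union bound over $k$ costs a factor of $K$ in the failure probability, which has to be absorbed into the $O(\log \frac{1}{\eps})$ slack in both hypotheses; this is exactly where the sharpest quantitative one-shot tools of \cite{ChakrabortyPadakandlaSen_22}, in both the matrix tail bound and the convex split lemma, become necessary. Once a deterministic codebook satisfying both conditions has been extracted by the standard probabilistic-method argument, the claimed trace-norm approximation of the joint state on $XBR$ follows by chaining the completeness bound with the soft-covering guarantee, using the Gentle Measurement Lemma to convert the operator-inequality statements into trace-distance bounds on the joint post-measurement state, and absorbing the $O(\eps^4)$ truncation error via the triangle inequality.
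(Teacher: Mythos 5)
You should first note that the paper does not prove Fact \ref{fact:measurementCompression} at all: it is imported verbatim from Proposition 4.1 of \cite{ChakrabortyPadakandlaSen_22}, so your sketch is being measured against that external argument. Your high-level template (truncate, sample a random codebook, enforce sub-normalisation by operator concentration, enforce faithfulness on $RB$ by a covering argument, then derandomise) is the right one, but the way you have distributed the two hypotheses between the two concentration steps contains a genuine gap.

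The condition $\sum_{\ell}\Theta^A_{\ell}(k)\leq \I^A$ is a \emph{per-$k$} requirement and must be won with only the $L$ samples belonging to codebook $k$; the total count $KL$ is irrelevant to it. With your elements $\Theta_\ell^A(k)=\frac{\alpha}{L\,P_X(x_{k,\ell})}\Lambda_{x_{k,\ell}}$, the sum is an average of $L$ i.i.d.\ operators $\Lambda_x/P_X(x)$ with mean $\approx\I^A$, and the only bound your truncation provides on the summands is $\Lambda_x/P_X(x)\leq 2^{\phmax^{\eps^4}(X)+O(\log\frac{1}{\eps})}\I$. An Ahlswede--Winter-type tail bound with summands of that size needs $L$ at least of order $2^{\phmax^{\eps^4}(X)}$, whereas the hypothesis only guarantees $\log L\geq I_{\max}^{\eps^4}(X:RB)+O(\log\frac{1}{\eps})$, which can be far smaller. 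The step that actually closes this is absent from your sketch: writing $\Lambda_x/P_X(x)=\rho^{-1/2}\omega_x\rho^{-1/2}$ with $\omega_x\coloneqq \rho^{1/2}\Lambda_x\rho^{1/2}/P_X(x)$ (which, up to a purification argument, carries the same information as $\rho_x^{RB}$ since the global state is pure), one must first \emph{smooth the conditional states} so that $\omega_x\leq 2^{I_{\max}^{\eps^4}(X:RB)+O(\log\frac{1}{\eps})}\,\rho$, i.e.\ bound the summands relative to $\rho$ rather than relative to $\I$; only then does $\log L\geq I_{\max}^{\eps^4}(X:RB)+O(\log\frac{1}{\eps})$ suffice for the per-$k$ operator inequality. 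This is precisely why the construction quoted later in this paper has the form $\Theta^A_\ell=\frac{1}{1+O(\eps^{1/4})}\cdot\frac{1}{L}\cdot\rho^{-1/2}\sigma_\ell\rho^{-1/2}$ with \emph{smoothed} states $\sigma_\ell$. Your truncation to $\brak{x:P_X(x)\geq 2^{-\phmax^{\eps^4}(X)-O(\log\frac{1}{\eps})}}$ handles only the marginal on $X$; the actual role of the hypothesis $\log K+\log L\geq \phmax^{\eps^4}(X)+O(\log\frac{1}{\eps})$ is the purely classical concentration of the empirical type of the $KL$ samples, which is what makes the induced distribution on $\mathcal{X}$ close to $P_X$ and $Q_{KL}$ close to uniform. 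Finally, since the covering lemma must be invoked at the smoothed $I_{\max}^{\eps^4}$, the same perturbation of the ensemble has to be carried out before sampling and its cost charged to the final trace-distance bound; as written, your codebook is built from the unsmoothed $\Lambda_x$, for which the unsmoothed max-information may be unboundedly larger than $I_{\max}^{\eps^4}(X:RB)$.
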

We will make us of a couple of other useful facts about measurement compression, specifically that the distribution $Q_{KL}$ is close to the uniform distribution on $[K]\times [L]$ and that for all $(k,\ell)$ in the support of $Q_{KL}$, the state $\sigma^{RB}_{k,\ell}$ is close to $\rho^{RB}_{f(k,\ell)}$. We state these facts below:

\begin{fact}\label{fact:statescloseQkl}
   For all $k$ in the support of the distribution $Q_K$, and for all $\ell$ in the support of the distribution $Q_{L~|~k}$ (aside from the outcome $\bot$), it holds that
    \[
    \norm{\sigma^{RB}_{f(k,\ell)}-\rho^{RB}_{f(k,\ell)}}_1\leq O(\eps).
    \] 
\end{fact}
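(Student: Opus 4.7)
The plan is to unpack the explicit construction of the compressed POVM from \cite{ChakrabortyPadakandlaSen_22}. In standard one-shot measurement compression constructions, each non-failure element has the form
\[
\Theta_\ell^A(k) \;=\; \gamma_{k,\ell}\, \Pi\, \Lambda_{f(k,\ell)}^A\, \Pi \qquad (\ell \neq \bot),
\]
where $\Pi$ is a typical-subspace projector for $\rho^A$ satisfying $\Tr[\Pi \rho^A] \geq 1 - O(\eps)$ and $\gamma_{k,\ell}$ is a positive scalar. The key structural observation is that the scalar $\gamma_{k,\ell}$ cancels between numerator and denominator in the definition of $\sigma^{RB}_{f(k,\ell)}$, so establishing the claim reduces to showing that the projector-sandwiched operator $\Pi \Lambda_{f(k,\ell)}^A \Pi$ produces essentially the same conditional post-measurement state on $RB$ as $\Lambda_{f(k,\ell)}^A$ itself.

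Next, I would apply the gentle measurement lemma to the purification $\ket{\rho}^{ARB}$: the condition $\Tr[\Pi \rho^A] \geq 1 - O(\eps)$ gives $\|(\Pi \otimes I^{RB})\ket{\rho}^{ARB} - \ket{\rho}^{ARB}\|_2 \leq O(\sqrt{\eps})$. Propagating this through the CP map $X \mapsto \Tr_A[(\Lambda_{f(k,\ell)}^A \otimes I^{RB}) X]$, which is contractive in trace norm, yields
\[
\left\| \Tr_A\!\left[\Pi \Lambda_{f(k,\ell)}^A \Pi\, \ketbra{\rho}^{ARB}\right] - \Tr_A\!\left[\Lambda_{f(k,\ell)}^A \ketbra{\rho}^{ARB}\right] \right\|_1 \leq O(\sqrt{\eps}).
\]
Taking traces of both sides gives $|\Tr[\Pi \Lambda_{f(k,\ell)}^A \Pi\, \rho^A] - \Tr[\Lambda_{f(k,\ell)}^A \rho^A]| \leq O(\sqrt{\eps})$, so both numerator and denominator in the formula for $\sigma^{RB}_{f(k,\ell)}$ are close to the corresponding quantities defining $\rho^{RB}_{f(k,\ell)}$.

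The final step is a simple normalisation argument: if numerator and denominator of a quotient are each $O(\sqrt{\eps})$-close, then the quotient is $O(\eps)$-close provided the denominator is not too small. The anticipated main obstacle is exactly this denominator control: we need $\Tr[\Lambda_{f(k,\ell)}^A \rho^A] = P_X(f(k,\ell))$ to be bounded below by some quantity that dominates $\sqrt{\eps}$. This is precisely where the restriction to the support of $Q_{KL}$ with $\ell \neq \bot$ comes in: the compression construction routes all atypical, sub-threshold outcomes into the failure symbol $\bot$, so that any non-failure $(k,\ell)$ in the support of $Q_{KL}$ corresponds to an $x = f(k,\ell)$ with $P_X(x)$ bounded below by a suitable threshold. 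Tracking the threshold implicit in the construction of \cite{ChakrabortyPadakandlaSen_22} and adjusting the $O(\eps)$ constants accordingly is the only remaining bookkeeping.
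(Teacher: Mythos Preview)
The paper itself does not supply a proof of this fact; it simply defers to the proof of Proposition~4.1 in \cite{ChakrabortyPadakandlaSen_22}. So there is no in-paper argument to compare against, and your proposal is really an attempt to reconstruct the argument of the cited reference. Two points deserve attention.

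First, the structural form you posit for the compressed POVM elements does not match what is actually used. As the present paper records (see the proof of Lemma~\ref{lem:unitaryOne}), the construction from \cite{ChakrabortyPadakandlaSen_22} gives, for $\ell\neq\bot$,
\[
\Theta_\ell^A(k)\;=\;\frac{1}{1+O(\eps^{1/4})}\cdot\frac{1}{L}\cdot\bigl(\rho^{-1/2}\,\sigma_\ell\,\rho^{-1/2}\bigr)^A,
\]
i.e.\ a pretty-good-measurement form in which $\sigma_\ell$ is an appropriately smoothed state. This is not the same as the sandwiched form $\gamma_{k,\ell}\,\Pi\,\Lambda_{f(k,\ell)}\,\Pi$ you assume. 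The closeness $\sigma^{RB}_{f(k,\ell)}\approx\rho^{RB}_{f(k,\ell)}$ is obtained in the reference essentially \emph{by construction}: the state $\sigma_\ell$ is designed as an $O(\eps)$ perturbation of (a scalar multiple of) the ideal post-measurement state, so that after conjugation by $\rho^{-1/2}$ and normalisation the resulting $RB$-marginal is $O(\eps)$-close to $\rho^{RB}_{f(k,\ell)}$ directly. There is no separate gentle-measurement step.

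Second, and more seriously, your final normalisation arithmetic is incorrect. If the unnormalised numerator operators differ by $O(\sqrt\eps)$ in trace norm and the denominators (their traces) differ by $O(\sqrt\eps)$, then the normalised states differ by $O(\sqrt\eps/c)$ where $c$ is a lower bound on the denominator --- not $O(\eps)$. There is no mechanism by which two additive $O(\sqrt\eps)$ errors combine to give $O(\eps)$; they simply add. Along your route, to obtain $O(\eps)$ one would need the numerator perturbation to be $O(\eps)$ in the first place \emph{and} the denominator $P_X(f(k,\ell))$ to be bounded below by a constant, neither of which is supplied by gentle measurement alone. This is why the actual proof proceeds via the explicit form of $\sigma_\ell$ rather than through a generic gentle-measurement bound.
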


\begin{fact}\label{fact:closeQkl}
    Given the setup of Fact \ref{fact:measurementCompression}, it holds that
    \[
    \norm{Q_{KL}-\mathbf{Unif}[K]\times \mathbf{Unif}[L]}_1\leq O(\eps^{1/2}).
    \]
\end{fact}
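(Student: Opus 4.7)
The plan is to split the total-variation distance via the chain rule for $Q_{KL}$ and then bound each of the resulting pieces separately. Writing $Q_{KL}(k,\ell) = Q_K(k)\, Q_{L|k}(\ell)$ for $\ell \in [L]$, inserting the intermediate distribution $\tfrac{1}{K} Q_{L|k}(\ell)$, applying the triangle inequality, and using $\sum_\ell Q_{L|k}(\ell) \leq 1$, I obtain
\[
\norm{Q_{KL} - \mathbf{Unif}[K]\times\mathbf{Unif}[L]}_1 \;\leq\; \norm{Q_K - \mathbf{Unif}[K]}_1 \;+\; \mathbb{E}_{k \sim \mathbf{Unif}[K]}\!\left[\norm{Q_{L|k} - \mathbf{Unif}[L]}_1\right].
\]
The first term on the right is already at most $\eps^{1/4}$ by the bound on $Q_K$ explicitly recorded in Fact \ref{fact:measurementCompression}, so all of the work is concentrated in the second term.

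For the second term I would peek into the construction of the POVMs $\brak{\Theta^A(k)}$ inside the proof of the one-shot measurement compression theorem of \cite{ChakrabortyPadakandlaSen_22}. In that construction one samples a codebook $\brak{f(k,\ell)}_{(k,\ell)\in[K]\times[L]}$ i.i.d.\ from a typical-set rescaling of $P_X$, and each POVM element is defined as a common rescaling of $\tilde{\Lambda}_{f(k,\ell)}^A$ chosen so that $\Tr[\Theta_\ell^A(k)\rho^A]$ concentrates around $1/L$, while the $\bot$ outcome absorbs the normalisation slack $I^A - \sum_\ell \Theta_\ell^A(k)$. The operator Chernoff / matrix Ahlswede--Winter estimates already used to prove Fact \ref{fact:measurementCompression} then show that except on a set of codebook indices $k$ of $Q_K$-probability $O(\eps^{1/2})$ one has $\norm{Q_{L|k} - \mathbf{Unif}[L]}_1 = O(\eps^{1/2})$; bounding the contribution from the exceptional $k$'s by the trivial constant $2$ keeps the expectation at $O(\eps^{1/2})$, which combined with the first term yields the claimed $O(\eps^{1/2})$ bound.

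The main obstacle is that this conditional-uniformity property is not part of the black-box statement of Fact \ref{fact:measurementCompression}; it is rather a byproduct of the codebook construction, so my proof has to invoke the structure of that construction rather than the statement alone. A cleaner, truly black-box alternative would be to test the post-measurement approximation from Fact \ref{fact:measurementCompression} against well-chosen operators on the $RB$ side (for instance $I^B \otimes \rho^R$, which is insensitive to $f(k,\ell)$) and try to extract $\sum_{k,\ell}|Q_{KL}(k,\ell) - 1/(KL)|$ directly from the resulting scalar inequality; this looks appealing but appears to yield a strictly worse $\eps$-exponent, which is why the construction-aware argument above is preferable.
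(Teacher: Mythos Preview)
The paper does not supply its own proof of this fact: it simply says ``For a proof of these facts, see the proof of Proposition~4.1 in \cite{ChakrabortyPadakandlaSen_22}.'' So there is no independent argument to compare against; both you and the paper defer to the construction inside the cited measurement-compression theorem.

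Your triangle-inequality decomposition into $\norm{Q_K-\mathbf{Unif}[K]}_1$ plus $\mathbb{E}_{k\sim\mathbf{Unif}[K]}\norm{Q_{L|k}-\mathbf{Unif}[L]}_1$ is correct and is a clean way to organise the bound. For the second term, however, you are working harder than necessary. The paper itself later records (in the proof of Lemma~\ref{lem:unitaryOne}) the explicit form
\[
\Theta_\ell^A(k)\;=\;\frac{1}{1+O(\eps^{1/4})}\cdot\frac{1}{L}\cdot\bigl(\rho^{-1/2}\sigma_\ell\,\rho^{-1/2}\bigr)^A
\qquad(\ell\neq\bot),
\]
from which $Q_{L|k}(\ell)=\Tr[\Theta_\ell^A(k)\rho^A]=\frac{1}{(1+O(\eps^{1/4}))L}$ follows \emph{deterministically} for every $k$ in the support of $Q_K$ and every $\ell\neq\bot$. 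No operator-Chernoff step, no exceptional set of $k$'s, is needed at this stage: the near-uniformity of $Q_{L|k}$ on $[L]$ is baked into the very definition of the compressed POVM, and the $\bot$ outcome carries mass $O(\eps)$. Your instinct that the black-box statement of Fact~\ref{fact:measurementCompression} is insufficient and one must open the construction is exactly right; the point is only that once opened, the relevant piece is an identity rather than a concentration estimate.
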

For a proof of these facts, see the proof of Proposition 4.1 in \cite{ChakrabortyPadakandlaSen_22}.
\subsection{An Improvement in Protocol A: Protocol B}\label{sec:optWithAncilla}
We can now use the measurement compression theorem to design a better protocol for purity distillation than Protocol A. The idea is to make use of the two indices $k$ and $\ell$ that are implicit in the measurement compression theorem. Recall that the index of the POVM to be used in the measurement process is given by $k$. Naturally, Alice and Bob can use this as shared randomness. Although shared randomness is not one of the resources that Alice and Bob are allowed to have for purity distillation, we will soon get rid of it by derandomising. Next, Alice can measure her register $A$ using the POVM $\Theta^A(k)$ indicated by the shared randomness. Since this POVM has only $L$ outcomes, Alice needs to borrow only $L\geq I_{\max}^{\eps}(X:RB)+O(\log \frac{1}{\eps})$ qubits, which is much smaller than $\log \abs{\mathcal{X}}$. By the measurement compression theorem this measurement process produces a state that is close to the ideal post measurement state if Alice had measured with $\Lambda$, after the $(k,\ell)$ indeces have been mapped to appropriate values of $x$. Thus, one would expect, via similar reasoning as that which we used to prove Lemma \ref{lem:averageToWorstcaseHH}, that for \emph{most} setting of $(k,\ell)$, it would hold that:
\[
H_H^{\eps}(\rho^A_{k,\ell})\leq H_H^{\eps^2}(A|X)-\log\eps. 
\]
Alice can then send her $L$ register to Bob via the dephasing channel. Via the same reasoning as above, we expect that for most values of $(k,\ell)$ the following should hold:
\[
H_H^{\eps}(\rho^B_{k,\ell})\leq H_H^{\eps^2}(B|X)-\log\eps. 
\]
Modulo the two assumptions above, this would complete the description of the protocol. Note that the number of qubit states produced by thus protocol would be roughly:
\[
\log \abs{A}-H_H^{\eps^2}(A|X)+\log \abs{B}-H_H^{\eps^2}(B|X)-I_{\max}^{\eps}(X:RB),
\]
where we have suppressed the additive $\log \eps$ terms. One can show that indeed our intuition is correct, as is shown by the following lemma:

\begin{lemma}\label{lem:mainderandhhtwice}
Given the setup of the measurement compression theorem, there exists a subset $\mathcal{S}$ of $[K]\times [L]$ such that
\[
\abs{\mathcal{S}}\geq (1-\eps^{1/8})KL
\]
and for all $(k,\ell)\in \mathcal{S}$ it holds that
it holds that
\begin{align*}
    &H_H^{O(\eps^{1/8})}(RB~|~k,\ell) \leq H_H^{O(\eps)}(RB~|~X)+O(\log\frac{1}{\eps}) \\
    \intertext{and}
    &H_H^{O(\eps^{1/8})}(B~|~k,\ell) \leq H_H^{O(\eps)}(B~|~X)+O(\log \frac{1}{\eps}).
\end{align*}
\end{lemma}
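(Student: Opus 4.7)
The plan is to first extract a ``typical'' subset $T\subseteq \mathcal{X}$ of outcomes of the ideal measurement $\Lambda$ on which both conditional states $\rho_x^{RB}$ and $\rho_x^B$ admit the required entropic upper bounds, then pull $T$ back through the compression function $f$ to obtain $\mathcal{S}\subseteq [K]\times [L]$, and finally transfer the bounds from the ideal states $\rho^{RB}_x$ to the actual post-measurement states $\sigma^{RB}_{f(k,\ell)}$ produced by the compressed POVMs $\Theta^A(k)$.

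Applying Lemma~\ref{lem:averageToWorstcaseHH} with parameter $\eps$ separately to the ideal cq-state $\rho^{XRB}=\sum_x P_X(x)\ketbra{x}^X\otimes \rho_x^{RB}$ and to its $R$-marginal $\rho^{XB}$ yields two subsets $T_{RB},T_B\subseteq \mathcal{X}$, each of $P_X$-mass at least $1-2\sqrt{\eps}$, on which $H_H^{\sqrt{\eps}}(\rho_x^{RB})\leq H_H^{\eps}(RB|X)-\log\eps$ and $H_H^{\sqrt{\eps}}(\rho_x^{B})\leq H_H^{\eps}(B|X)-\log\eps$ respectively. Set $T\coloneqq T_{RB}\cap T_B$, so $P_X[T]\geq 1-4\sqrt{\eps}$. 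Define $\mathcal{S}\coloneqq f^{-1}(T)\subseteq [K]\times [L]$; I will show $|\mathcal{S}|/KL\geq 1-\eps^{1/8}$. Tracing out $RB$ in the closeness bound of Fact~\ref{fact:measurementCompression} gives $\norm{P_X- f_{*}Q_{KL}}_1\leq O(\eps)$, so $Q_{KL}[\mathcal{S}]=(f_{*}Q_{KL})[T]\geq P_X[T]-O(\eps)\geq 1-O(\sqrt{\eps})$. Combining with Fact~\ref{fact:closeQkl}, which says $Q_{KL}$ is $O(\sqrt{\eps})$-close in total variation to the uniform distribution on $[K]\times [L]$, yields $|\mathcal{S}|/KL\geq Q_{KL}[\mathcal{S}]-O(\sqrt{\eps})\geq 1-O(\sqrt{\eps})\geq 1-\eps^{1/8}$ for $\eps$ small enough, since $\sqrt{\eps}\leq \eps^{1/8}$ on $(0,1)$. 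Any $(k,\ell)\in \mathcal{S}$ outside the support of $Q_{KL}$ or associated with the failure outcome $\bot$ can be pruned using a further $O(\sqrt{\eps})$ slack.

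It remains to transfer the entropic bound from $\rho_{f(k,\ell)}^{RB}$ to $\sigma_{f(k,\ell)}^{RB}$. For every $(k,\ell)\in \mathcal{S}$ in the support of $Q_{KL}$, Fact~\ref{fact:statescloseQkl} gives $\norm{\sigma_{f(k,\ell)}^{RB}-\rho_{f(k,\ell)}^{RB}}_1\leq O(\eps)$, and partial trace yields the same bound for the $B$-marginals. A standard candidate-witness argument (if $\Pi$ satisfies $\Tr[\Pi\rho]\geq 1-\delta$ then $\Tr[\Pi\sigma]\geq 1-\delta-\norm{\rho-\sigma}_1$) shows that any optimiser for $H_H^{\sqrt{\eps}}(\rho_{f(k,\ell)}^{RB})$ is a valid candidate for $H_H^{\sqrt{\eps}+O(\eps)}(\sigma_{f(k,\ell)}^{RB})$, and similarly for $B$. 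Since $\sqrt{\eps}+O(\eps)\leq O(\eps^{1/8})$ for small $\eps$, monotonicity of $H_H^{\delta}$ in the smoothing parameter delivers the desired bounds.

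The main obstacle is the bookkeeping of smoothing parameters across three places: a $\sqrt{\eps}$ factor from the Markov argument in Lemma~\ref{lem:averageToWorstcaseHH}, an additive $O(\eps)$ from the state perturbation in the transfer step, and an $O(\sqrt{\eps})$ loss when converting $Q_{KL}$-mass into uniform mass to lower-bound $|\mathcal{S}|$. All three losses comfortably fit inside the stated $O(\eps^{1/8})$ slack, provided one also carefully handles outcomes outside the support of $Q_{KL}$ and the failure outcome $\bot$ attached to each compressed POVM.
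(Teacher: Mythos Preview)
Your proposal is correct and follows essentially the same approach as the paper: both arguments take the block optimiser for the conditional $H_H^{\eps}$, use Markov's inequality (twice) to isolate a large set of $x$'s where the block $\Pi_x$ is a good witness, pull this set back through $f$ using Facts~\ref{fact:measurementCompression} and~\ref{fact:closeQkl}, and transfer the witness to the compressed post-measurement states via Fact~\ref{fact:statescloseQkl}. The only organisational difference is that you apply Markov to the ideal states $\rho_x$ first (by invoking Lemma~\ref{lem:averageToWorstcaseHH} directly) and do the $\rho\to\sigma$ transfer at the end, whereas the paper packages the argument into an auxiliary lemma that first transfers the optimiser to $\sigma^{XB}$ and then applies Markov; this reordering is inessential and your version is arguably more streamlined.
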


The proof of this lemma is long but does not offer much further insight into the protocol. The reader can find it in Appendix \ref{appendix:derandhh}. To describe our new protocol, we first list the necessary assumptions as required by Fact \ref{fact:measurementCompression}:
\begin{assumption}{\bf Assumptions for Protocol B}
    \begin{enumerate}
    \item Alice and Bob are given a bipartite state $\rho^{AB}$ with purification $\ket{\rho}^{ABR}$, and also a POVM $\Lambda$. The ideal post measurement state when this POVM acts on the register $A$ is given by:
    \[
    \sum\limits_{x} P_X(x)\ketbra{x}\otimes \rho_x^{BR}.
    \]
    \item There exist integers $K$ and $L$ such that
       \begin{align*}
        &\log K + \log L \geq H^{'\eps^4}_{\max}(X)+O(\log \frac{1}{\eps}) \\
        &\log L \geq I_{\max}^{\eps^4}(X:RB)+O(\log \frac{1}{\eps}).
    \end{align*}
    \item Alice possesses the POVMs $\Theta^A(1), \Theta^A(2), \ldots, \Theta^A(K)$ whose existence is implied by Fact \ref{fact:measurementCompression}. Each of these POVMs produces outputs in the set $[L]\bigcup\brak{\bot}$.
    \item We will use the notation $K_AK_B$ to denote a public coin register that is available to both Alice and Bob. We set $\abs{K_A}=\abs{K_B}=K$. The register in which Alice will store the outcome of the measurement will be referred to as $L_A$.
    \item There is a completely dephasing channel from Alice to Bob given by $\mathcal{P}^{L_A\to L_B}$ where $L_A\cong L_B$.
    \item The distribution on the public coin register is given by $Q_K$, as defined in Fact \ref{fact:measurementCompression}.
    \item Given a POVM element $\Theta^A_{\ell}(k)$, we define
    \[
    \ket{\rho_{k,\ell}}^{ABR}\coloneqq \frac{1}{\sqrt{\Tr\left[\Theta^A_{\ell}(k)\rho^{ABR}\right]}}\sqrt{\Theta^A_{\ell}(k)}\ket{\rho}^{ABR}.
    \]
    and its associated marginals of interest accordingly.
 \end{enumerate}
\end{assumption}

We now describe Protocol B in Table \ref{prot:ProtocolB}.

\begin{table}[h]
        \centering
            \begin{tabular}{  l c r }
            \hline \\
            \textbf{Alice} & & \textbf{Bob} \\
            \hline \\
            & Shared public coin &\\
            & in $K_AK_B$ & \\
            \hdashline\\
            Borrow $I_{\max}^{\eps^4}(X:RB)+O(\log \frac{1}{\eps})$ & &\\ pure ancilla qubits in system $L_A$. & & \\
            \hspace{1em} & &\\
            Apply the isometry & &\\
    $\sum_{k} \ketbra{k}^{K_A}\otimes \sum_{\ell} \ket{\ell}^{L_A} \sqrt{\Theta_{\ell}^{A}(k)}$ & & \\
    on the system $A$. & & \\
                \hspace{1em} & &\\
    Define $U_{k,\ell}$ as local optimal distillation & & \\ code for $\rho_{k,\ell}^A$. & & \\
                \hspace{1em} & &\\
    Apply $\sum_{\ell}\ketbra{\ell}^{L_A}\otimes U_{k,\ell}^{A\to A_pA_g}$ on $AL_A$. & &\\
                \hspace{1em} & &\\
    & $\xrightarrow{L_A\to L_B}$ & \\
    & End of Alice's Actions & \\
    \hdashline\\
      & & Do locally optimal protocol on $B$  \\
      & & conditioned on the contents of $K_BL_B$. \\
            \hline
\end{tabular}
        \caption{\bf Protocol B}
        \label{prot:ProtocolB} 
    \end{table}

\begin{proposition}\label{prop:secondProtocol}
Protocol B distils 
\[
\log\abs{A}-H_{H}^{\eps}(A~|~X)+\log \abs{B}-H_H^{\eps}(B~|~X)-I_{\max}^{\eps^4}(X:RB)+O(\log\eps)
\]
number of pure qubits with error $O(\eps^{1/16})$. The protocol also uses $I_{\max}^{\eps^4}(RB:X)+O(\log \frac{1}{\eps})$ amount of classical communication. The entropic quantities above are all computed with respect to the state
\[
\sum_x \ketbra{x}^{X} \otimes \I^{RB}\otimes \Lambda_x\left(\ketbra{\rho}^{ABR}\right) 
\]
where $\ket{\rho}^{ABR}$ is a purification of $\rho^{AB}$.
\end{proposition}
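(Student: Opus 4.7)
The plan is to follow the state through Protocol B branch-by-branch. After Alice applies the coherent measurement isometry (with shared randomness fixing $k$) and sends $L_A$ through the dephasing channel, the combined state on $K_A K_B L_B A B R$ becomes a genuine cq state: with weight $Q_{KL}(k,\ell) = Q_K(k)\Tr[\Theta_\ell^A(k)\rho^A]$ it carries the tripartite pure state $\ket{\sigma_{k,\ell}}^{ABR} \propto \sqrt{\Theta_\ell^A(k)}\ket{\rho}^{ABR}$, on which Alice's conditional unitary $U_{k,\ell}$ and Bob's conditional unitary $V_{k,\ell}$ subsequently act. Because the two parties act on disjoint registers and the $(k,\ell)$-branches no longer interfere after dephasing, the analysis reduces to bounding, for each branch, the distance between $V_{k,\ell} U_{k,\ell} \cdot \sigma_{k,\ell}^{ABR}$ (restricted to $A_p B_p$) and $\ketbra{0}^{A_p}\otimes \ketbra{0}^{B_p}$.

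For the branchwise analysis I would invoke Lemma \ref{lem:mainderandhhtwice}, which supplies a good set $\mathcal{S} \subseteq [K]\times[L]$ of uniform measure at least $1-\eps^{1/8}$ on which both $H_H^{O(\eps^{1/8})}(\sigma_{k,\ell}^{RB}) \le H_H^{O(\eps)}(RB|X)+O(\log\frac{1}{\eps})$ and the analogous inequality for $B$ hold; Fact \ref{fact:closeQkl} then transfers this to a $Q_{KL}$-measure bound. Two algebraic identities convert these into what we actually need: purity of $\ket{\sigma_{k,\ell}}^{ABR}$ together with Lemma \ref{lem:purehh} gives $H_H^\delta(\sigma_{k,\ell}^A) = H_H^\delta(\sigma_{k,\ell}^{RB})$, while Lemma \ref{lem:hhswitchforconditionalpure} applied to the ideal post-measurement cq state (each branch $\ket{\rho_x}^{ABR}$ is pure) yields $H_H^{O(\eps)}(RB|X) = H_H^{O(\eps)}(A|X)$. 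Hence for $(k,\ell) \in \mathcal{S}$, the branch entropies of $\sigma_{k,\ell}^A$ and $\sigma_{k,\ell}^B$ are both controlled by $H_H^{O(\eps)}(A|X)$ and $H_H^{O(\eps)}(B|X)$ respectively, which is what the claimed rate expression needs.

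Choosing $\log\abs{A_p} = \log\abs{A} - H_H^{O(\eps)}(A|X) - O(\log\frac{1}{\eps})$ and $\log\abs{B_p}$ analogously, Fact \ref{lem:localLB} guarantees that each branchwise unitary $U_{k,\ell}$, $V_{k,\ell}$ turns its input into a state $O(\eps^{1/16})$-close to $\ketbra{0}$ on $A_p$ and $B_p$ after discarding $A_g$, $B_g$. Averaging over $(k,\ell)$: the good branches contribute $O(\eps^{1/16})$, the bad branches contribute at most $O(\eps^{1/8})$ by the measure bound, and the measurement-compression approximation of Fact \ref{fact:measurementCompression} adds $O(\eps)$, summing to the claimed $O(\eps^{1/16})$ trace-distance error on $A_p B_p$. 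The net distilled purity is $\log\abs{A_p}+\log\abs{B_p}-\log\abs{L_A} = \log\abs{A}-H_H^\eps(A|X)+\log\abs{B}-H_H^\eps(B|X)-I_{\max}^{\eps^4}(X:RB)+O(\log\eps)$, and the classical communication is $\log\abs{L_B} = I_{\max}^{\eps^4}(X:RB)+O(\log\frac{1}{\eps})$. The main obstacle is the layered error budget: the measurement-compression error, the $\eps^{1/8}$ Markov-type loss baked into Lemma \ref{lem:mainderandhhtwice}, and the gentle-measurement square-root loss implicit in Fact \ref{lem:localLB} must all compose cleanly into the claimed exponent; one must also verify that dephasing $L_A$ safely reduces the otherwise coherent application $\sum_\ell \ketbra{\ell}^{L_A}\otimes U_{k,\ell}$ to a classically controlled one, so that the branchwise argument is legitimate.
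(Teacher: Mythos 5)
Your proposal is correct and follows essentially the same route as the paper's proof: branchwise application of the locally optimal codes from Fact \ref{lem:localLB}, control of the branch entropies via Lemma \ref{lem:mainderandhhtwice} together with Lemma \ref{lem:purehh} and the identity $H_H^{\delta}(RB|X)=H_H^{\delta}(A|X)$, transfer from uniform to $Q_{KL}$-measure via Fact \ref{fact:closeQkl}, and the same final accounting of the borrowed $L_A$ register and the error exponents. The block-diagonality concern you flag at the end is exactly how the paper justifies the branchwise reduction, so nothing is missing.
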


\begin{proof}
    
We will first invoke Lemma \ref{lem:mainderandhhtwice} to note that, for at least $(1-O(\eps^{1/8}))$ fraction of indices $KL$, it holds that
\[
H_H^{O(\eps^{1/8})}(\rho^{RB}_{k,\ell})\leq H_H^{O(\eps)}(RB|X)+O(\log \frac{1}{\eps}).
\]
Note that, for fixed $(k,\ell)$ we recover at least 
\[
\log \abs{A}-H_H^{O(\eps^{1/8})}(\rho^A_{k,\ell})-1
\]
amount of purity. Next, we use the fact that for pure states, such as $\ket{\rho_{k,\ell}}^{ARB}$, Lemma \ref{lem:purehh} implies that:
\[
H_H^{\eps^{1/8}}(\rho^A_{k,\ell})=H_H^{\eps^{1/8}}(\rho^{RB}_{k,\ell})
\]
Thus, using the same arguments as we saw in the proof of Proposition \ref{prop:firstProtocol}, we can show that for all the pairs $(k,\ell)$ which satisfy the above conditions, it holds that:
    \[
\norm{\Tr_{A_
g}\left[U_{k,\ell}\cdot \rho_{k,\ell}^{ARB} \right]-\ketbra{0}^{A_p}\otimes \rho_{k,\ell}^{RB}}_1\leq O(\sqrt{\eps}).
\]
where the system $A_p$ consists of $\log\abs{A}- H_H^{\eps^{1/4}}(RB|X)+O(\log \eps)-O(1)$ qubits. We can then conclude that the global state after Alice sends the system $L_A$ through the dephasing channel satisfies the following condition:
\[
\begin{aligned}
&\left\lVert\sum_{k,\ell} Q_{KL}(k,\ell) \ketbra{k}^K\otimes \ketbra{\ell}^{L_A}\otimes \Tr_{A_g}\left[U_{k,\ell}\cdot \rho_{k,\ell}^{ARB}\right]\right.\\
-&\left.\ketbra{0}^{A_p}\otimes\sum_{k,\ell} Q_{KL}(k,\ell) \ketbra{k}^K\otimes \ketbra{\ell}^{L_A}\otimes\rho_{k,\ell}^{RB}\right\rVert_1\leq O(\eps^{1/8})
\end{aligned}
\]
where we have used the fact that the distribution $Q_{KL}$ is $O(\eps^{1/2})$ to the uniform distribution on $[K]\times [L]$ (see Fact \ref{fact:closeQkl}). Thus, on her side, Alice distils at least
\[
\abs{A_p} \geq\log\abs{A}- H_H^{\eps^{1/4}}(RB|X)+O(\log \eps)-O(1)
\]
amount of purity.

To analyse Bob's actions, we again invoke Lemma \ref{lem:mainderandhhtwice} and recall that, for at least $1-O(\eps^{1/8})$ fraction of indices $KL$, it holds that
\[
H_H^{O(\eps^{1/8})}(\rho^B_{k,\ell}) \leq H_H^{O(\eps)}(B~|~X)+O(\log \frac{1}{\eps}).
\]
This implies that, for most indices $k$ and $\ell$, there exists a local unitary $V^{B\to B_pB_g}_{k,\ell}$ such that
\[
\norm{\Tr_{B_g}\left[V_{k,\ell}\cdot\rho_{k,\ell}^{B}\right]-\ketbra{0}^{B_p}}_1 \leq O(\sqrt{\eps})
\]
where we see that
\[
\abs{B_p}\geq \abs{B}-H_H^{O(\eps)}(B|X)+O(\log \eps)-O(1).
\]
Then, using the fact that the distribution $Q_{KL}$ is $O(\eps^{1/2})$ close to the uniform distribution on $[K]\times [L]$ and the arguments we used for Alice's actions, we see that the following holds:
\[
\begin{aligned}
&\left\lVert\sum_{k,\ell} Q_{KL}(k,\ell) \ketbra{k}^K\otimes \ketbra{\ell}^{L_A}\otimes \Tr_{A_gB_gR}\left[V_{k,\ell}\otimes U_{k,\ell}\cdot \rho_{k,\ell}^{ARB}\right]\right.\\
-&\left.\ketbra{0}^{A_p}\otimes\ketbra{0}^{B_p}\otimes \left(\sum_{k,\ell} Q_{KL}(k,\ell) \ketbra{k}^K\otimes \ketbra{\ell}^{L_A}\right)\right\rVert_1\leq \eps^{1/16}
\end{aligned}
\]
Tracing out all registers but the systems $A_pB_p$ implies the result, where we see that the \emph{net} number of pure qubits that Protocol B distilled is given by:
\[
\log\abs{A}- H_H^{O(\eps)}(RB|X)+\log \abs{B}-H_H^{O(\eps)}(B|X)-I_{\max}^{\eps^4}(X:RB)+O(\log \eps)-O(1).
\]
It is not hard to show that for states of the form
\[
\sum_x \ketbra{x}^{X} \otimes \I^{RB}\otimes \Lambda_x\left(\ketbra{\rho}^{ABR}\right)
\]
it holds that
\[
H_H^{\delta}(RB|X)=H_H^{\delta}(A|X).
\]
Plugging this in into the above expression, the result follows. The claim about the number of bits of classcial communication used by Protocol B follows directly from its specification. This concludes the proof.
\end{proof}
\subsection{Removing the Public Coin from Protocol B: \kd}\label{sec:kd}
In this section we derandomise Protocol B by removing the public coin registers $K_AK_B$ to obtain \kd. We show this in the following lemma:
\begin{lemma}\label{lem:derandK}
    Given the setting of Proposition \ref{prop:secondProtocol}, there exists a subset $\mathcal{T}\subseteq [K]$ of size at least $(1-O(\eps^{1/32}))K$, such that for any $k\in \mathcal{T}$, if Alice runs Protocol B with only the POVM corresponding to this $k$, the resulting protocol, called \textup{\kd}, distils as many pure qubits as Protocol B.
\end{lemma}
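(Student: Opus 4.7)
The plan is a standard derandomization by averaging over the shared public coin. Observe that the reduced state on $K_A K_B$ together with the output systems $A_p B_p$ at the end of Protocol B takes the classical-quantum form
\[
\tau^{K_AK_BA_pB_p} = \sum_{k\in [K]} Q_K(k)\,\ketbra{k}^{K_AK_B} \otimes \sigma_k^{A_pB_p},
\]
where $\sigma_k^{A_pB_p}$ is precisely the output produced by hard-coding the coin value $k$: Alice uses only the POVM $\Theta^A(k)$, applies $\sum_{\ell}\ketbra{\ell}^{L_A}\otimes U_{k,\ell}^{A\to A_pA_g}$, sends $L_A$ through the dephasing channel, and Bob performs his locally optimal distillation code conditioned on the received $\ell$. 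The error analysis in the proof of Proposition \ref{prop:secondProtocol} can be read as yielding
\[
\sum_{k}Q_K(k)\,\bigl\|\sigma_k^{A_pB_p}-\ketbra{0}^{A_p}\otimes\ketbra{0}^{B_p}\bigr\|_1 \leq O(\eps^{1/16}).
\]

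Markov's inequality applied to the nonnegative quantity $\|\sigma_k-\ketbra{0}\otimes\ketbra{0}\|_1$ then shows that the set
\[
\mathcal{T}' := \bigl\{k\in [K]:\|\sigma_k^{A_pB_p}-\ketbra{0}^{A_p}\otimes\ketbra{0}^{B_p}\|_1 \leq O(\eps^{1/32})\bigr\}
\]
has $Q_K$-measure at least $1-O(\eps^{1/32})$. By Fact \ref{fact:closeQkl} the marginal $Q_K$ is $O(\eps^{1/4})$-close to $\mathbf{Unif}[K]$, so $|\mathcal{T}'| \geq (1-O(\eps^{1/32}))K$ after absorbing constants. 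Set $\mathcal{T} := \mathcal{T}'$. For each $k \in \mathcal{T}$, define \kd to be the deterministic $k$-indexed variant of Protocol B described above: Alice no longer requires the public coin register $K_A$ and simply hard-codes the POVM $\Theta^A(k)$ and the family of unitaries $\{U_{k,\ell}\}_\ell$, while Bob hard-codes $\{V_{k,\ell}\}_\ell$. The output of \kd is exactly $\sigma_k^{A_pB_p}$, which lies within $O(\eps^{1/32})$ trace distance of $\ketbra{0}^{A_p}\otimes\ketbra{0}^{B_p}$.

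It remains to verify that \kd distils the same number of pure qubits and uses the same amount of classical communication as Protocol B. The dimensions $|A_p|$ and $|B_p|$ are fixed a priori by the worst-case bounds $\log |A| - H_H^{O(\eps)}(RB|X) + O(\log \eps)$ and $\log |B| - H_H^{O(\eps)}(B|X) + O(\log \eps)$ derived in the proof of Proposition \ref{prop:secondProtocol}; they do not depend on $k$. The classical communication cost is again $\log|L_A| = I_{\max}^{\eps^4}(X:RB) + O(\log \frac{1}{\eps})$. The only genuine subtlety, and the main step to check carefully, is that the conditional state $\sigma_k$ of $\tau$ on coin value $k$ really coincides with the output of the deterministic $k$-indexed protocol; this is immediate because the coin is classical and conditioning on $K_A = k$ commutes with every subsequent local unitary, dephasing channel and partial trace in the protocol.
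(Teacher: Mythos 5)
Your proposal is correct and follows essentially the same route as the paper: average the trace-distance error over the classical coin $k$ using the block-diagonal (cq) structure of the output state, apply Markov's inequality to extract a large good set, and transfer from $Q_K$-measure to uniform counting measure via the closeness of $Q_K$ to $\mathbf{Unif}[K]$. The paper's own proof is just a terser version of this, and your additional checks (that the conditional state equals the hard-coded protocol's output and that $|A_p|,|B_p|$ are $k$-independent) are harmless elaborations rather than a different argument.
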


\begin{proof}
    Recall that in Protocol B at the end of Bob's actions, the global state satisfied the following property:
    \[
\begin{aligned}
&\left\lVert\sum_{k,\ell} Q_{KL}(k,\ell) \ketbra{k}^K\otimes \ketbra{\ell}^L\otimes \Tr_{A_gB_gR}\left[V_{k,\ell}\otimes U_{k,\ell}\cdot \rho_{k,\ell}^{ARB}\right]\right.\\
-&\left.\ketbra{0}^{A_p}\otimes\ketbra{0}^{B_p}\otimes \left(\sum_{k,\ell} Q_{KL}(k,\ell) \ketbra{k}^K\otimes \ketbra{\ell}^L\right)\right\rVert\leq \eps^{1/16}
\end{aligned}
\]
To derandomise the above protocol, we define
\[
\sigma_k^{A_pB_p}\coloneqq \sum_{\ell}Q_{L~|~k}(\ell~|~k)\Tr_{A_gB_gR}\left[V_{k,\ell}\otimes U_{k,\ell}\cdot \rho_{k,\ell}^{ARB}\right]
\]
Then, using block diagonality, we see that
\[
\sum_k Q_K(k)\norm{\sigma_{k}^{A_pB_p}-\ketbra{0}^{A_p}\otimes \ketbra{0}^{B_p}} \leq \eps^{1/16}.
\]
This immediately proves that there exists a $k$ such that if we run the protocol for only that fixed $k$, Alice and Bob distil the same amount of purity as in the protocol with shared randomness, while making an error at most $\eps^{1/16}$. In fact, since $Q_K$ is $O(\eps^{1/2})$ close to the uniform distribution on $[K]$, this implies that at least $1-O(\eps^{1/32})$ fraction of $k$'s in $[K]$ satisfy this property. This concludes the proof.
\end{proof}

\section{The Protocol With Small Ancilla}\label{sec:optWithoutAncilla}
In the previous section, we proved a lower bound on $\kappa_{\eps}^{\to}(\rho^{AB}, C_{\textup{classical}})$. We did this by first fixing a POVM $\Lambda^A$ on the system $A$, then using the \kd protocol to extract roughly $\log \abs{AB}-H_H^{\eps}(A|X)-H_H^{\eps}(B|X)-I^{\eps^4}_{\max}(X:RB)$ pure qubits. In the process Alice was required to borrow roughly $I_{\max}^{\eps^4}(X:RB)$ many ancilla qubits. The lower bound on $\kappa_{\eps}^{\to}(\rho^{AB}, C_{\textup{classical}})$ was then obtained by taking an infimum over the  POVMs $\Lambda$ over the subset $\mathcal{S}$ (see Theorem \ref{thm:UBgeneral} for specifics).

In this section we show that, given the same fixed POVM $\Lambda$, there exists a protocol which we call \name, which uses far fewer ancilla qubits than \kd, yet manages to distil the pure qubits at the same rate as that of \kd. In fact, we show in Corollary \ref{corol: general compare} in Section \ref{sec:compare} that \name outperforms \kd in terms of the number of qubits borrowed by Alice, as long as $\rho^A$ is not very close to the maximally mixed state. We also show in Corollary \ref{corol:unboundedcommcompare} in Section \ref{sec:compare} that in the case when unbounded classical communication is allowed, \name always outperforms \kd.

We state the main theorem of this section below, which shows the existence of the \name protocol. As in Section \ref{sec:LBdistributed}, we assume that we are given a fixed POVM $\Lambda$ to state and prove our results.

\begin{theorem}{\bf Main Theorem }\label{thm:informal}
    Consider a bipartite state $\rho^{AB}$ shared between two parties Alice and Bob, and a POVM $\brak{\Lambda_x}_x$ where the symbol $x$ belongs to a set of symbols $\mathcal{X}$. Let $\ket{\rho}^{ABR}$  be a purification of $\rho^{AB}$. Consider the control state
    \[
    \rho^{ABRX}\coloneqq \sum\limits_{x}\ketbra{x}^X\otimes \Lambda_x^{A}\left(\ketbra{\rho}^{ABR}\right).
    \]
Let us refer to the number of pure qubits that Alice and Bob can distil as $\textup{Purity}_{\textsc{alice}}$ and $\textup{Purity}_{\textsc{bob}}$. Then, there exists a protocol \name where Alice and Bob are allowed only local unitary operations, and one way classical communication from Alice to Bob such that:

\begin{enumerate}
    \item {\bf Case I:} If $
    I_{\max}^{\eps^4}(RB:X)+H_{H}^{O(\eps)}(RB|X)-O(\log \eps)\leq \log \abs{A}$, they are able to distil the following number of pure qubits:
\begin{align*}
    \textup{Purity}_{\textsc{alice}}&\geq \log \abs{A}-I_{\max}^{\eps^4}(RB:X)+H_{H}^{O(\eps)}(RB|X)+O(\log \eps), \\
    \textup{Purity}_{\textsc{bob}}&\geq \log \abs{B}-H_H^{\eps^2}(B|X)+O(\log \eps).
\end{align*}
while borrowing at most $O(\log \frac{1}{\eps})$ ancilla qubits.

\item {\bf Case II:} If $
    I_{\max}^{\eps^4}(RB:X)+H_{H}^{O(\eps)}(RB|X)-O(\log \eps)> \log \abs{A}$,
they are able to distil 
\begin{align*}
    \textup{Purity}_{\textsc{alice}}&= 0, \\
    \textup{Purity}_{\textsc{bob}}&\geq \log \abs{B}-H_H^{\eps^2}(B|X)-\Delta(RB|X)+O(\log \eps).
\end{align*}
while borrowing at most $
\Delta(RB|X)\coloneqq H_H^{\eps^2}(RB|X)-H_{\min}^{O(\eps)}(RB|X)-O(\log\eps)$
many qubits.
\end{enumerate}
 All entropic quantities above are computed with respect to the control state.
\end{theorem}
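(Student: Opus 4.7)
The plan is to take the \kd protocol of Section \ref{sec:kd} as a template but replace the dedicated $L_A$ register---which in \kd consumes $\log L \approx I_{\max}^{\eps^4}(X:RB)$ ancilla qubits to hold the compressed POVM outcome---with a single unitary $U_{\tilde\Lambda}$ that implements the coherent measurement together with the subsequent post-measurement compression \emph{in place} inside Alice's $A$ register. The opening stages mirror \kd verbatim: invoke the one-shot measurement compression theorem (Fact \ref{fact:measurementCompression}) to replace $\Lambda$ by POVMs $\Theta^A(k)$ with $L = 2^{I_{\max}^{\eps^4}(X:RB)+O(\log 1/\eps)}$ outcomes, derandomise over $k$ (Lemma \ref{lem:derandK}) to fix $\tilde\Lambda := \Theta^A(k^\star)$, and invoke Lemma \ref{lem:mainderandhhtwice} to guarantee that for a $(1-O(\eps^{1/8}))$-fraction of outcomes $\ell$ the post-measurement branch satisfies $H_H^{O(\eps^{1/8})}(\rho_\ell^{RB}) \le H_H^{O(\eps)}(RB|X) + O(\log 1/\eps)$, and similarly for $B$.

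For each such typical $\ell$ let $U_\ell^A$ be the locally optimal distillation unitary of Fact \ref{lem:localLB} applied to $\rho_\ell^A$, which compresses the typical support of $\rho_\ell^A$ into a subregister of dimension $\lesssim 2^{H_H(\rho_\ell^A)} = 2^{H_H(\rho_\ell^{RB})}$ (using purity of $\ket{\rho_\ell}^{ARB}$ together with Lemma \ref{lem:hhswitchforconditionalpure}). The target coherent operation is the isometry $\sum_\ell \ketbra{\ell}^{L_A}\otimes U_\ell^A \;\circ\; \sum_\ell \ket{\ell}^{L_A}\otimes \sqrt{\tilde\Lambda_\ell^A}$, whose image naturally factors as $L_A \otimes A_g \otimes A_p$ with $\dim A_g \lesssim 2^{H_H^{O(\eps)}(RB|X)}$. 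The new ingredient is that in Case I, where $\log L + H_H^{O(\eps)}(RB|X) \le \log|A|$, one can fix an isometric embedding $L_A \otimes A_g \hookrightarrow A$ and promote the composite isometry to a unitary $U_{\tilde\Lambda}$ acting on $A$ plus only $O(\log 1/\eps)$ additional ancilla absorbing technical slack. Alice then dephases the embedded $L_A$ slot, transmits it to Bob through the dephasing channel, and Bob runs his locally optimal distillation conditioned on $\ell$ exactly as in the proof of Proposition \ref{prop:secondProtocol}, yielding $\log|B| - H_H^{\eps^2}(B|X) + O(\log 1/\eps)$ pure qubits on his side.

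In Case II the embedding does not fit and Alice must borrow an ancilla register $C$ of size $\Delta(RB|X) = H_H^{\eps^2}(RB|X) - H_{\min}^{O(\eps)}(RB|X) - O(\log\eps)$; this is precisely what is needed to accommodate the \emph{worst-case} (rather than average-case) support size of $\rho_\ell^{RB}$ across typical $\ell$, with the gap between $H_H$ and $H_{\min}$ quantifying how much more padding the atypical branches demand. In this regime $A \otimes C$ is fully consumed by the embedded output of $U_{\tilde\Lambda}$, so no surplus pure qubits remain on Alice's side while Bob's side proceeds as in Case I. The principal obstacle in both cases will be the rigorous construction of $U_{\tilde\Lambda}$ as a single unitary coherent across $\ell$: the target subspaces $A_{g,\ell}\subset A$ into which $U_\ell$ compresses depend on $\ell$, and one must pick a common embedding into $A_g$ and control the resulting cross-branch overlap errors by gentle-measurement-style estimates averaged against the near-uniform distribution $Q_L$ (Fact \ref{fact:closeQkl}), refining but paralleling the argument of Proposition \ref{prop:secondProtocol}. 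Pinning down the exact constant $\Delta = H_H - H_{\min}$ in Case II will additionally require a chain-rule/duality identity relating $I_{\max}^{\eps^4}(X:RB)$ to worst-case branch dimensions through $H_{\min}^{O(\eps)}(RB|X)$, and this is the main technical point at which the two cases of the theorem diverge.
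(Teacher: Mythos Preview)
Your overall architecture---measurement compression, derandomise to a fixed $\Theta^A(k^\star)$, exploit that the typical branches $\rho_\ell^{RB}$ have support at most $2^{H_H^{O(\eps)}(RB|X)}$, and embed $L_A\otimes A_g$ back into $A$---matches the paper. The gap is in how you propose to build the unitary $U_{\tilde\Lambda}$. You want to form the composite isometry
\[
W \;=\; \Bigl(\sum_\ell \ketbra{\ell}^{L_A}\otimes U_\ell^{A}\Bigr)\circ\Bigl(\sum_\ell \ket{\ell}^{L_A}\otimes\sqrt{\tilde\Lambda_\ell}^{\,A}\Bigr):A\to L_A\otimes A
\]
and then ``promote'' it to a unitary on $A$ by embedding $L_A\otimes A_g\hookrightarrow A$. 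But the image of $W$ as a subspace of $L_A\otimes A$ is \emph{not} contained in $\ket{0}^{A_p}\otimes L_A\otimes A_g$; only the single vector $W\ket{\rho}^{ARB}$ is (approximately) of that form. For an arbitrary $\ket{v}\in A$, the branch vectors $\sqrt{\tilde\Lambda_\ell}\ket{v}$ need not lie in the small-support subspace that $U_\ell$ compresses into $A_g$, so there is no embedding that turns $W$ into a unitary on $A$ with an honest $A_p$ factor. The ``cross-branch overlap'' and gentle-measurement issues you anticipate are therefore red herrings: the obstruction is structural, not an error-control matter.

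The paper's resolution is to abandon the explicit isometry $W$ entirely and instead construct the \emph{target} pure state directly. For each $\ell$ in the good set it truncates $\rho_\ell^{RB}$ to $\wtt{\rho}_\ell^{RB}$ of rank at most $2^{H_H^{O(\eps)}(RB|X)}+1$, picks any purification $\ket{\wtt{\rho}_\ell}^{A_gRB}$ into a common small $A_g$, and sets
\[
\ket{\wdt{\rho}}^{A_pL_AA_gRB}\;\coloneqq\;\ket{0}^{A_p}\sum_{\ell\in\nice_L}\sqrt{\wtt{P}_\Theta(\ell)}\,\ket{\ell}^{L_A}\ket{\wtt{\rho}_\ell}^{A_gRB}.
\]
One then checks $\wdt{\rho}^{RB}\approx\rho^{RB}$ (this is where the explicit form of $\Theta_\ell^A$ from the measurement-compression construction is used to bound the mass outside $\nice_L$), and a single invocation of Uhlmann's theorem produces the required unitary $U_\Theta:A\to A_pL_AA_g$. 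No cross-branch coherence needs to be engineered by hand. For Case~II, the bound on the ancilla is not a worst-versus-average support argument: the paper simply quotes the inequality $I_{\max}^{\eps^4}(RB:X)\le H_{\max}^{O(\eps^8)}(RB)-H_{\min}^{O(\eps^8)}(RB|X)+O(\log\tfrac1\eps)$ (from Berta et al.) together with $H_{\max}(RB)\le\log|A|$, which immediately yields that the deficit $I_{\max}+H_H(RB|X)-\log|A|$ is at most $\Delta(RB|X)$.
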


\begin{proof}
    The proof is implied by Lemma \ref{lem:unitaryOne} and Lemma \ref{lem:unitaryTwo}.
\end{proof}

Our main task now  is to prove Lemma \ref{lem:unitaryOne} and \ref{lem:unitaryTwo}. To do this, let us start by examining Alice's actions in Protocol C, as described in the last section (see Lemma \ref{lem:derandK}). To recap, Alice and Bob share a public coin register $K$, and based on the contents of $K$, Alice implements a POVM $\Theta^A(k)$ coherently on her system $A$. She stores the outcome of this measurement a system $L_A$ which she creates by borrowing roughly $I_{\max}^{\eps^4}(X:RB)$ qubits, where the entropic quantity is computed with respect to a control state 
\[
\sum\limits_{x}\ketbra{x}^{X}\otimes \Lambda_x^{A}\left(\ketbra{\rho}^{ABR}\right).
\]
Alice then performs a locally optimal distillation protocol on the state $\rho_{k,\ell}^A$ using the unitary $U_{k,\ell}^A$. For most values of $k$ and the measurement outcome $\ell$, the measurement compression theorem then implies that the number of pure qubits that Alice distils is at least $\log \abs{A}-H_H^{\eps^2}(A|X)$ (suppressing the additive $O(\log \eps)$ term).

Later, we derandomised and showed that the public coin register is actually not necessary and for most  settings  ($1-O(\eps^{1/32})$ fraction) of the public coin $k$, the corresponding POVM $\Theta^A(k)$ does as well as the randomised protocol. Alice can then choose any of the $k$ from the set $\mathcal{T}$ (see Lemma \ref{lem:derandK} for the definition of $\mathcal{T}$) and run the protocol suing the fixed POVM $\Theta^A(k)$. 

Our goal in this section will be to implement the action of $\Theta^A(k)$ \emph{in place}, that is, by borrowing little to no ancilla qubits. To do this we require $\Theta^A(k)$ to have the property that for \emph{most} outcomes $\ell\in [L]\bigcup \brak{\bot}$, corresponding to the POVM element $\Theta^A_{\ell}(k)$, it holds that:

   \begin{align*}
    &H_H^{\eps}(\rho^{RB}_{k,\ell}) \leq H_H^{O(\eps)}(RB~|~X)+O(\log\frac{1}{\eps}) \\
    \intertext{and}
    &H_H^{\eps}(\rho^{B}_{k,\ell}) \leq H_H^{O(\eps)}(B~|~X)+O(\log \frac{1}{\eps}),
\end{align*}
 where we define:
 \[
 \rho^{RB}_{k,\ell}\coloneqq \Tr_A\frac{\sqrt{\Theta^A_{\ell}(k)}\cdot \ketbra{\rho}^{ABR}}{\Tr\left[\Theta^A_{\ell}(k)\ketbra{\rho}^{ABR}\right]}
 \]
  for some purification $\ket{\rho}^{ABR}$ of $\rho^{AB}$.  That such a POVM exists is shown below via Lemma \ref{lem:newpovmstart} and Claim \ref{claim:goodk}.
    \begin{lemma}
 \label{lem:newpovmstart}
For the setting of Lemma \ref{lem:mainderandhhtwice}, there exists a subset $\mathcal{T}'\subseteq [K]$ of size at least $(1-\eps^{1/16})K$, such that for all $k\in \mathcal{T}'$, there exists a subset $\nice_{L~|~k}\subseteq [L]$ of size  at least $(1-\eps^{1/16})L$ such that for all $k\in \mathcal{T}'$ and $\ell\in \nice_{L~|~k}$:
 \begin{align*}
    &H_H^{O(\eps^{1/8})}(RB~|~k,\ell) \leq H_H^{O(\eps)}(RB~|~X)+O(\log\frac{1}{\eps}) \\
    \intertext{and}
    &H_H^{O(\eps^{1/8})}(B~|~k,\ell) \leq H_H^{O(\eps)}(B~|~X)+O(\log \frac{1}{\eps}).
\end{align*}.
\end{lemma}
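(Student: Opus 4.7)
The claim is a straightforward refinement of Lemma \ref{lem:mainderandhhtwice} via a Markov/double-counting argument, turning a bulk statement about a large subset of $[K] \times [L]$ into a two-level statement that for most $k$, most $\ell$ work. The plan is to extract the good set from Lemma \ref{lem:mainderandhhtwice} and then pass to its row-marginals.

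First I would invoke Lemma \ref{lem:mainderandhhtwice} directly to obtain a subset $\mathcal{S} \subseteq [K] \times [L]$ with $|\mathcal{S}| \geq (1-\eps^{1/8})KL$ such that both entropy bounds
\[
H_H^{O(\eps^{1/8})}(RB \mid k,\ell) \leq H_H^{O(\eps)}(RB \mid X) + O\bigl(\log \tfrac{1}{\eps}\bigr),
\qquad
H_H^{O(\eps^{1/8})}(B \mid k,\ell) \leq H_H^{O(\eps)}(B \mid X) + O\bigl(\log \tfrac{1}{\eps}\bigr)
\]
hold for every $(k,\ell) \in \mathcal{S}$. For each $k \in [K]$, define $g(k) \coloneqq |\{\ell \in [L] : (k,\ell) \in \mathcal{S}\}|/L$, the fraction of good $\ell$'s in row $k$. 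By double counting,
\[
\frac{1}{K}\sum_{k} g(k) \;=\; \frac{|\mathcal{S}|}{KL} \;\geq\; 1-\eps^{1/8},
\]
so the non-negative quantity $1-g(k)$ has average at most $\eps^{1/8}$ over uniformly random $k$.

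Next I would apply Markov's inequality to $1-g(k)$ with threshold $\eps^{1/16}$:
\[
\Pr_k\!\left[1-g(k) > \eps^{1/16}\right] \;\leq\; \frac{\eps^{1/8}}{\eps^{1/16}} \;=\; \eps^{1/16}.
\]
Let $\mathcal{T}' \coloneqq \{k \in [K] : g(k) \geq 1-\eps^{1/16}\}$; then $|\mathcal{T}'| \geq (1-\eps^{1/16})K$. For each $k \in \mathcal{T}'$, define $\nice_{L \mid k} \coloneqq \{\ell \in [L] : (k,\ell) \in \mathcal{S}\}$, which by construction has size at least $(1-\eps^{1/16})L$. The two entropic inequalities hold for every $\ell \in \nice_{L \mid k}$ because $(k,\ell) \in \mathcal{S}$ and $\mathcal{S}$ was chosen to satisfy precisely these bounds.

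I do not expect any serious obstacle; the whole argument is a two-line averaging reduction once Lemma \ref{lem:mainderandhhtwice} is in hand. The only minor point of care is the choice of threshold $\eps^{1/16}$, which is picked so that both the exceptional fraction of $k$'s and the exceptional fraction of $\ell$'s within a good row are bounded by the same quantity $\eps^{1/16}$; any other threshold $\eps^{\alpha}$ with $0 < \alpha < 1/8$ would work equally well but would give slightly weaker constants downstream in the construction of \name.
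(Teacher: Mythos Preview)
Your proposal is correct and follows essentially the same approach as the paper: invoke the bulk set $\mathcal{S}$ from Lemma \ref{lem:mainderandhhtwice}, pass to the row-averages $g(k)$ (the paper calls this $\texttt{prob}_k$), and apply Markov's inequality with threshold $\eps^{1/16}$ to isolate the good rows. The paper's proof is literally the same two-step averaging argument, phrased with indicator functions instead of your $g(k)$.
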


 \begin{claim}\label{claim:goodk}
     Consider the setting of Lemma \ref{lem:derandK}. There exists a $k\in \mathcal{T}$ such that the corresponding POVM $\Theta^A(k)$ satisfies the requirements of Lemma \ref{lem:newpovmstart}.
 \end{claim}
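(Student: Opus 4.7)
The plan is to show that the claim follows from a straightforward union bound over the two ``good'' subsets of $[K]$ that arise from the derandomization argument (Lemma \ref{lem:derandK}) and from the entropic concentration argument (Lemma \ref{lem:newpovmstart}). Both subsets have density very close to $1$ inside $[K]$, so their intersection must be nonempty once $\eps$ is small enough, and any $k$ in this intersection meets all the requirements simultaneously.

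More concretely, Lemma \ref{lem:derandK} provides a set $\mathcal{T} \subseteq [K]$ with $|\mathcal{T}| \geq (1 - O(\eps^{1/32}))K$ such that fixing any $k \in \mathcal{T}$ in Protocol B produces a deterministic protocol (\kd) whose error and rate match those of the randomized protocol. Independently, Lemma \ref{lem:newpovmstart} gives a set $\mathcal{T}' \subseteq [K]$ with $|\mathcal{T}'| \geq (1 - \eps^{1/16})K$ such that, for every $k \in \mathcal{T}'$, a $(1-\eps^{1/16})$-fraction of the POVM outcomes $\ell$ of $\Theta^A(k)$ are ``nice,'' i.e., they yield the worst-case bounds
\[
H_H^{O(\eps^{1/8})}(RB \mid k,\ell) \leq H_H^{O(\eps)}(RB \mid X) + O(\log \tfrac{1}{\eps}), \quad H_H^{O(\eps^{1/8})}(B \mid k,\ell) \leq H_H^{O(\eps)}(B \mid X) + O(\log \tfrac{1}{\eps}).
\]

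The main (and essentially only) step is then the union bound: we have
\[
|\mathcal{T} \cap \mathcal{T}'| \;\geq\; K - (K - |\mathcal{T}|) - (K - |\mathcal{T}'|) \;\geq\; \bigl(1 - O(\eps^{1/32}) - \eps^{1/16}\bigr) K \;=\; \bigl(1 - O(\eps^{1/32})\bigr) K,
\]
which is strictly positive for all sufficiently small $\eps$. Any $k \in \mathcal{T} \cap \mathcal{T}'$ will serve: membership in $\mathcal{T}$ ensures that the derandomized protocol with this fixed $k$ achieves the purity distillation rate of Proposition \ref{prop:secondProtocol}, while membership in $\mathcal{T}'$ ensures that $\Theta^A(k)$ satisfies the outcome-level entropic hypotheses needed by Lemma \ref{lem:newpovmstart}.

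There is no serious obstacle here; the entire content of the claim is that the two ``high-probability'' conditions are compatible, which follows immediately from the quantitative bounds on $|\mathcal{T}|$ and $|\mathcal{T}'|$. The only minor subtlety to note is that both density bounds were obtained with respect to the uniform distribution on $[K]$ (they are combinatorial statements about $|\mathcal{T}|/K$ and $|\mathcal{T}'|/K$), so no appeal to the distribution $Q_K$ is required at this stage; indeed this is why the derandomization in Lemma \ref{lem:derandK} first passes from $Q_K$ to $\mathbf{Unif}[K]$ via Fact \ref{fact:closeQkl}.
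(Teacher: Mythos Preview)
Your proposal is correct and follows essentially the same approach as the paper's own proof: both argue that $|\mathcal{T}| \geq (1-O(\eps^{1/32}))K$ and $|\mathcal{T}'| \geq (1-\eps^{1/16})K$ force $\mathcal{T}\cap\mathcal{T}'\neq\emptyset$, and any $k$ in this intersection works. The paper's version is terser (it merely notes the two size bounds and concludes the intersection is nonempty), but the underlying idea is identical.
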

The proofs of Lemma \ref{lem:newpovmstart} and Claim \ref{claim:goodk} can be found in Appendix \ref{appndx:derandlemmaandclaim}. The next idea is that the states $\rho^{AB}_{k,\ell}$, for all $\ell\in \nice_{L~|~k}$, can be perturbed to a nearby state $\wtt{\rho}^{RB}_{k,\ell}$ by throwing away the smallest eigenvalues which sum to $O(\eps^{1/8})$. We can then consider a purification $\ket{\wtt{\rho}_{k,\ell}}^{A_gRB}$ of $\rho_{k,\ell}^{RB}$, where this system $A_g$ requires only $\exp\left(H_H^{O(\eps)}(RB|X)\right)$ dimensions.

The key idea then is to define an embedding of the systems $L_A$ (which holds the measurement outcomes) and the system $A_g$ into a space of dimension roughly $\exp\left(I_{\max}^{\eps^4}(RB:X)+H_H^{O(\eps)}(RB|X)\right)$. We do this by defining an appropriate pure state on the systems $A_pL_AA_gRB$ using the states $\ket{\wtt{\rho}}^{A_gRB}$, and then using Uhlmann's theorem. This Uhlmann isometry gives us Alice's required unitary. Some care is necessary here since not all $\rho_{k,\ell}^{RB}$ have the nice property we need to define the states $\ket{\wtt{\rho}}^{A_gRB}$. A detailed exposition of these ideas can be found in the proof of the lemma below:

\begin{lemma}\label{lem:unitaryOne}
    Suppose that
    \[
    I_{\max}^{\eps^4}(RB:X)+H_{H}^{O(\eps)}(RB|X)-O(\log \eps)\leq \log \abs{A},
    \]
    where all the entropic quantities are computed with respect to the control state:
    \[
    \sum\limits_{x}\ketbra{x}^X\otimes \Lambda_x^A\left(\ketbra{\rho}^{ABR}\right),
    \]
    where $\ket{\rho}^{ABR}$ is a purification of $\rho^{AB}$. Then there exists a unitary operator $U^{A\to A_pA_gL_A}$ and a system $A_g$ such that:
    \begin{align*}
        &\norm{\Tr_{A_gB_g}\left[V\circ\mathcal{P}\circ U\left(\rho^{AB}\right)\right]-\ketbra{0}^{A_p}\otimes \ketbra{0}^{B_p}}_1 \leq \eps^{1/16}, \\
        & \log \abs{A_p}\geq \log \abs{A}-I_{\max}^{\eps^4}(RB:X)+H_{H}^{O(\eps)}(RB|X)+O(\log \eps)-1, \\
        & \log \abs{B_p}\geq \log \abs{B}-H_H^{\eps^2}(B|X)+O(\log \eps).
    \end{align*}
    where $V^{L_BB\to B_pB_g}$ encapsulates Bob's unitary operations.
\end{lemma}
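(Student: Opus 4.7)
The plan is to implement Alice's coherent measurement \emph{in place} inside her register $A$, so as to avoid borrowing a separate $L_A$ register of size $I_{\max}^{\eps^4}(RB{:}X)$. The first step is to apply Fact \ref{fact:measurementCompression} to the given POVM $\Lambda^{A\to X}$ with $\log L = I_{\max}^{\eps^4}(X{:}RB) + O(\log 1/\eps)$, yielding the compressed POVMs $\{\Theta^A_\ell(k)\}_\ell$. Combining Claim \ref{claim:goodk} with Lemma \ref{lem:newpovmstart} lets me fix a single good index $k^\star \in [K]$ for which there is a set $\nice_{L|k^\star}\subseteq[L]$ of size at least $(1-\eps^{1/16})L$ such that every $\ell \in \nice_{L|k^\star}$ simultaneously satisfies $H_H^{O(\eps^{1/8})}(\rho_{k^\star,\ell}^{RB}) \leq H_H^{O(\eps)}(RB|X) + O(\log 1/\eps)$ and $H_H^{O(\eps^{1/8})}(\rho_{k^\star,\ell}^{B}) \leq H_H^{O(\eps)}(B|X) + O(\log 1/\eps)$.

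Next, for each nice $\ell$, I truncate the pure state $\ket{\rho_{k^\star,\ell}}^{ARB}$ in its $A{:}RB$ Schmidt basis by discarding the smallest Schmidt coefficients whose squares sum to at most $O(\eps^{1/8})$. Fact \ref{fact:tildemaxHrelation} and Lemma \ref{lem:equivHHtildeHmax} guarantee that the surviving Schmidt rank is at most $d_g \coloneqq 2^{H_H^{O(\eps)}(RB|X)+O(\log 1/\eps)}$, and the truncated vector $\ket{\wtt{\rho}_{k^\star,\ell}}$ is within $O(\eps^{1/16})$ purified distance of $\ket{\rho_{k^\star,\ell}}$. Introducing a fresh register $A_g$ of dimension $d_g$ and identifying the surviving Schmidt vectors of $A$ with a basis of $A_g$ produces a clean purification $\ket{\wtt{\rho}_{k^\star,\ell}}^{A_gRB}$. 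The \emph{target} state after Alice's unitary is
\begin{align*}
\ket{\Psi}^{A_pA_gL_ARB} \;=\; \ket{0}^{A_p}\otimes \sum_{\ell \in \nice_{L|k^\star}} \sqrt{p(\ell)}\,\ket{\ell}^{L_A}\otimes \ket{\wtt{\rho}_{k^\star,\ell}}^{A_gRB},
\end{align*}
where $p(\ell) = \Tr[\Theta^A_\ell(k^\star)\rho^A]$ and $L_A$ has dimension $L$.

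Combining the measurement-compression closeness from Facts \ref{fact:statescloseQkl} and \ref{fact:closeQkl} with the Schmidt-truncation error, a triangle inequality shows that $\Tr_{A_p A_g L_A}\ketbra{\Psi}$ is $O(\eps^{1/16})$-close to $\rho^{RB}$. Since $\ket{\rho}^{ARB}$ also purifies $\rho^{RB}$, Uhlmann's theorem supplies an isometry $A \to A_pA_gL_A$ carrying $\ket{\rho}^{ARB}$ to within $O(\eps^{1/32})$ of $\ket{\Psi}$; because $\ket{\Psi}$ contains the trivial factor $\ket{0}^{A_p}$, the isometry effectively maps $A$ into $A_g\otimes L_A$ and exists as soon as $\log|A_g|+\log|L_A|\leq\log|A|$, which is precisely the Case I hypothesis. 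Extending to a unitary $U^{A\to A_pA_gL_A}$ (after padding by at most $O(\log 1/\eps)$ pure ancilla qubits that appear as part of $A_p$), Alice applies $U$ to $A$ and routes $L_A$ through the dephasing channel $\mathcal{P}^{L_A\to L_B}$. The global post-channel state is $O(\eps^{1/16})$-close to $\ketbra{0}^{A_p}\otimes \sum_\ell p(\ell)\ketbra{\ell}^{L_B}\otimes \wtt{\rho}_{k^\star,\ell}^{B}$ after tracing $A_g$ and $R$. Bob then applies, coherently controlled on $L_B$, the locally optimal unitary $V_{k^\star,\ell}$ of Fact \ref{lem:localLB}; the $B$-side entropy bound from Lemma \ref{lem:newpovmstart} yields the claimed $\log|B_p|\geq \log|B|-H_H^{\eps^2}(B|X)+O(\log\eps)$ up to an additive $O(\eps^{1/16})$ trace-distance error absorbed via a gentle-measurement estimate.

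The main obstacle is the Uhlmann step, where one must simultaneously control three distinct approximations, the measurement-compression closeness $\rho^{XRB}\approx\sum_{k,\ell}Q_{KL}(k,\ell)\ketbra{k,\ell}\otimes\sigma_{f(k,\ell)}^{RB}$, the Schmidt-truncation closeness between $\ket{\rho_{k^\star,\ell}}^{ARB}$ and $\ket{\wtt{\rho}_{k^\star,\ell}}^{A_gRB}$, and the loss from restricting to nice $\ell$'s, while carving out the image of $U$ to fit entirely within a subspace of $A_g\otimes L_A$ of dimension at most $d_g\cdot L$. Getting the dimension count to match the Case I hypothesis exactly (and not waste even a $\log 1/\eps$ slack) is where the improvement over \kd manifests, since \kd's externally borrowed $L_A$ register is here \emph{carved out of $A$ itself} through the Uhlmann isometry.
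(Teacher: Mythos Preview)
Your proposal follows the same route as the paper: fix a good $k^\star$ via Claim \ref{claim:goodk} and Lemma \ref{lem:newpovmstart}, truncate each post-measurement state to rank at most $d_g$, assemble the target superposition $\ket{0}^{A_p}\otimes\sum_{\ell\in\nice}\sqrt{p(\ell)}\ket{\ell}^{L_A}\ket{\wtt{\rho}_{k^\star,\ell}}^{A_gRB}$, and obtain Alice's unitary by Uhlmann once the $RB$-marginal is shown close to $\rho^{RB}$.

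The one place your write-up is slightly loose is the step ``$\Tr_{A_pA_gL_A}\ketbra{\Psi}\approx\rho^{RB}$'': this requires $\sum_{\ell\notin\nice_{L|k^\star}}p(\ell)=O(\eps^{1/16})$, and Fact \ref{fact:closeQkl} only controls the \emph{joint} distribution $Q_{KL}$, not the conditional $Q_{L|k^\star}$ for your particular $k^\star$. The paper closes this gap by invoking the explicit form $\Theta_\ell^A=\tfrac{1}{(1+O(\eps^{1/4}))L}\,\rho^{-1/2}\sigma_\ell\rho^{-1/2}$ from the measurement-compression construction, which yields $p(\ell)=\Tr[\Theta_\ell^A\rho^A]\approx 1/L$ for \emph{every} $\ell\neq\bot$ and \emph{every} $k$; since $|\nice_{L|k^\star}^c|\le\eps^{1/16}L$, the bad mass is then $O(\eps^{1/16})$ automatically. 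Alternatively you could add one more Markov step when selecting $k^\star$ to force $Q_{L|k^\star}$ close to uniform, but the paper's direct route is cleaner.
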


\begin{proof}
    Let us start with the POVM $\Theta^A(k)$ which satisfies the requirements of both Lemma \ref{lem:newpovmstart} and Lemma \ref{lem:derandK}. That such a $k\in [K]$ and $\Theta^A(k)$ exist is shown in Claim \ref{claim:goodk}. 

    For ease of notation, we henceforth omit the $k$ throughout this proof. This means that we will refer to $\Theta^A(k)$ simply as $\Theta^A$ and the set $\nice_{L~|~k}$ (as given by Lemma \ref{lem:newpovmstart}) simply as $\nice_L$.

We define:
\[
\rho_{\ell}^{RB}\coloneqq \frac{\Tr_A\left[\Theta^A_{\ell}\ketbra{\rho}^{ABR}\right]}{P_{\Theta}(\ell)},
\]
for all $\ell\in [L]\bigcup\brak{\bot}$, where
\[
P_{\Theta}(\ell)\coloneqq \Tr\left[\Theta^A_{\ell}\ketbra{\rho}^{ABR}\right].
\]
We will show that the substate $\sum\limits_{\ell\in \nice_L}P_{\Theta}(\ell)\rho_{\ell}^{RB}$ is close to the state $\sum\limits_{\ell\in [L]\bigcup \brak{\bot}}P_{\Theta}(\ell)\rho_{\ell}^{RB}$. To see this, first recall from the construction of $\Theta^A$ in \cite{ChakrabortyPadakandlaSen_22} that, for all $\ell\neq \bot$,
\[
\Theta_{\ell}^A= \frac{1}{1+O(\eps^{1/4})}\cdot \frac{1}{L}\cdot \left(\rho^{-1/2} \sigma_{\ell} \rho^{-1/2}\right)^A
\]
where the operator $\rho$ is equivalent to the marginal $\rho^A$ of the state $\rho^{AB}$ on the system $A$, and $\sigma_{\ell}$ is a state which arises during the construction, but will not be important in the context of this proof. Also recall that by construction,
\[
\Tr\left[\Theta^A_{\bot}\ketbra{\rho}^{ABR}\right]\leq O(\eps).
\]
Then consider the following:
\begin{align*}
    &\norm{\sum\limits_{\ell\in \nice_L}P_{\Theta}(\ell)\rho_{\ell}^{RB}-\sum\limits_{\ell\in [L]\bigcup \brak{\bot}}P_{\Theta}(\ell)\rho_{\ell}^{RB}}_1\\
    =~&\norm{\sum\limits_{\ell\in \nice_L^c\bigcup{\bot}}P_{\Theta}(\ell)\rho_{\ell}^{RB}}_1\\
    \leq ~& \sum\limits_{\ell\in \nice_L^c}\norm{\Tr_A\left[\Theta_{\ell}^A\ketbra{\rho}^{ABR}\right]}_1+\norm{\Tr_A\left[\Theta^A_{\bot}\ketbra{\rho}^{ABR}\right]}_1\\
    \overset{(a)}{=}~&\sum\limits_{\ell\in \nice_L^c}\Tr\left[\Theta_{\ell}^A\rho^A\right]+\Tr\left[\Theta^A_{\bot}\rho^A\right]\\
    = ~& \sum\limits_{\ell\in \nice_L^c} \frac{1}{1+O(\eps^{1/4})}\cdot \frac{1}{L}\cdot \Tr\left[\sigma_{\ell}\right]+\Tr\left[\Theta^A_{\bot}\rho^A\right]\\
    \overset{(b)}{\leq} ~&  \frac{1}{1+O(\eps^{1/4})}\cdot \frac{\abs{\nice_L^c}}{L}+O(\eps)\\
    \overset{(c)}{\leq} ~& \frac{\eps^{1/16}}{1+O(\eps^{1/4})}+O(\eps)\\
    \leq ~& O(\eps^{1/16}),
\end{align*}
where in step $(a)$ we have used the fact that the $1$-norm of a positive semidefinite matrix is equal to its trace and subsequently traced out the systems $RB$, in step $(b)$ we have used the upper bound on $\Tr\left[\Theta^A_{\bot}\ketbra{\rho}^{ABR}\right]$ and also the structure of the POVM element $\Theta_{\ell}^A$ for $\ell\neq \bot$. In step $(c)$ we have used Lemma \ref{lem:newpovmstart} to bound the size of the set $\nice_L^c$.

An immediate consequence of the above calculation is that:
\[
\sum\limits_{\ell\in \nice_L^c\bigcup \brak{\bot}}P_{\Theta}(\ell)\leq O(\eps^{1/16}).
\]
We then define:
\[
\wtt{P}_{\Theta}(\ell)\coloneqq \frac{P_{\Theta}(\ell)}{\sum\limits_{\ell\in \nice_L} P_{\Theta}(\ell)}~~\forall \ell \in \nice_L.
\]
It is then not hard to see via Lemma 2.1 in \cite{Hayden_2008} that:
\[
\norm{\sum\limits_{\ell\in \nice_L}\wtt{P}_{\Theta}(\ell)\rho_{\ell}^{RB}-\sum\limits_{\ell\in [L]\bigcup \brak{\bot}}P_{\Theta}(\ell)\rho_{\ell}^{RB}}_1 \leq O(\eps^{1/16}).
\]
We will now invoke Lemma \ref{lem:newpovmstart} and the definition of the set $\nice_L$ to see that for all $\ell\in \nice_L$,
\[
H_H^{O(\eps^{1/8})}(\rho^{RB}_{\ell})\leq H_H^{O(\eps)}(RB~|~X)+O(\log \frac{1}{\eps}).
\]
Define 
\[
\wtt{\rho_{\ell}}^{RB}\coloneqq \frac{\sqrt{\Pi_{\ell}}\rho_{\ell}^{RB}\sqrt{\Pi_{\ell}}}{\Tr\left[\Pi_{\ell}\rho_{\ell}^{RB}\right]},
\]
where the operator $\Pi_{\ell}^{RB}$ arises in the definition of $H_H^{O(\eps^{1/8})}(\rho^{RB}_{\ell})$. Note that from \cite{Pranab_notes} we know that for any state $\rho_{\ell}^{RB}$, the expression $H_H^{O(\eps^{1/8})}(\rho_{\ell}^{RB})$ is optimised by an operator which commutes with $\rho_{\ell}^{RB}$ and which has all eigenvalues $1$ aside from maybe the smallest eigenvalue. Furthermore, the $0$ eigenvalues of $\Pi_{\ell}$ coincide with the smallest eigenvalues of $\rho_{\ell}^{RB}$ which add up to at most $O(\eps^{1/8})$ ( see \cite{Pranab_notes} for a proof of these properties). These properties imply the following:
\begin{enumerate}
    \item \label{item:1}The states $\wtt{\rho}^{RB}_{\ell}$ for all $\ell\in \nice_L$ are close to $\rho^{RB}_{\ell}$. To be precise:
    \begin{align*}
        &\norm{\wtt{\rho}_{\ell}-\rho_{\ell}}_1\\
        = ~& \norm{\frac{\sqrt{\Pi_{\ell}}\rho_{\ell}\sqrt{\Pi_{\ell}}}{\Tr\left[\Pi_{\ell}\rho_\ell\right]}-\rho_{\ell}}_1 \\
        \overset{(a)}{\leq} ~& O(\eps^{1/16}),
    \end{align*}
    where in step $(a)$ we have used the fact that $\Tr\left[\Pi_{\ell}\rho_{\ell}\right]\geq 1-\eps^{1/8}$ (by definition of $\Pi_{\ell}$ for all $\ell\in \nice_L$) and the Gentle Measurement Lemma.
    \item \label{item:2}The size support of the support of $\wtt{\rho}_{\ell}^{RB}$ is bounded above by $2^{H_H^{O(\eps^{1/8})}(\rho_{\ell}^{RB})}+1$. To see this, note that 
    \[
    \textup{rank}(\Pi_{\ell})\leq \Tr\left[\Pi_{\ell}\right]+1.
    \]
    Since $\Tr\left[\Pi_{\ell}\right]=2^{H_H^{O(\eps^{1/8})}(\rho_{\ell}^{RB})}$, the claim follows.
\end{enumerate}
A hybrid argument then shows that:
\[
\norm{\sum\limits_{\ell\in \nice_L}\wtt{P}_{\Theta}(\ell)\wtt{\rho}_{\ell}^{RB}-\sum\limits_{\ell\in [L]\bigcup \brak{\bot}}P_{\Theta}(\ell)\rho_{\ell}^{RB}}_1 \leq O(\eps^{1/16}).
\]
Finally, note that the action of $\Theta^A$ on is that of a CPTP map with Kraus operators $\ket{\ell}^{L_A}\sqrt{\Theta_{\ell}}^A$, followed by the trace out operation on the system $A$. Therefore, this cannot change the marginal on the system $RB$, which implies that
\[
\sum\limits_{\ell\in [L]\bigcup \brak{\bot}}P_{\Theta}(\ell)\rho_{\ell}^{RB}=\rho^{RB}.
\]
This gives us the following inequality:
\begin{align}\label{eq:Uhlmann}
\norm{\sum\limits_{\ell\in \nice_L}\wtt{P}_{\Theta}(\ell)\wtt{\rho}_{\ell}^{RB}-\rho^{RB}}_1 \leq O(\eps^{1/16}).
\end{align}
We will now define, for all $\ell\in \nice_L$, the purification $\ket{\wtt{\rho}_{\ell}}^{A_gRB}$ of the state $\wtt{\rho}_{\ell}^{RB}$. Here, the system $A_g$ is a system of dimension $2^{H_H^{O(\eps^{1/8})}(\rho_{\ell}^{RB})}+1$, which is sufficient by the arguments presented in Item \ref{item:2}. It is important to point out that we use the same space $A_g$ to purify \emph{all} the states $\wtt{\rho}_{\ell}^{RB}$.

We further define the pure state:
\[
\ket{\wtt{\rho}}^{L_AA_gRB}\coloneqq \sum\limits_{\ell\in \nice_L}\sqrt{\wtt{P}_{\Theta}(\ell)}\ket{\ell}^{L_A}\ket{\wtt{\rho}_{\ell}}^{A_gRB}.
\]
Note that since $\Theta^A$ has at most $\exp\left(I_{\max}^{\eps^4}(X:RB)-O(\log \eps)\right)$ outcomes (we absorb the additive $1$ due to the $\bot$ outcome in the $O(\log \frac{1}{\eps})$ term), the system $L_AA_g$ is of log dimension:
\[
\begin{aligned}
\log \abs{L_AA_g} &= I_{\max}^{\eps^4}(X:RB)+H_H^{O(\eps^{1/8})}(\rho^{RB}_{\ell})+O(\log \frac{1}{\eps})+O(1) \\
&\overset{(a)}{\leq}  I_{\max}^{\eps^4}(X:RB)+H_H^{O(\eps)}(RB~|~X)+O(\log \frac{1}{\eps})\\
&\overset{(b)}{\leq} \log \abs{A},
\end{aligned}
\]
where for the step $(a)$ we have used Lemma \ref{lem:newpovmstart}, and step $(b)$ is by the hypothesis of the lemma. This implies that there exists a system $A_p$ of log dimension at least $\log \abs{A}-I_{\max}^{\eps^4}(X:RB)-H_H^{O(\eps)}(RB~|~X)-O(\log \frac{1}{\eps})$, such that:
\[
A_pL_AA_g\cong A.
\]
We then define the pure state:
\[
\ket{\wdt{\rho}}^{L_AA_pA_gRB}\coloneqq \ket{0}^{A_p}\ket{\wtt{\rho}}^{L_AA_gRB}.
\]
Note that by Equation \ref{eq:Uhlmann}, 
\[
\wdt{\rho}^{RB}\overset{O(\eps^{1/16})}{\approx}\rho^{RB}.
\]
Therefore, by Uhlmann's theorem, there exists a \emph{unitary operator} $U_{\Theta}: A\to L_AA_pA_g$ such that:
\[
\begin{aligned}
&\norm{\ketbra{\wdt{\rho}}^{L_AA_pA_gRB}-U^{A\to L_AA_pA_g}_{\Theta}\cdot \ketbra{\rho}^{ARB}}_1 \\
=~&\norm{\ketbra{0}^{A_p}\otimes \ketbra{\wtt{\rho}}^{L_AA_gRB}-U^{A\to L_AA_pA_g}_{\Theta}\cdot \ketbra{\rho}^{ARB}}_1 \\
\leq~& O(\eps^{1/32})
\end{aligned}
\]
Next, Alice sends the $L_A$ system through the channel $\mathcal{P}^{L_A\to L_B}$. Then, the following holds by the monotonicity of the $1$-norm:
\begin{align*}
    &\left(\mathcal{P}^{L_A\to L_B}\circ \Tr_{A_g}\circ~ U^{A\to L_AA_pA_g}_{\Theta}\right)\cdot \ketbra{\rho}^{ARB}\\
    \overset{O(\eps^{1/32})}{\approx}~& \ketbra{0}^{A_p}\otimes \left(\sum\limits_{\ell\in \nice_L}\wtt{P}_{\Theta}(\ell)\ketbra{\ell}^{L_B}\otimes \wtt{\rho}_{\ell}^{RB}\right).
\end{align*}
Tracing out the system $R$ and by previous arguments, it is then easy to see that:
\begin{align*}
    &\sum\limits_{\ell\in \nice_L}\wtt{P}_{\Theta}(\ell)\ketbra{\ell}^{L_B}\otimes \wtt{\rho}_{\ell}^{B}\\
    \overset{O(\eps^{1/16})}{\approx}~& \sum\limits_{\ell\in \nice_L}\wtt{P}_{\Theta}(\ell)\ketbra{\ell}^{L_B}\otimes \rho_{\ell}^{B}
\end{align*}
Then, invoking Lemma \ref{lem:newpovmstart}, we see that for all $\rho_{\ell}^B$ in the above expression, it holds that:
\[
H_H^{O(\eps^{1/8})}(\rho_{\ell}^B)\leq H_H^{O(\eps)}(B~|~X)+O(\log \frac{1}{\eps}).
\]
Bob can then enact the conditional unitary:
\[
V^{L_BB\to B_pB_g}= \sum\limits_{\ell\in [L_B]} \ketbra{\ell}^{L_B}\otimes V_{\ell}^{B\to B_pB_g},
\]
where $B\cong B_pB_g$. We define the unitary operators $V_{\ell}^{B\to B_pB_g}$ as follows:
\begin{enumerate}
    \item For all $\ell\in \nice_L$, $V_{\ell}^{B\to B_pB_g}$ performs the locally optimal purity distillation protocol for the state $\rho_{\ell}^B$ with error $O(\eps^{1/16})$.
    \item For all $\ell\in [L_B]\setminus \nice_L$, set $V_{\ell}^{B\to B_pB_g}=\I$, where $\I$ denotes the natural isomorphism between the spaces $B$ and $B_pB_g$.
\end{enumerate}
Then it holds that:
\begin{align*}
     &V^{L_BB\to B_pB_g}\cdot \left(\sum\limits_{\ell\in \nice_L}\wtt{P}_{\Theta}(\ell)\ketbra{\ell}^{L_B}\otimes \rho_{\ell}^{B}\right) \\
     \overset{O(\eps^{1/16})}{\approx}~& \ketbra{0}^{B_p}\otimes \left(\sum\limits_{\ell\in \nice_L}\wtt{P}_{\Theta}(\ell)\ketbra{\ell}^{L_B}\otimes \rho'^{B_g}_{\ell}\right),
\end{align*}
where $\rho'^{B_g}_{\ell}$ is the state on the $B_g$ system remnant after the locally optimal protocol has been enacted on $\rho^B_{\ell}$. Note that the system $B_p$ has log dimension at least:
\[
\log \abs{B_p}\geq \log \abs{B}-H_H^{O(\eps)}(B~|~X)-O(\log \frac{1}{\eps}).
\]
The argument is completed by stringing together all of the above inequalities via a hybrid argument and using the monotonicity of the $1$-norm. This concludes the proof.
\end{proof}

We will now deal with the case when 
\[
I_{\max}^{\eps^4}(RB:X)+H_{H}^{O(\eps)}(RB|X)-O(\log \eps)> \log \abs{A}.
\]

It may happen that for some cases $I_{\max}^{\eps^4}(RB:X)+H_{H}^{O(\eps)}(RB|X)-O(\log \eps)$ exceeds $\log \abs{A}$. In that case, Alice cannot distil any pure qubits. Indeed, she has to borrow some qubits to even implement the unitary $U^A_{\Theta}$, which we constructed in Lemma \ref{lem:unitaryOne}. However, she will be able to implement the unitary $U^A_{\Theta}$ following the same recipe that we showed in Lemma \ref{lem:unitaryOne} if she borrows:
\[
I_{\max}^{\eps^4}(RB:X)+H_{H}^{O(\eps)}(RB|X)-O(\log \eps)-\log \abs{A}
\]
qubits. In this case we will use the following bound on $I_{\max}^{\eps^4}(RB:X)$ which was shown in \cite{Berta_RevShannon}:
\[
\begin{aligned}
I_{\max}^{\eps^4}(RB:X)& \leq H_{\max}^{O(\eps^8)}(RB)-H_{\min}^{O(\eps^8)}(RB|X)-O(\log \eps) \\
&\leq H_H^{O(\eps^8)}(A)-H_{\min}^{O(\eps^8)}(RB|X)-O(\log \eps) \\
&\leq \log \abs{A}-H_{\min}^{O(\eps^8)}(RB|X)-O(\log \eps).
\end{aligned}
\]
Therefore, in this case, Alice would have to borrow at most 
\[
\Delta(RB|X)\coloneqq H_H^{O(\eps)}(RB|X)-H_{\min}^{O(\eps^8)}(RB|X)-O(\log\eps)
\]
many qubits. We state this as a lemma below: \

\begin{lemma}\label{lem:unitaryTwo}
    Given the setting of Lemma \ref{lem:unitaryOne}, suppose that
    \[
    I_{\max}^{\eps^4}(RB:X)+H_{H}^{O(\eps)}(RB|X)-O(\log \eps)> \log \abs{A}.
    \]
    In this case, Alice can implement the unitary $U^{A\to L_AA_g}$ defined in Lemma \ref{lem:unitaryOne} by borrowing at most 
    \[
    \Delta(RB|X)\coloneqq H_H^{O(\eps)}(RB|X)-H_{\min}^{O(\eps^8)}(RB|X)-O(\log\eps)
    \]
    many qubits. Note that there is no $A_p$ system for this case since Alice cannot distil any pure qubits by herself. The net purity that Alice and Bob together distil is given by:
    \[
    \log \abs{B}-H_H^{\eps^2}(B|X)-\Delta(RB|X)+O(\log\eps).
    \]
\end{lemma}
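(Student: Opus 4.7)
The plan is to repeat the Uhlmann-based construction of Lemma \ref{lem:unitaryOne} with the only change being that, when the target space $L_AA_g$ is too big to fit inside $A$, we give Alice an extra pure ancilla register $C_{\alice}$ to make up the deficit. Concretely, the choice of POVM $\Theta^A(k)$, the \emph{nice} set $\nice_L$, the truncated conditional states $\wtt{\rho}_{\ell}^{RB}$, their purifications $\ket{\wtt{\rho}_{\ell}}^{A_gRB}$, and the global pure state
\[
\ket{\wtt{\rho}}^{L_AA_gRB}\coloneqq \sum_{\ell\in \nice_L}\sqrt{\wtt{P}_{\Theta}(\ell)}\ket{\ell}^{L_A}\ket{\wtt{\rho}_{\ell}}^{A_gRB}
\]
are all unchanged. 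The only novelty is that Alice borrows $\log \abs{C_{\alice}}\coloneqq \log\abs{L_AA_g}-\log\abs{A}$ qubits initialised to $\ket{0}$, and applies Uhlmann's theorem to $\ket{\rho}^{ARB}\otimes\ket{0}^{C_{\alice}}$ to obtain a unitary $U_{\Theta}^{AC_{\alice}\to L_AA_g}$ whose output is $O(\eps^{1/32})$-close in trace distance to $\ket{\wtt{\rho}}^{L_AA_gRB}$, by exactly the same argument as in Lemma \ref{lem:unitaryOne}. There is no $A_p$ register in this case because the enlarged workspace $AC_{\alice}$ is exactly isomorphic to $L_AA_g$, so Alice herself distils no purity.

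The key quantitative step, and the one place where the new hypothesis enters, is bounding $\log \abs{C_{\alice}}$ by $\Delta(RB|X)$. From the dimension count in the proof of Lemma \ref{lem:unitaryOne} we already know
\[
\log\abs{L_AA_g}\leq I_{\max}^{\eps^4}(RB:X)+H_H^{O(\eps)}(RB|X)+O(\log \frac{1}{\eps}).
\]
To this I would apply the inequality from \cite{Berta_RevShannon} displayed in the paragraph preceding the lemma statement, namely $I_{\max}^{\eps^4}(RB:X)\leq H_{\max}^{O(\eps^8)}(RB)-H_{\min}^{O(\eps^8)}(RB|X)-O(\log\eps)$, together with the chain $H_{\max}^{O(\eps^8)}(RB)=H_{\max}^{O(\eps^8)}(A)\leq \log\abs{A}+O(\log\eps)$. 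The equality here uses that $\ket{\rho}^{ARB}$ is pure, so its marginals on $RB$ and on $A$ have the same spectrum (the analogue of Lemma \ref{lem:purehh} for $H_{\max}^{\eps}$), and the final inequality is a direct consequence of Fact \ref{fact:tildemaxHrelation}. Subtracting $\log\abs{A}$ from both sides of the bound on $\log\abs{L_AA_g}$ then yields $\log\abs{C_{\alice}}\leq \Delta(RB|X)$ exactly as claimed.

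With $U_{\Theta}^{AC_{\alice}}$ in hand, the remainder of the protocol is identical to Lemma \ref{lem:unitaryOne}: Alice pipes $L_A$ through the dephasing channel $\mathcal{P}^{L_A\to L_B}$, and Bob performs the conditional locally-optimal unitary $V^{L_BB\to B_pB_g}$, recovering $\log\abs{B_p}\geq \log\abs{B}-H_H^{\eps^2}(B|X)+O(\log\eps)$ pure qubits on his side (the analysis of the $\nice_L$ outcomes transports verbatim because it depends only on the $B$-marginals of $\wtt{\rho}_\ell^{RB}$, which are unchanged). Since Alice distils no purity but consumes $\log\abs{C_{\alice}}\leq \Delta(RB|X)$ borrowed qubits, the net purity is $\log\abs{B_p}-\log\abs{C_{\alice}}\geq \log\abs{B}-H_H^{\eps^2}(B|X)-\Delta(RB|X)+O(\log\eps)$, as claimed. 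The only real content beyond Lemma \ref{lem:unitaryOne} lives in the dimension accounting of the middle paragraph; the Uhlmann construction and Bob's argument are unchanged, so I anticipate no serious obstacle beyond carefully chaining the $I_{\max}$ upper bound with the purity of $\ket{\rho}^{ARB}$.
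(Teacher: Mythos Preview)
Your proposal is correct and follows essentially the same approach as the paper. The paper's ``proof'' is in fact just the discussion preceding the lemma statement: borrow enough ancilla to make up the dimension deficit, then invoke the Berta et al.\ bound on $I_{\max}^{\eps^4}(RB:X)$ together with purity of $\ket{\rho}^{ARB}$ to cap the deficit by $\Delta(RB|X)$; your write-up fleshes this out with the explicit Uhlmann step and Bob's side, which the paper leaves implicit by pointing back to Lemma~\ref{lem:unitaryOne}. The only cosmetic difference is in the entropy chain: the paper goes $H_{\max}^{O(\eps^8)}(RB)\leq H_H^{O(\eps^8)}(A)\leq \log\abs{A}$ while you go $H_{\max}^{O(\eps^8)}(RB)=H_{\max}^{O(\eps^8)}(A)\leq \log\abs{A}+O(\log\tfrac{1}{\eps})$, but both are valid and yield the same bound up to the $O(\log\eps)$ slack already present in the statement.
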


Note that in the asymptotic iid limit, the case dealt with in Lemma \ref{lem:unitaryTwo} does not occur. We conclude by formally showing that \name borrows fewer pure qubits than Protocol C in the worst case, as long as $\log \abs{A}-H_H^{O(\eps)}(A)\geq O(\log \frac{1}{\eps})$, i.e., when the state on the system $A$ is even nominally away from maximally mixed:

\section{Comparative Analysis of \name}\label{sec:compare}

In this section we compare the performance of \name to that of \kd, both in the general case of bounded communication and in the case when unbounded classical communication is allowed. We show that in both cases, as long as $\rho^A$ is not too close to the maximally mixed state, \name outperforms \kd. We state our results as Corollaries \ref{corol: general compare} and \ref{corol:unboundedcommcompare} below, the proofs of which follow as corollaries from Theorem \ref{thm:informal}.

\begin{corollary}\label{corol: general compare}
\textup{\name} borrows fewer qubits as compared to \kd as long as $\log \abs{A}-H_H^{O(\eps)}(A)\geq O(\log \frac{1}{\eps})$.
\end{corollary}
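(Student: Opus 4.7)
The plan is to reduce the comparison to the Case I regime of Theorem \ref{thm:informal}. Recall that \name borrows at most $O(\log \frac{1}{\eps})$ ancilla qubits whenever the condition
\[
I_{\max}^{\eps^4}(RB:X) + H_H^{O(\eps)}(RB|X) - O(\log \eps) \leq \log \abs{A},
\]
which we shall refer to as $(\star)$, is satisfied, while \kd (Section \ref{sec:kd}) always borrows $I_{\max}^{\eps^4}(X:RB) + O(\log \frac{1}{\eps})$ qubits. So it suffices to show that the hypothesis $\log \abs{A} - H_H^{O(\eps)}(A) \geq O(\log \frac{1}{\eps})$ forces $(\star)$: once we are in Case I, the savings of \name over \kd are $I_{\max}^{\eps^4}(X:RB) - O(\log \frac{1}{\eps})$, which is strictly positive whenever the measurement $\Lambda$ extracts any meaningful $X:RB$ correlation.

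The key technical step is an approximate chain-rule inequality of the form
\[
I_{\max}^{\eps^4}(X:RB) + H_H^{O(\eps)}(RB|X) \leq H_H^{O(\eps)}(RB) + O(\log \tfrac{1}{\eps}).
\]
My starting point is the bound $I_{\max}^{\eps^4}(RB:X) \leq H_{\max}^{O(\eps^8)}(RB) - H_{\min}^{O(\eps^8)}(RB|X) - O(\log \eps)$ from \cite{Berta_RevShannon}, which was already invoked in the proof of Lemma \ref{lem:unitaryTwo}. Adding $H_H^{O(\eps)}(RB|X)$ to both sides, I would need to (i) absorb the difference $H_H^{O(\eps)}(RB|X) - H_{\min}^{O(\eps^8)}(RB|X)$ into the logarithmic slack, by invoking the chain rules of Fact \ref{fact:dupuischainrules} on the joint $RBX$ register together with Lemma \ref{lem:equivHHtildeHmax} to tie $H_H^{\eps}$, $\thmax^{\eps}$ and $H_{\min}^{\eps}$ at comparable smoothing parameters, and (ii) convert the unconditional $H_{\max}^{O(\eps^8)}(RB)$ into $H_H^{O(\eps)}(RB)$ using Fact \ref{fact:tildemaxHrelation} and Lemma \ref{lem:equivHHtildeHmax}.

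Finally, since $\ket{\rho}^{ABR}$ is pure, Lemma \ref{lem:purehh} gives $H_H(RB) = H_H(A)$ at matched smoothing. Combined with the chain rule above and the hypothesis, the left-hand side of $(\star)$ is at most $H_H^{O(\eps)}(A) + O(\log \frac{1}{\eps}) \leq \log \abs{A}$, placing us in Case I. The corollary then follows by comparing \name's $O(\log \frac{1}{\eps})$ cost with \kd's $I_{\max}^{\eps^4}(X:RB) + O(\log \frac{1}{\eps})$ cost.

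The main obstacle I foresee is the chain-rule step: hypothesis testing entropies do not satisfy textbook chain rules as cleanly as the smooth min/max entropies, so bridging $H_H^{\eps}$ and $H_{\min}^{\eps}$ at the conditional level while keeping the aggregated slack within the $O(\log \frac{1}{\eps})$ budget is delicate. A clean fallback is to state the corollary with a slightly looser smoothing parameter on $H_H^{(\cdot)}(A)$ in the hypothesis, matching whatever parameter $H_H^{O(\eps^k)}(RB)$ we end up with after the conversions; this weakening is harmless since the paper's quantitative comparisons only track smoothing up to polynomial factors in $\eps$.
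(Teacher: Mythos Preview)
Your strategy of forcing Case~I is both unnecessary and, as stated, unachievable. The paper does not argue that the hypothesis $\log|A|-H_H^{O(\eps)}(A)\geq O(\log\tfrac{1}{\eps})$ implies condition~$(\star)$; instead it compares the two borrowing costs directly and lets the $I_{\max}^{\eps^4}(RB:X)$ terms cancel. Writing $C=I_{\max}^{\eps^4}(RB:X)+O(\log\tfrac{1}{\eps})$ for \kd\ and $D=\max\{0,\,I_{\max}^{\eps^4}(RB:X)+H_H^{O(\eps)}(RB|X)-\log|A|+O(\log\tfrac{1}{\eps})\}$ for \name, subtraction gives $C-D\geq \log|A|-H_H^{O(\eps)}(RB|X)-O(\log\tfrac{1}{\eps})$. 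Then one only needs data processing, $H_H^{O(\eps)}(RB|X)\leq H_H^{O(\eps)}(RB)$, and Lemma~\ref{lem:purehh}, $H_H^{O(\eps)}(RB)=H_H^{O(\eps)}(A)$, to conclude. No chain rule relating $H_H$ and $H_{\min}$ is required.

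The gap in your argument is precisely the step you flagged: after invoking the Berta bound you must control $H_H^{O(\eps)}(RB|X)-H_{\min}^{O(\eps^8)}(RB|X)$, but this is exactly the quantity $\Delta(RB|X)$ appearing in Case~II of Theorem~\ref{thm:informal}, and it is \emph{not} $O(\log\tfrac{1}{\eps})$ in general. Adjusting the smoothing parameter will not close this gap, because the discrepancy between hypothesis-testing and min conditional entropies is structural, not a smoothing artifact. Consequently the hypothesis of the corollary does not force Case~I, and your reduction collapses. The fix is simply to abandon the attempt to establish $(\star)$ and instead subtract the two borrowing expressions so that $I_{\max}$ disappears; the remaining inequality then follows from data processing and purity alone.
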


\begin{proof}
    Recall that \kd requires Alice to borrow $C_{\textup{borrow}}\coloneqq I_{\max}^{\eps^4}(RB:X)+O(\log \frac{1}{\eps})$ number of pure qubits to function. On the other hand, \name requires Alice to borrow \[D_{\textup{borrow}}\coloneqq \max\brak{0, I_{\max}^{\eps^4}(RB:X)+H_H^{O(\eps)}(RB|X)-\log \abs{A}+O(\log\frac{1}{\eps})}\] qubits. Clearly, 
    \begin{align*}
        &C_{{\textup{borrow}}}-D_{\textup{borrow}}\\
        \geq & \log \abs{A}-H_H^{\eps}(RB~|~X)-O(\log \frac{1}{\eps})\\
        \geq & \log \abs{A} -H_H^{\eps}(RB)-O(\log \frac{1}{\eps})\\
        = &\log \abs{A} -H_H^{\eps}(A)-O(\log \frac{1}{\eps}).
    \end{align*}
    where we have used the data-processing inequality to show that $H_H^{O(\eps)}(RB~|~X)\leq H_H^{O(\eps)}(RB)$ and Lemma \ref{lem:purehh} to show that $H_H^{O(\eps)}(RB)=H_H^{O(\eps)}(A)$. Therefore the corollary holds as long as $\log \abs{A}-H_H^{O(\eps)}(A)\geq O(\log\frac{1}{\eps})$. This concludes the proof.
\end{proof}

\begin{corollary}\label{corol:unboundedcommcompare}
    Given the setup of Theorem \ref{thm:informal}, suppose that the POVM $\brak{\Lambda_x}_x$ has rank-$1$ elements. Then \name guarantees the following:
    \begin{align*}
        \textup{Purity}_{\textsc{alice}}&\geq \log \abs{A}-H_H^{O(\eps^8)}(A)+O(\log \eps)\\
        \textup{Purity}_{\textsc{bob}}&\geq \log \abs{B}-H_H^{\eps^2}(B)+O(\log \eps)
    \end{align*}
    and the number of qubits that Alice is required to borrow to run the protocol is at most $O(\log \frac{1}{\eps})$.
\end{corollary}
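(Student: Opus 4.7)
The plan is to apply Theorem~\ref{thm:informal} to the given rank-$1$ POVM $\{\Lambda_x\}_x$ and simplify the resulting bounds using the structure of rank-$1$ measurements. The key observation is that for a rank-$1$ POVM element $\Lambda_x \propto \ket{\phi_x}\bra{\phi_x}^A$ acting on the pure purification $\ket{\rho}^{ABR}$, the vector $\sqrt{\Lambda_x}\ket{\rho}^{ABR}$ factorises as $\ket{\phi_x}^A \otimes \ket{\psi_x}^{RB}$, so each post-measurement conditional state $\rho_x^{RB}$ is pure. The control state $\rho^{XRB}$ is therefore a cq state with pure conditional states on $RB$, from which Lemma~\ref{lem:hhzeroforconditionalpure} gives $H_H^{O(\eps)}(RB|X) \leq 0$, and a short direct argument (using that the optimal $\sigma^X$ in the definition of $H_{\min}(RB|X)$ must satisfy $\sigma^X(x) \geq P_X(x)$ block-wise) yields $H_{\min}^{O(\eps^8)}(RB|X) \geq 0$.

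Plugging these into the Berta-type inequality used in the proof of Lemma~\ref{lem:unitaryTwo},
\[
I_{\max}^{\eps^4}(RB:X) \leq H_{\max}^{O(\eps^8)}(RB) - H_{\min}^{O(\eps^8)}(RB|X) - O(\log \eps),
\]
together with Fact~\ref{fact:tildemaxHrelation}, Lemma~\ref{lem:equivHHtildeHmax}, and Lemma~\ref{lem:purehh} (to rewrite $H_H^{\eps}(RB) = H_H^{\eps}(A)$, since $\ket{\rho}^{ABR}$ is pure), I obtain the clean bound $I_{\max}^{\eps^4}(RB:X) \leq H_H^{O(\eps^8)}(A) + O(\log \tfrac{1}{\eps})$. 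Substituting into Case~I of Theorem~\ref{thm:informal} and using $H_H^{O(\eps)}(RB|X) \leq 0$ gives
\[
\textup{Purity}_{\textsc{alice}} \geq \log |A| - H_H^{O(\eps^8)}(A) + O(\log \eps),
\]
while for Bob I would invoke the elementary conditioning-reduces-$H_H$ inequality $H_H^{\eps}(B|X) \leq H_H^{\eps}(B)$, which follows directly from the definition by using $\Pi^{XB} = I^X \otimes \Pi^B$ with $\Pi^B$ optimal for $H_H^{\eps^2}(B)$. This produces $\textup{Purity}_{\textsc{bob}} \geq \log |B| - H_H^{\eps^2}(B) + O(\log \eps)$, and the ancilla cost is $O(\log \tfrac{1}{\eps})$ in Case~I by the theorem.

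The main obstacle I anticipate is verifying that Case~I of Theorem~\ref{thm:informal} is the one that actually applies; the hypothesis $I_{\max}^{\eps^4}(RB:X) + H_H^{O(\eps)}(RB|X) \leq \log|A| - O(\log \tfrac{1}{\eps})$ could in principle fail by the additive $O(\log \tfrac{1}{\eps})$ slack inherent in the Berta bound. The resolution is a case split on the gap $\log|A| - H_H^{O(\eps^8)}(A)$. If this gap exceeds some constant multiple of $\log \tfrac{1}{\eps}$, the above bounds place us squarely in Case~I and the argument above goes through. Otherwise, the claimed lower bound $\log|A| - H_H^{O(\eps^8)}(A) + O(\log \eps)$ for $\textup{Purity}_{\textsc{alice}}$ is itself non-positive and hence vacuous, so it suffices to invoke Case~II of the theorem; there one computes $\Delta(RB|X) = H_H^{O(\eps)}(RB|X) - H_{\min}^{O(\eps^8)}(RB|X) - O(\log \eps) = O(\log \tfrac{1}{\eps})$, so Alice still borrows only $O(\log \tfrac{1}{\eps})$ qubits, and Bob's guaranteed purity $\log|B| - H_H^{\eps^2}(B|X) - \Delta(RB|X) + O(\log \eps)$ still collapses to $\log|B| - H_H^{\eps^2}(B) + O(\log \eps)$ by the same conditioning argument.
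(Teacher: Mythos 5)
Your proposal is correct, and its engine is the same as the paper's: the rank-$1$ structure forces $\sqrt{\Lambda_x}\ket{\rho}^{ABR}$ to factorise as $\ket{\wtt{\varphi}_x}^A\otimes(\text{vector on }RB)$, so the control state is cq with pure conditional states on $RB$, whence $H_H^{O(\eps)}(RB|X)\leq 0$ by Lemma \ref{lem:hhzeroforconditionalpure} and the $I_{\max}$ term collapses to roughly $H_H^{O(\eps^8)}(A)$. You diverge in two places, both defensible and arguably improvements. First, where the paper obtains $I_{\max}^{\eps^4}(RB:X)\leq H_{\max}^{O(\eps^8)}(A)+O(\log\frac{1}{\eps})$ by citing Lemma B.17 of \cite{CNB23}, you derive it from the Berta-type bound already invoked in Lemma \ref{lem:unitaryTwo} together with the observation that $H_{\min}^{O(\eps^8)}(RB|X)\geq 0$ for a cq state with pure conditionals (your block-wise argument $\sigma^X(x)\geq P_X(x)$ is exactly right), which keeps the proof self-contained in the paper's own toolkit. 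Second, you explicitly check which case of Theorem \ref{thm:informal} applies and absorb the $O(\log\frac{1}{\eps})$ slack by a dichotomy on $\log\abs{A}-H_H^{O(\eps^8)}(A)$; the paper silently lands in Case I, so your Case II branch, where $\Delta(RB|X)\leq O(\log\frac{1}{\eps})$ because $H_H^{O(\eps)}(RB|X)\leq 0\leq H_{\min}^{O(\eps^8)}(RB|X)$, closes a small gap the published argument glosses over. One minor slip to fix: in Alice's rate the term $H_{H}^{O(\eps)}(RB|X)$ appears with a \emph{plus} sign, so to conclude $\textup{Purity}_{\textsc{alice}}\geq \log\abs{A}-H_H^{O(\eps^8)}(A)+O(\log\eps)$ you need the matching \emph{lower} bound $H_H^{O(\eps)}(RB|X)\geq \log(1-O(\eps))$, not the upper bound $\leq 0$; this is immediate from the same pure-conditional structure (for block-diagonal tests, $\Tr[\Pi_x]\geq \Tr[\Pi_x\ketbra{\psi_x}]$ forces the objective to be at least $1-\eps$), while the upper bound is what you actually use for the Case I hypothesis and the ancilla count.
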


\begin{proof}
    First note that the global state is the pure state $\ket{\rho}^{ABR}=\sum\limits_{i}s_i\ket{i}^A\ket{i}^{RB}$. It is given that the POVM $\Lambda$ is rank-$1$, i.e. it is constituted by operators of the form $\brak{\ketbra{\varphi_x}}_x$. Note that each vector $\ket{\varphi_x}$ has $2$-norm at most $1$. This is simply because each operator $\ketbra{\varphi_x}^A\leq \I^A$ (by the definition of a POVM). By definition of the action of the POVM, the post measurement state $\rho^{XBR}$ is given by:
    \begin{align*}
    &\sum\limits_{x}\ketbra{x}^X\otimes \Tr_{A}\left[\left(I^{RB}\otimes \ketbra{\varphi_x}^A\right)\ketbra{\rho}^{ABR}\right]\\
        = ~&\sum\limits_{x}\ketbra{x}^X\otimes \Tr_{A}\left[\left(I^{RB}\otimes \sqrt{\ketbra{\varphi_x}^A}\right)\cdot\ketbra{\rho}^{ABR}\right].
    \end{align*}
    Note that:
    \begin{align*}
        \left(I^{RB}\otimes \sqrt{\ketbra{\varphi_x}^A}\right)\ket{\rho}^{ABR} &= \left(I^{RB}\otimes \sqrt{\ketbra{\varphi_x}^A}\right)\sum_i s_i\ket{i}^A\ket{i}^{BR}\\
        &= \ket{\wtt{\varphi}_x}^A\left(\sum_i s_i \braket{\varphi_x|i} \ket{i}^{BR}\right)
    \end{align*}
    where $\ket{\widetilde{\varphi}_x}^A$ is the normalised version of the vector $\ket{\varphi_x}^A$ and we have used the fact that $\sqrt{\ketbra{\varphi_x}^A}=\ket{\wtt{\varphi}_x}\bra{\varphi_x}^A$. It is easy to see that:
    \begin{align*}
        \norm{\sum_i s_i \braket{\varphi_x|i} \ket{i}^{BR}}_2^2 &~=  \sum_i s_i^2 \abs{\braket{\varphi_x|i}}^2\\
         &~= \Tr\left[\ketbra{\varphi_x}^A\rho^A\right] \\
         &~\coloneqq P_X(x).
    \end{align*}
    Then, defining $  \ket{\psi_x}^{BR}$ to be the normalised version of the vector   $\sum\limits_{i} s_i \braket{\varphi_x|i} \ket{i}^{RB}$ we can rewrite the post measurement state $\rho^{XBR}$ as:
    \[
    \rho^{XBR}= \sum\limits_{x}P_X(x)\ketbra{x}^X\otimes \ketbra{\psi_x}^{BR}.
    \]
    We know from Lemma \ref{lem:hhzeroforconditionalpure} that for states of this form, it holds that $H_H^{\eps}(RB~|~X)\leq 0$. This implies that for this case:
    \begin{align*}
        &~I_{\max}^{\eps^4}(RB~|~X)+H_H^{O(\eps)}(RB~|~X)+O(\log \frac{1}{\eps}) \\
        &\leq  I_{\max}^{\eps^4}(RB~|~X)+O(\log \frac{1}{\eps}) \\
        &\leq H_{\max}^{O(\eps^8)}(A)+O(\log \frac{1}{\eps}).
    \end{align*}
    where the last line follows from Lemma B.17, \cite{CNB23}. Therefore, by Theorem \ref{thm:informal} this implies that in this case Alice needs to borrow at most $O(\log \frac{1}{\eps})$ ancilla qubits for \name to work. This concludes the proof.
\end{proof}

\section{Acknowledgements}
The work of R.J. is supported by the NRF grant NRF2021-QEP2-02-P05 and
the Ministry of Education, Singapore, under the Research Centres of Excellence
program. This work was done in part while R.J. was visiting the
Technion-Israel Institute of Technology, Haifa, Israel, and the Simons
Institute for the Theory of Computing, Berkeley, CA, USA. S.C. would like to acknowledge support from the National Research Foundation, including under NRF RF Award No. NRF-NRFF2013-13 and
NRF2021-QEP2-02-P05 and the Prime Minister’s Office, Singapore and the Ministry of Education, Singapore, under
the Research Centres of Excellence program. P.S. would like to acknowledge support of the Department of Atomic Energy, Government of India, under project no. 12-R\&D-TFR-5.01-0500, for carrying out this research work.

\bibliography{ref}
\bibliographystyle{plain}

\appendices{

\section{Proof of Lemma \ref{lem:mainderandhhtwice}}\label{appendix:derandhh}
Before we go on to the main proof, we will prove another lemma which will be useful in the main proof of Lemma \ref{lem:mainderandhhtwice}.
\begin{lemma}\label{lem:condhhderand}
Consider the set of classical symbols $\mathcal{X}$ and let $X$ be a register which holds symbols from this set. Let $B$ be a quantum register. Suppose that we are given two classical quantum states $\rho^{XB}$ and $\sigma^{XB}$ as follows:
\begin{align*}
    &\rho^{XB}\coloneqq \sum\limits_{x} P_X(x)\ketbra{x}^X\otimes \rho_x^B \\
    &\sigma^{XB}\coloneqq \sum\limits_{x} Q_X(x)\ketbra{x}^X\otimes \sigma_x^B
\end{align*}
with the promise that for all $x\in \mathcal{X}$ it  holds that $\norm{\sigma_x^B-\rho_x^B}_1\leq \eps$ and $\norm{P_X-Q_X}_1\leq \eps$. Next, let $K$ be an integer and suppose that we are given a deterministic function $f:[K]\to \mathcal{X}$. Suppose there exists a distribution $Q_K$ on $[K]$ such that the following holds:
\begin{align*}
    &\norm{Q_K-\mathbf{Unif}[K]}_1\leq \delta\\
     \sum\limits_{k: f(k)=x}& Q_K(k)= Q_X(x)~~\forall x\in \mathcal{X}.
\end{align*}
Let us also define, for all $k\in [K]$, the states $\sigma_k^B\coloneqq \sigma_{f(k)}^B$. Then, at least $1-\eps^{1/8}-\delta$ fraction of $k\in [K]$ satisfy the condition that:
\[
2^{H_H^{\eps^{1/8}}(\sigma^B_k)}\leq \frac{2^{H_H^{\eps}(B~|~X)_{\rho}}}{\eps}.
\]
\end{lemma}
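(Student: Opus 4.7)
The plan is to construct, for most $k \in [K]$, an explicit candidate operator witnessing the bound on $H_H^{\eps^{1/8}}(\sigma_k^B)$ by recycling the optimizer of $H_H^{\eps}(B|X)_\rho$, and then track how the resulting "good" set transfers through the four probabilistic steps implicit in the statement: from $\rho_x$ to $\sigma_x$, from $P_X$ to $Q_X$, from $\mathcal{X}$ to $[K]$ via $f$, and from $Q_K$ to the uniform distribution on $[K]$.

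First, by the standard block-diagonal reduction used already in the proof of Lemma \ref{lem:averageToWorstcaseHH}, I may assume that the optimizer for $H_H^{\eps}(B|X)_\rho$ is of the form $\Pi^{XB} = \sum_x \ketbra{x}^X \otimes \Pi_x^B$ with $0 \leq \Pi_x \leq \I^B$, satisfying both $\sum_x P_X(x)\Tr[\Pi_x \rho_x^B] \geq 1-\eps$ and $\sum_x P_X(x)\Tr[\Pi_x] = 2^{H_H^{\eps}(B|X)_\rho}$. Next I apply Markov's inequality twice under $P_X$: applied to the quantity $1-\Tr[\Pi_x \rho_x^B] \in [0,1]$ with threshold $\sqrt{\eps}$ it shows the set $\mathcal{G}_1 := \{x : \Tr[\Pi_x \rho_x^B] \geq 1-\sqrt{\eps}\}$ has $P_X$-mass at least $1-\sqrt{\eps}$; applied to $\Tr[\Pi_x]$ with threshold $2^{H_H^{\eps}(B|X)_\rho}/\eps$ it shows $\mathcal{G}_2 := \{x : \Tr[\Pi_x] \leq 2^{H_H^{\eps}(B|X)_\rho}/\eps\}$ has $P_X$-mass at least $1-\eps$.

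For each $x \in \mathcal{G} := \mathcal{G}_1 \cap \mathcal{G}_2$, the trace-norm bound $\|\sigma_x^B - \rho_x^B\|_1 \leq \eps$ together with $0 \leq \Pi_x \leq \I$ yields $\Tr[\Pi_x \sigma_x^B] \geq \Tr[\Pi_x \rho_x^B] - \eps \geq 1-\sqrt{\eps}-\eps \geq 1-\eps^{1/8}$ for small enough $\eps$. Hence $\Pi_x$ is a valid candidate in the program defining $H_H^{\eps^{1/8}}(\sigma_x^B) = -D_H^{\eps^{1/8}}(\sigma_x^B \| \I^B)$, giving the desired eigenvalue bound $2^{H_H^{\eps^{1/8}}(\sigma_x^B)} \leq \Tr[\Pi_x] \leq 2^{H_H^{\eps}(B|X)_\rho}/\eps$, with $P_X(\mathcal{G}) \geq 1-\sqrt{\eps}-\eps$.

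To conclude, I transport $\mathcal{G}$ to $[K]$. First, $\|P_X - Q_X\|_1 \leq \eps$ gives $Q_X(\mathcal{G}) \geq 1-\sqrt{\eps}-2\eps$. Setting $\mathcal{G}_K := f^{-1}(\mathcal{G}) \subseteq [K]$, the hypothesis $\sum_{k : f(k)=x} Q_K(k) = Q_X(x)$ yields $Q_K(\mathcal{G}_K) = Q_X(\mathcal{G})$, and finally $\|Q_K - \mathbf{Unif}[K]\|_1 \leq \delta$ gives $|\mathcal{G}_K|/K \geq Q_K(\mathcal{G}_K) - \delta \geq 1-\sqrt{\eps}-2\eps-\delta \geq 1-\eps^{1/8}-\delta$ for sufficiently small $\eps$, since $\sqrt{\eps} + 2\eps \leq \eps^{1/8}$. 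Because $\sigma_k^B = \sigma_{f(k)}^B$ by definition, the bound on $\sigma_x^B$ for $x \in \mathcal{G}$ is inherited by every $k \in \mathcal{G}_K$, finishing the argument. The proof is essentially bookkeeping; the only mild subtlety is calibrating the two Markov thresholds ($\sqrt{\eps}$ and $\eps$) so that, after the $\eps$-perturbation from $\rho_x$ to $\sigma_x$, the success probability still comfortably exceeds $1-\eps^{1/8}$ while the trace bound produces the required $1/\eps$ factor.
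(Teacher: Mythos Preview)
Your proof is correct and follows essentially the same approach as the paper: take the block-diagonal optimizer for $H_H^{\eps}(B|X)_\rho$, apply Markov twice to isolate a good set of $x$'s, and then transport this set through $P_X \to Q_X \to Q_K \to \mathbf{Unif}[K]$. Your execution is in fact slightly cleaner than the paper's, since you convert $\rho_x \to \sigma_x$ pointwise via $\|\sigma_x-\rho_x\|_1\le\eps$ rather than on average via a hybrid argument, which lets you skip the paper's extra Markov step at the $K$ level.
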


\begin{proof}
Let $\Pi_{\textsc{opt}}^{XB}$ be the optimising operator in the definition of $H_H^{\eps}(B~|~X)$, where we can assume without loss of generality that $\Pi_{\textsc{opt}}$ is of the form 
\[
\Pi_{\textsc{opt}}= \sum_{x}\ketbra{x}^X\otimes \Pi_x^B
\]
where each $\Pi_x^B$ satisfies the condition
\[
0^B \leq \Pi_x^B \leq \I^B.
\]
We claim that $\abs{\Tr\left[\Pi_{\textsc{opt}}(\rho^{XB}-\sigma^{XB})\right]}\leq 2\eps$. To see this, note that:
\begin{align*}
    &\abs{\Tr\left[\Pi_{\textsc{opt}}(\rho^{XB}-\sigma^{XB})\right]}\\
    \leq  & \norm{\rho^{XB}-\sigma^{XB}}_1\\
    \leq & 2\eps,
\end{align*}
where the first inequality is by the definition of the $1$-norm and the last inequality can be proved using a standard hybrid argument. This immediately implies that:
\begin{align*}
    &\sum\limits_{x}Q_X(x)\Tr\left[\Pi_x\sigma_x\right]\\& =\Tr\left[\Pi_{\textsc{opt}}\sigma^{XB}\right] \\&\geq     \Tr\left[\Pi_{\textsc{opt}}\rho^{XB}\right]-2\eps  \\
    &\geq 1-3\eps.
\end{align*}
Markov's inequality then implies that there exists a set $\good_1\subseteq \mathcal{X}$ such that $\Pr\limits_{Q_X}\left[\good_1\right]\geq 1-\sqrt{3\eps}$ and for all $x\in \good_1$ it holds that:
\[
\Tr\left[\Pi_x\sigma_x^B\right]\geq 1-\sqrt{3\eps}.    
\]
Since $Q_X$ and $P_X$ are close in the $1$-norm, this implies that $\Pr\limits_{P_X}\left[\good_1\right]\geq 1-\sqrt{4\eps}$. Next, note that by definition, 
\[
\sum\limits_{x}P_X(x)\Tr\left[\Pi_x\right]=2^{-H_H^{\eps}(B~|~X)}.
\]
Again using Markov's inequality, we see that there exists a set $\good_2\subseteq \mathcal{X}$ such that $\Pr\limits_{P_X}\left[\good_2\right]\geq 1-\eps$ and for all $x\in \good_2$ it holds that:
\[
\Tr\left[\Pi_x\right] \leq \frac{2^{H_H^{\eps}}(B~|~X)}{\eps}.
\]
Therefore, we have identified a set $\good_X\coloneqq \good_1\bigcap \good_2$ of probability at least $1-\sqrt{5\eps}$ (under $P_X$) such that for all $x\in \good_X$:
\begin{align*}
    & \Tr\left[\Pi_x\sigma_x^B\right]\geq 1-\sqrt{3\eps}\\
    & \Tr\left[\Pi_x\right] \leq \frac{2^{H_H^{\eps}}(B~|~X)}{\eps}.
\end{align*}
Now, let us define the subset $\good_K\subseteq [K]$ as follows:
\[
\good_K\coloneqq \brak{k~\big|~f(k)=x, x\in \good_X}.
\]
We then define the operator:
\[
\Pi^{\prime KB} \coloneqq \sum\limits_{k\in \good_{K}}\ketbra{k}^{K}\otimes \Pi_{f(k)}^B.
\]
Then observe that :
\begin{align*}
    \Tr\left[\Pi^{'KB}\sigma^{KB}\right] &= \sum\limits_{k\in \good_K}Q_K(k)\Tr\left[\Pi_{f(k)}\sigma_k^B\right] \\
    &= \sum\limits_{x\in \good_X}\sum\limits_{k : f(k)=x} Q_K(k)\Tr\left[\Pi_{f(k)}\sigma_k^B\right] \\
    &= \sum\limits_{x\in \good_X}Q_X(x)\Tr\left[\Pi_{x}\sigma_x^B\right]\\
    & \geq (1-\sqrt{3\eps})\cdot \Pr\limits_{Q_X}\left[\good_X\right] \\
    & \geq 1-\eps^{1/4},
\end{align*}
where the last inequality uses the fact that the probabilities of any set under the distributions $P_X$ and $Q_X$ can differ by at most $\eps$. Again, using Markov's inequality we infer that there exists a subset $\nice_K\subseteq \good_K$ such that $\Pr\limits_{Q_K}\left[\nice_K\right]\geq 1-\eps^{1/8}$, and for all $k\in \nice_K$ it holds that:
\[
\Tr\left[\Pi_{f(k)}\sigma_{k}^B\right] \geq 1-\eps^{1/8}.
\]
This implies that for all $k\in \nice_K$, $\Pi_k^B\coloneqq \Pi_{f(k)}^B$ is a candidate for optimising the expression $2^{H_H^{\eps^{1/8}}(\sigma_k^B)}$. This implies that, for all $k\in \nice_K$:
\begin{align*}
    2^{H_H^{\eps^{1/8}}(\sigma_k^B)} & \leq \Tr\left[\Pi_k^B\right]\\
    &= \Tr\left[\Pi_{x}^B\right]~~\textup{  where } x=f(k)\\
& \leq \frac{2^{H_H^{\eps}(B|X)}}{\eps}~~\textup{  since } x\in \good_X.
\end{align*}
Also note that since $Q_K$ and $\mathbf{Unif}[K]$ are close by $\delta$, it holds that $\Pr\limits_{\mathbf{Unif}[K]}\left[\nice_K\right]\geq 1-\eps^{1/8}-\delta$.  This concludes the proof.
\end{proof}

\subsection*{Proof of Lemma \ref{lem:mainderandhhtwice}}
\begin{proof}
It is not hard to see that the states
\[
\sum\limits_{k,\ell}Q_{KL}(k,\ell)\ketbra{k,\ell}^{KL}\otimes \sigma_{f(k,\ell)}^{RB} \tag{Bob and Ref}\label{eq:BobandRef}
\]
and
\[
\sum\limits_{k,\ell}Q_{KL}(k,\ell)\ketbra{k,\ell}^{KL}\otimes \sigma_{f(k,\ell)}^{B} \tag{Bob}\label{eq:Bob}
\]
\emph{both} satisfy the requirements of Lemma \ref{lem:condhhderand}. For the state in \ref{eq:BobandRef}, we think of the correspondence $B\gets RB$ and $K\gets KL$ with respect to the registers $KB$ in Lemma \ref{lem:condhhderand}. Similarly, for the state in \ref{eq:Bob} the correspondence is $B\gets B$ and $K\gets KL$. Then note that the closeness of $\sigma^{RB}_{f(k,\ell)}$ and $\rho^{RB}_{f(k,\ell)}$ (for $(k,\ell)\in \textup{supp}(Q_{KL})$, implied by Fact \ref{fact:statescloseQkl}) implies the closeness of the marginals $\sigma^{B}_{f(k,\ell)}$ and $\rho^{B}_{f(k,\ell)}$. Also, since $Q_X(x)\coloneqq \sum\limits_{k,\ell : f(k,\ell)=x} Q_{KL}(k,\ell)$, it holds via Fact \ref{fact:measurementCompression} that $\norm{P_X-Q_X}_1 \leq O(\eps)$. Additionally, it also holds that the distribution $Q_{KL}$ is close to the uniform distribution on $[KL]$ by $O(\eps^{1/2})$, as implied by Fact \ref{fact:closeQkl}.

A subtle issue is that the expression in Fact \ref{fact:measurementCompression} the distribution $Q_{L~|~k}$ is supported on $[L]\bigcup \brak{\bot}$ for all $k$. However, from \cite{ChakrabortyPadakandlaSen_22} we know that the mass on this element is at most $O(\eps)$ (for our choice of parameters) and can thus be removed  from the $1$-norm expression with a penalty of at most $O(\eps)$. This allows us to run the argument above for only $\ell\neq \bot$, and the valifity of the closeness of the state $\sigma_{f(k,\ell)}^{RB}$ and $\rho^{RB}_{f(k,\ell)}$ holds.

We will first instantiate parameters $(\eps, \delta)$ in Lemma \ref{lem:condhhderand}, then use Lemma \ref{lem:condhhderand} twice. To that end, set $\delta \gets O(\eps^{1/2})$ and $\eps\gets O(\eps)$. Then, we first apply Lemma \ref{lem:condhhderand} to the state in \ref{eq:BobandRef} to see that there exists a subset
\[
\mathcal{S}_1 \subseteq [K]\times [L]
\]
with the property that
\[
\abs{\mathcal{S}_1} \geq (1-O(\eps^{1/8}))\cdot KL
\]
and for all $(k,\ell)\in \mathcal{S}_1$, it holds that
\[
H_H^{O(\eps^{1/8})}(RB~|~k,\ell) \leq H_H^{O(\eps)}(RB~|~X)+O(\log\frac{1}{\eps}).
\]
Similarly, applying Lemma \ref{lem:condhhderand} to Equation \ref{eq:Bob}, we see that there exists a set $\mathcal{S}_2\subseteq [K]\times [L]$ such that
\[
\abs{\mathcal{S}_2} \geq (1-O(\eps^{1/8}))KL
\]
and for all $(k,\ell)\in \mathcal{S}_2$ it holds that
\[
H_H^{O(\eps^{1/8})}(B~|~k,\ell) \leq H_H^{O(\eps)}(B~|~X)+O(\log \frac{1}{\eps}).
\]
It holds then that for all $(k,\ell)$ in the set
\[
\mathcal{S}\coloneqq \mathcal{S}_1\bigcap \mathcal{S}_2,
\]
where
\[
\abs{\mathcal{S}}\geq (1-O(\eps^{1/8}))KL,
\]
it holds that
\begin{align*}
    &H_H^{O(\eps^{1/8})}(RB~|~k,\ell) \leq H_H^{O(\eps)}(RB~|~X)+O(\log\frac{1}{\eps})\\
    \intertext{and}
    &H_H^{O(\eps^{1/8})}(B~|~k,\ell) \leq H_H^{O(\eps)}(B~|~X)+O(\log \frac{1}{\eps}).
\end{align*}
This concludes the proof.
\end{proof}

\section{Proofs of Lemma \ref{lem:hhzeroforconditionalpure} and \ref{lem:hhswitchforconditionalpure}}\label{appendix:hhlemmas}

\subsection*{Proof of Lemma \ref{lem:hhzeroforconditionalpure}}

\begin{proof}
    Suppose $\Pi^{XB}_{\opt}$ is the optimising operator that for the quantity $H_H^{\eps}(B~|~X)$. Without loss of generality we can assume that $\Pi_{\opt}$ is of the following form:
    \[
    \Pi^{XB}_{\opt}=\sum\limits_{x}\ketbra{x}^X\otimes \Pi_x^B.
    \]
    Then by definition $\Pi_{\opt}$ satisfies the following optimisation problem:
    \begin{align*}
        \min\limits_{\brak{\Pi_x}_x~:~ 0\leq \Pi_x\leq \I} & \sum\limits_{x}P_X(x)\Tr\left[\Pi_x\right] \\
        & \sum\limits_{x}P_X(x)\Tr\left[\Pi_x\ketbra{v_x}\right]\geq 1-\eps.
    \end{align*}
    Let us define for every $x\in \mathcal{X}$ an operator:
    \[
    \Pi'_{x}\coloneqq \Tr\left[\Pi_x\ketbra{v_x}\right] \ketbra{v_x},
    \]
    and
    \[
\Pi^{' XB}=\coloneqq \sum\limits_{x}\ketbra{x}^X\otimes \Pi^{' B}_{x}.
    \]
    It is clear that $\Tr\left[\Pi^{' XB}\rho^{XB}\right]\geq 1-\eps$ and that 
    $\sum\limits_{x}P_X(x)\Tr\left[\Pi^{'XB}\right]\leq 1$. This implies that $\Pi^{' XB}$ is a candidate optimiser for $H_H^{\eps}(B~|~X)$ and thus:
    \[
    H_H^{\eps}(B~|~X)\leq 0.
    \]
    This concludes the proof.
\end{proof}

\subsection*{Proof of Lemma \ref{lem:hhswitchforconditionalpure}}

\begin{proof}
    We can assume without loss of generality that $\abs{A}\leq \abs{B}$ and that the optimising operator $\Pi_{\opt}^{XA}$ is of the form:
    \[
    \Pi_{\opt}^{XA}=\sum\limits_{x}\ketbra{x}^X\otimes \Pi_x^A.
    \]
    Let us fix $x\in \mathcal{X}$. Let the corresponding $\ket{v_x}^{AB}$ have the following Schmidt decomposition:
    \[
    \ket{v_x}^{AB}=\sum\limits_{i}\lambda_i \ket{a_i}^A\ket{b_i}^B.
    \]
    Let $V^{A\to B}_x$ be an isometry defined by $\ket{a_i}^A\to \ket{b_i}^B$ for all $i$. Then, consider the following:
    \begin{align*}
    &\Tr\left[\Pi_{\opt}^{XB}\rho^{XAB}\right]\\
        =&\sum\limits_{x}P_X(x)\Tr\left[\Pi_x^A\ketbra{v_x}^{AB}\right]\\
        = & \sum\limits_{x}P_X(x)\sum\limits_{i}\lambda_i^2\Tr\left[\Pi_x^A\ketbra{a_i}^A\right]
\\ = & \sum\limits_{x}P_X(x)\sum\limits_{i}\lambda_i^2\Tr\left[\Pi_x^AV^{\dagger }_x\ketbra{b_i}^BV_x\right]\\
= & \sum\limits_{x}P_X(x)\sum\limits_{i}\lambda_i^2\Tr\left[\left(V_x\Pi_x^AV^{\dagger }_x\right)^B\ketbra{b_i}^B\right] \\
= & \sum\limits_{x}P_X(x)\Tr\left[\left(V_x\Pi_x^AV^{\dagger }_x\right)^B\ketbra{v_x}^{AB}\right].
\end{align*}
Therefore, we can define an operator $\widetilde{\Pi}^{XB}=\sum\limits_{x}P_X(x)\ketbra{x}\otimes \widetilde{\Pi}_x^B$ where for each $x\in \mathcal{X}$, $\widetilde{\Pi}_x^B\coloneqq \left(V_x\Pi_x^A V_x^{\dagger}\right)^B$, such that:
\begin{align*}
    &\Tr\left[\widetilde{\Pi}^{XB}\rho^{XB}\rho^{XB}\right]\\
    =& \Tr\left[\Pi^{XA}_{\opt}\rho^{XA}\rho^{XA}\right]\\\
    \geq & 1-\eps.
\end{align*}
Therefore, $\widetilde{\Pi}^{XB}$ is a candidate optimiser for the quantity $H_H^{\eps}(B~|~X)$, which implies that:
\[
H_H^{\eps}(B~|~X)\leq H_H^{\eps}(A~|~X).
\]
A similar analysis shows that:
\[
H_H^{\eps}(A~|~X)\leq H_H^{\eps}(B~|~X).
\]
This concludes the proof.
\end{proof}

\section{Proofs of Lemma \ref{lem:newpovmstart} and Claim \ref{claim:goodk}}\label{appndx:derandlemmaandclaim}
\subsection{Proof of Lemma \ref{lem:newpovmstart}}
\begin{proof}
 From Lemma \ref{lem:mainderandhhtwice}, we know that the entropic inequalities in the statement of the lemma hold for at least $(1-O(\eps^{1/8}))$ fraction of all index pairs $(k,\ell)$. Define  $\one_{k,\ell}$ as the indicator that the entropic inequalities hold for the fixed index pair $(k,\ell)$. Then, 
 \[
 \sum\limits_{k,\ell}\frac{1}{KL}\one_{k,\ell} \geq (1-O(\eps^{1/8})).
 \]
 Define
 \[
 \texttt{prob}_{k}\coloneqq \sum_{k,\ell}\frac{1}{L} \one_{k,\ell}
 \]
 Then, it holds by Markov's inequality that for $(1-\eps^{1/16})$ fraction of $k$'s, 
 \[
 \textsc{prob}_k \geq 1-\eps^{1/16}.
 \]
We define the set $\mathcal{T}'$ to be that set of $k$'s where the above condition holds. Let $k\in \mathcal{T}'$. Then by the fact that $\texttt{prob}_{k}$ is an average of indicator functions, we can conclude that, for at least $1-\eps^{1/16}$ fraction of $\ell$'s in $[L]$, it holds that
 \[
 \one_{k,\ell}=1 .
 \]
 We define the set where the above condition holds to be $\nice_{L~|~k}$. This concludes the proof.
 \end{proof}

 \subsection{Proof of Claim \ref{claim:goodk}}

 \begin{proof}
     Recall that Corollary \ref{lem:newpovmstart} shows the existence of the set $\mathcal{T}'\subseteq [K]$ of size at least $(1-\eps^{1/16})K$. On the other hand, Lemma \ref{lem:derandK} shows that the set $\mathcal{T}$ is of size at least $(1-O(\eps^{1/32}))K$. This implies that $\abs{\mathcal{T}\bigcap \mathcal{T}'}> 0$. This concludes the proof.
 \end{proof}

\end{document}